\newcommand{\mf}[1]{\mathfrak{#1}}
\newcommand{\N}{\mathbb{N}}
\newcommand{\Z}{\mathbb{Z}}
\newcommand{\R}{\mathbb{R}}
\newcommand{\T}{\mathbb{T}}
\newcommand{\C}{\mathbb{C}}
\DeclareMathOperator*{\supp}{supp}
\DeclareMathOperator{\Diff} {Diff}
\newcommand{\SET}[1]{ \left\{ \,#1\,\right\}}
\newcommand{\olpi}{\overline{\pi\vphantom{*}}}
\newcommand{\dquotes}[1]{``#1''}
\newcommand{\circled}[1]{\accentset{\circ}{#1}}
\newcommand{\mrm}[1]{\mathrm{#1}}
\newcommand{\der}{\text{der}}
\newcommand{\mc}{\mathcal}
\newcommand{\g}{\mathfrak{g}}
\newcommand{\gl}{\mathfrak{gl}}
\newcommand{\h}{\mathfrak{h}}
\newcommand{\n}{\mathfrak{n}}
\newcommand{\mcD}{\mathcal{D}}
\newcommand{\X}{\mathcal{X}}
\newcommand{\U}{\mathrm{U}}
\newcommand{\PU}{\mathrm{PU}}
\newcommand{\pu}{\mathfrak{pu}}
\newcommand{\St}{\mathrm{St}}
\newcommand{\Ad}{\mathrm{Ad}}
\newcommand{\restr}[2]{\left. #1\right|_{#2}}
\DeclareMathOperator*{\st}{\; : \;}
\newcommand{\Hom}{\mathrm{Hom}}
\newcommand{\Aut}{\mathrm{Aut}}
\newcommand{\ct}{\mathrm{ct}}
\newcommand{\id}{\mathrm{id}}
\theoremstyle{plain}
\newtheorem{theorem}{Theorem}[section]
\newtheorem*{theoremA}{Theorem A}
\newtheorem*{theoremB}{Theorem B}
\theoremstyle{definition}
\newtheorem{definition}[theorem]{Definition}
\theoremstyle{plain}
\newtheorem{lemma}[theorem]{Lemma}
\newtheorem{proposition}[theorem]{Proposition}
\newtheorem{corollary}[theorem]{Corollary}
\theoremstyle{remark}
\newtheorem{remark}[theorem]{Remark}
\theoremstyle{definition}
\numberwithin{equation}{section}
\newcommand*{\fancyrefthmlabelprefix}{thm}
  \providecommand*{\frefthmname}{theorem}%
  \providecommand*{\Frefthmname}{Theorem}%
\newcommand*{\fancyreflemlabelprefix}{lem}
  \providecommand*{\freflemname}{lemma}%
  \providecommand*{\Freflemname}{Lemma}%
\newcommand*{\fancyrefproplabelprefix}{prop}
  \providecommand*{\frefpropname}{proposition}%
  \providecommand*{\Frefpropname}{Proposition}%
\newcommand*{\fancyrefcorlabelprefix}{cor}
  \providecommand*{\frefcorname}{corollary}%
  \providecommand*{\Frefcorname}{Corollary}%
\newcommand*{\fancyrefdeflabelprefix}{def}
  \providecommand*{\frefdefname}{definition}%
  \providecommand*{\Frefdefname}{Definition}%
\newcommand*{\fancyrefremlabelprefix}{rem}
	\providecommand*{\frefremname}{remark}%
	\providecommand*{\Frefremname}{Remark}%
\title{
Generalized Positive Energy Representations of the Group of Compactly Supported Diffeomorphisms}
\date{\today}
\author{Bas Janssens, Milan Niestijl}
\begin{document}
	\pagenumbering{roman}
	\maketitle
	
	\begin{abstract}
		\noindent
		Motivated by asymptotic symmetry groups in general relativity, 
		we consider projective unitary representations $\overline{\rho}$ of the Lie group $\Diff_c(M)$ of compactly supported diffeomorphisms of a smooth manifold $M$ that satisfy a so-called generalized positive energy condition. In particular, this captures representations that are in a suitable sense compatible with a KMS state on the von Neumann algebra generated by $\overline{\rho}$. We show that if $M$ is connected and $\dim(M) > 1$, then any such representation is necessarily trivial on the identity component $\Diff_c(M)_0$. As an intermediate step towards this result, we determine the continuous second Lie algebra cohomology $H^2_\ct(\X_c(M), \R)$ of the Lie algebra of compactly supported vector fields. This is subtly different from Gelfand--Fuks cohomology in view of the compact support condition.
	\end{abstract}
	
	\section{Introduction}
	\pagenumbering{arabic}
	
	The mathematical results in this paper are motivated by asymptotic symmetry groups in general relativity, as in the seminal work of and Bondi, van der Burg, 		
	Metzner and Sachs \cite{Bondi_Metzner_1962, Sachs1962PR, Sachs1962}. These groups can typically be described as follows. First, one carefully selects a set of 
	boundary conditions for the gravitational fields, usually in terms of fall-off conditions at null or spacelike infinity \cite{Penrose_conformal_notes_1964, 
	PenrosePRSL1965, AshtekarHansen1978}. This gives rise to a \emph{weak asymptotic symmetry group} $G \subseteq \Diff(M)$,
	whose action on the gravitational fields preserves the specified boundary conditions. 
	In the context of classical field theory, infinitesimal symmetries yield conserved currents by Noether's theorem \cite{BarnichBrandt2002}. If one selects a 
	normal subgroup $N \subseteq G$ for which the corresponding currents are trivial
	(the group of \emph{trivial gauge transformations}),
	then the quotient $G/N$ can be interpreted as an \emph{asymptotic symmetry group}.
	For asymptotically flat space--times, this is the Bondi--Metzner--Sachs group (or BMS group for short), which is  
	a semidirect product of the Lorentz group with an infinite dimensional abelian Lie group \cite{Ashtekar_1981, Schmeding_BMS}.
	However, in general the precise form of $G/N$ depends quite sensitively on the choice of boundary conditions \cite{BrownHenneaux1986}.\\

	\noindent 
	In the present paper, we take a complementary approach. For a given group $G \subseteq \Diff(M)$ of `weak asymptotic symmetries', we expect a putative quantum theory of gravity on a space--time manifold $M$ to come with a Hilbert space $\mc{H}$ of states, and with a projective unitary representation $(\overline{\rho}, \mc{H})$
	of $G$. We investigate representations of $G$ that are subject to a (generalized) positive energy condition, whose precise meaning will be given shortly. For $M$ connected with $\dim(M) > 1$, our main result implies that the common kernel $N$ of all such representations contains the identity component $\Diff_c(M)_0$ of the group of compactly supported diffeomorphisms, assuming of course that $G$ contains $\Diff_c(M)_0$ in the first place. We therefore arrive at the conclusion that the 
	connected group $(G/N)_0$ is localized at infinity from representation theory alone, without any reference to classical field theory.\\
	
	\noindent
	In future work, we intend to isolate groups $G$ that admit non-trivial (generalized) positive energy representations, and subsequently to classify these representations. For now, let us remark that the universal central extension of the BMS group in three dimensions exhibits coadjoint orbits with energy bounded from below \cite{BarnichOblak2014, BarnichOblak2015}, hinting at the existence of induced unitary representations that are of positive energy. We refer to \cite{McCarthy_BMS_I, McCarthy_BMS_II, McCarthy_BMS_III, McCarthy_BMS_proj, Piard_BMS_1977, Piard_mackey1977} for unitary representations of the BMS
	group in four dimensions, and to \cite{BasNeeb_PE_reps_I} for an approach to positive energy representations of gauge groups in the setting of Yang-Mills theory that is similar to the one in the present paper.\\
	
\subsubsection*{Generalized positive energy representations}	
	Let us describe in more detail the type of representations that we wish to consider in the present paper. Let $\upsilon$ be a complete vector field on $M$. Then the action of $\R$ on $\Diff_c(M)$ by conjugation with the flow 
	$\Phi^{\upsilon}_t$	of $\upsilon$ gives rise to the semidirect product $\Diff_c(M) \rtimes_\upsilon \R$, which is a locally convex Lie group by \cite{Glockner_semidirect_prod_Diffc}. If $G$ is a group of diffeomorphisms that contains  both $\Diff_c(M)$ and $\{\Phi^{\upsilon}_t\,;\, t\in \R\}$, then 
	every projective unitary representation of $G$ pulls back to a projective unitary representation of $\Diff_c(M) \rtimes_{\upsilon}\R$.
	In order to show that the kernel of the $G$-representation contains $\Diff_{c}(M)_0$, we can therefore restrict attention to the group
	$\Diff_c(M) \rtimes_{\upsilon}\R$, and we will do so from now on.\\
	
	\noindent We consider projective unitary representations $\overline{\rho} \colon G \rightarrow \PU(\mc{H})$ of $G = \Diff_c(M) \rtimes_{\upsilon}\R$
	that are \emph{smooth}, in the sense that they possess a dense set $\mc{H}^{\infty} \subseteq \mc{H}$
	of smooth vectors. Such a representation is said to be of \emph{positive energy} at $\upsilon$ if 
	the strongly continuous one-parameter group $[U_t] := \overline{\rho}(\id_{M}, t)$ of projective unitary operators has a generator 
	$H := -i\restr{\frac{d}{dt}}{t=0} U_t$ (defined up to an additive constant) whose spectrum is bounded from below.\\
	
	\noindent 
	If $\upsilon$ admits an interpretation as a future timelike vector field, then $H$ is the corresponding Hamilton operator, and 
	the positive energy condition is quite natural from a physical perspective. Positive energy representations of possibly infinite-dimensional Lie groups have 
	consequently been the subject of a great deal of research \cite{Segal_unreps_of_some_inf_dim_gps, Segal_Loop_Groups, Wassermann_fusion_PE_reps, Toledano_PE_reps_non_simply_ctd, neeb_semibounded_inv_cones, Tanimoto_ground_state_reps, Neeb_semibounded_hilbert_loop, Neeb_bdd_semibdd_reps, 
	neeb_russo_ground_state_top, BasNeeb_PE_reps_I}. For $\Diff(S^1)$, they were considered in e.g.\
	\cite{Segal_unreps_of_some_inf_dim_gps, Segal_Loop_Groups,
	GoodmannWallach_projrepr_diff, Neeb_Salmasian_Virasoro_pe}. \\
	 	 
	\noindent
	In order to describe systems at positive temperature, the positive energy condition can be replaced by the \emph{KMS condition}.
	KMS states (for Kubo--Martin--Schwinger) on operator algebras were introduced by Hugenholtz, Haag and Winnink \cite{HaagHugenholtzWinnink1967} 
	in order to describe quantum statistical systems at positive temperature.
	In the context of projective unitary representations, 
	we say that $\overline{\rho}$ is \textit{KMS at $\upsilon$ relative to $\Diff_c(M)$} if there exists a normal state $\phi$ on the von Neumann algebra $\mc{N} := \rho(\Diff_c(M))^{\prime \prime}$ that satisfies the KMS condition w.r.t.\ the automorphism group 
	$\sigma : \R \to \Aut(\mc{N}), \; {t \mapsto \Ad(\overline{\rho}(\id_M, t))}$. If for such a state $\phi$, the canonical cyclic vector $\Omega_\phi$ in the corresponding GNS Hilbert space $\mc{H}_\phi$ defines a smooth ray for the associated projective unitary representation ${\overline{\rho}_\phi : \Diff_c(M) \to \PU(\mc{H}_\phi)}$, we say that $\overline{\rho}$ is \textit{smoothly-KMS}  (cf.\ \Fref{def: kms_rep} and \cite[Lem.\ 5.8]{Milan_reps_jets}).\\
	
	\noindent
	The notion of a KMS state on a von Neumann algebra is closely related to Tomita--Takesaki modular theory \cite[Ch.\ VIII]{Takesaki_II}.
	It plays an important role in the operator algebraic formulation of quantum statistical 
	physics \cite{HaagHugenholtzWinnink1967}, \cite[Ch.~5.3]{bratelli_robinson_2}, and in algebraic quantum field theory 
	\cite{Borchers_mod_thermal_states, Borchers_QFT_Tomita, BuchholzSummers_geom_mod_action}, 
	motivating our desire to consider representations that are suitably compatible with a KMS state.\\
	
	\noindent
	In order to handle positive energy representations and KMS representations at the same time, 
	we study \emph{generalized positive energy representations}, a notion that is sufficiently flexible to capture 
	positive energy representations as well as a large class of KMS representations \cite{Milan_reps_jets, MilanPhDThesis}.
	We say that $(\overline{\rho}, \mc{H})$ is of \textit{generalized positive energy} if
	there exists a dense subspace $\mc{D} \subseteq \mc{H}^{\infty}$ of smooth vectors such that for every $\psi \in \mc{D}$,
	 the expected energy
	$$ \mu : \mrm{P}(\mc{H}^\infty) \to \R, \qquad \mu([\psi]) := \frac{1}{\|\psi\|^2} \langle \psi, H \psi \rangle  $$
	is bounded below on the $\Diff_c(M)_0$-orbit $\mc{O}_{[\psi]} \subseteq \mathrm{P}(\mc{H})$	(cf.\ \Fref{def: qpe}).\\ 

\noindent
	It is important to mention that the (generalized) positive energy condition is invariant under the adjoint action of $\Diff_c(M)$ on $\X_c(M) \rtimes_\upsilon \R$, 
	in the sense that $\overline{\rho}$ is of (generalized) positive energy at $\upsilon$ if and only if it is of (generalized) positive energy at 
	$\Ad_f(\upsilon) := {T(f) \circ \upsilon \circ f^{-1}}$ for all $f \in \Diff_c(M)$.
	More generally, if $\overline{\rho}$ extends to a projective representation of a 
	Lie group $G$ that contains both $\Diff_{c}(M)$ and the flow of $\upsilon$,
	then the choice of $\upsilon$ is only significant up to the adjoint action of $G$.
	We expect to find interesting (generalized) positive energy representations only when the adjoint orbit of $G$ through $\upsilon$ generates a convex cone that is pointed. This is the case, for example, in the context of the BMS group for a suitable choice of $\upsilon$.

	\subsubsection*{The main result and its consequences}
	\noindent
	The following is our main result.
	
	\begin{theoremA}
		Suppose that $M$ is connected and that $\dim(M) > 1$. Let $\upsilon \in \X(M)\setminus \{0\}$ be a complete vector field on $M$. Let $\overline{\rho} : \Diff_c(M) \rtimes_\upsilon \R \to \PU(\mc{H})$ be a smooth projective unitary representation that is of generalized positive energy at $\upsilon$. Then $\Diff_c(M)_0 \subseteq \ker \overline{\rho}$.
	\end{theoremA}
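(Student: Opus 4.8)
The plan is to differentiate $\overline{\rho}$, normalize the resulting central extension using the computation of $H^2_\ct(\X_c(M),\R)$, and then use the flow of $\upsilon$ together with positivity of the energy to force the infinitesimal representation of $\X_c(M)$ into the scalars. Concretely: smoothness of $\overline{\rho}$ lets us differentiate it to a projective unitary representation of $\g := \X_c(M)\rtimes_\upsilon\R$ on the dense space $\mc{D}$ of smooth vectors, equivalently a Lie algebra homomorphism $\rho\colon\widehat{\g}\to\mf{u}(\mc{H})$ into operators essentially skew-adjoint on $\mc{D}$, where $\widehat{\g}=\g\oplus_\omega\R$ for a continuous $2$-cocycle $\omega$ and $\rho(\partial_t)=iH$ with $H$ the self-adjoint energy; on $\mc{D}$ one has $[H,\rho(X)]=-i\rho([\upsilon,X])+\omega(\partial_t,X)$ for $X\in\X_c(M)$. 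The generalized positive energy hypothesis says that for each $\psi\in\mc{D}$ the map $g\mapsto\mu(\overline{\rho}(g)[\psi])$ is bounded below on $\Diff_c(M)_0$; restricting it to one-parameter subgroups $t\mapsto\exp(tX)$ and using that a bounded-below smooth function on $\R$ admits arbitrarily flat near-minima, one extracts along each orbit ``approximate ground states'' on which the self-adjoint operators $-i\rho([\upsilon,X])+\omega(\partial_t,X)$ have expectation tending to $0$. This, suitably upgraded, is the only way positive energy will enter.

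\textbf{Step 2: a local normal form for the cocycle.} The restriction $\omega_0:=\omega|_{\X_c(M)\times\X_c(M)}$ is a continuous cocycle on $\X_c(M)$, and the cocycle identity for $\g$ recovers $\omega$ from $\omega_0$ together with the functional $\omega(\partial_t,\cdot)$. Invoking the determination of $H^1_\ct(\X_c(M),\R)$ and $H^2_\ct(\X_c(M),\R)$ for $M$ connected with $\dim(M)>1$ --- in particular perfectness of $\X_c(M)$ and the fact that every cohomology class is represented by a cocycle vanishing on disjointly supported pairs --- I would modify $\rho$ by a coboundary so that $\omega_0(X,Y)=0$ whenever $\supp(X)\cap\supp(Y)=\emptyset$. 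With this normalization $[\rho(X),\rho(Y)]=0$ for all such $X,Y$: this ``locality from cohomology'' is precisely the ingredient that lets the second cohomology computation feed into the proof of Theorem A.

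\textbf{Step 3: the dynamical core, and conclusion.} By bilinearity it suffices to prove $[\rho(X),\rho(Z)]=0$ for $X,Z\in\X_c(M)$ of small support. Choose $p\in M$ with $\upsilon(p)\neq 0$ and a flow box $U\cong\R\times V$ in which $\upsilon=\partial_s$, with $\dim(V)=\dim(M)-1\ge 1$ (this is where $\dim(M)>1$ is used). Using connectedness of $M$, find $f\in\Diff_c(M)_0$ carrying $\supp(X)\cup\supp(Z)$ into $U$; conjugating by the unitary lift of $\overline{\rho}(f)$ we may then assume $X,Z$ are supported in $U$. The flow translates $X$ to $X_s:=\Ad(\Phi^\upsilon_s)X$, supported in the $s$-translate of $\supp(X)$, with $\rho(X_s)=e^{isH}\rho(X)e^{-isH}$ up to an imaginary scalar; for $|s|$ large, $\supp(X_s)\cap\supp(Z)=\emptyset$, so by Step 2 the commutator $[e^{isH}\rho(X)e^{-isH},\rho(Z)]$ vanishes for all large $|s|$. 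A Borchers/Reeh--Schlieder-type argument should then propagate this to all $s$: passing to a representation carrying a genuine ground-state-like vector built from the approximate ground states of Step 1, positivity of the energy makes the relevant vacuum matrix coefficients of $s\mapsto e^{isH}\rho(X)e^{-isH}$ boundary values of functions holomorphic in a half-plane, and such a function vanishing on a half-line vanishes identically; taking $s=0$ gives $[\rho(X),\rho(Z)]=0$. Hence $\rho(\X_c(M))$ is commutative, so $\rho([\X_c(M),\X_c(M)])\subseteq\C\cdot\id$ by the cocycle identity, and perfectness of $\X_c(M)$ gives $\rho(\X_c(M))\subseteq\C\cdot\id$, i.e.\ $d\overline{\rho}$ is central. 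Then $\overline{\rho}(\exp(tX))=1$ in $\PU(\mc{H})$ for all $X\in\X_c(M)$ and $t\in\R$; since $\Diff_c(M)_0$ is generated by such flows (equivalently, $\ker\overline{\rho}\cap\Diff_c(M)_0$ is then a nontrivial normal subgroup of the simple group $\Diff_c(M)_0$), we conclude $\Diff_c(M)_0\subseteq\ker\overline{\rho}$.

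\textbf{Where the difficulty lies.} The serious part is Step 3. The hypothesis is weak --- merely boundedness below of the expected energy along each $\Diff_c(M)_0$-orbit, with neither a genuine ground state nor any a priori locality of $\overline{\rho}$ --- so the Borchers-type analyticity argument must be run with the approximate ground states of Step 1 and survive a careful limiting procedure, uniformly enough to still force the commutators to vanish; this is the main obstacle. A second, structural point is that ``locality'' is not given but has to be manufactured from the interplay of the $\upsilon$-flow with the transverse directions of a flow box (whence $\dim(M)>1$) together with the cohomological normalization of Step 2 --- and guaranteeing that the central term of $\widehat{\g}$ does not obstruct the disjoint-support commutation relations is exactly what makes the computation of $H^2_\ct(\X_c(M),\R)$ a necessary waypoint rather than a mere curiosity.
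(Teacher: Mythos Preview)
Your Step~3 contains a genuine gap that you yourself flag but do not close. Borchers/Reeh--Schlieder analyticity rests on the spectrum of $H$ being bounded below, so that $s\mapsto e^{isH}$ extends holomorphically to a half-plane; the generalized positive energy hypothesis asserts only that the \emph{expectation} $\langle\psi,H\psi\rangle$ is bounded below along each $\Diff_c(M)_0$-orbit, which yields no analyticity at all. Your ``approximate ground states'' do not repair this: they produce sequences on which certain first-derivative expectations tend to zero, but give no holomorphic control on the one-parameter group. Even under genuine positive energy the argument would be delicate --- the operators $\rho(X)$ are unbounded and there is no cyclic vacuum --- so the standard Reeh--Schlieder machinery does not apply out of the box. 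A secondary issue: your flow-box separation ``for $|s|$ large, $\supp(X_s)\cap\supp(Z)=\emptyset$'' presumes an infinite flow tube $\R\times V$ for $\upsilon$, which a general complete vector field need not admit near a given point.

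The paper's route bypasses analyticity entirely. The GPE hypothesis is converted, via a Cauchy--Schwarz-type estimate (quoted as \cite[Prop.~4.4]{Milan_reps_jets}), into the purely algebraic implication
\[
[[\upsilon,\eta],\eta]=0 \quad\Longrightarrow\quad \overline{\pi}([\upsilon,\eta])=0,
\]
valid as soon as $[\omega]=0$ in $H^2_\ct(\X_c(M),\R)$ --- which is exactly Theorem~B. In a flow box $I\times U_0$ with $\upsilon=\partial_t$, one takes $\eta(t,x)=f(t)w(x)$ with $f\in C^\infty_c(I)$ and $w\in\X_c(U_0)$; then $[\upsilon,\eta]=f'(t)w(x)$ and $[[\upsilon,\eta],\eta]=ff'[w,w]=0$ automatically, so $\overline{\pi}(f'w)=0$. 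Choosing $f,w$ with $f'(t_0)\neq 0$ and $w(x_0)\neq 0$ (here $\dim M>1$ enters, so that $U_0$ is positive-dimensional), the ideal generated by $f'w$ contains $\X_c(V)$ for a neighbourhood $V$ of $(t_0,x_0)$, and a partition-of-unity argument gives $\X_c(U)\subseteq\ker\overline{\pi}$ for the whole flow box. Finally, rather than conjugating test fields into a single flow box as you do, the paper conjugates $\upsilon$: GPE at $\upsilon$ implies GPE at $\Ad_f(\upsilon)$ for every $f\in\Diff_c(M)$, and transitivity of $\Diff_c(M)_0$ on the connected $M$ covers $M$ by flow boxes of GPE vector fields, giving $\X_c(M)\subseteq\ker\overline{d\rho}$ and hence $\Diff_c(M)_0\subseteq\ker\overline{\rho}$. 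So the role of Theorem~B is not to manufacture locality for a Borchers argument, but to feed directly into the algebraic implication above.
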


	\noindent
	This has at least three noteworthy consequences. First of all, it follows immediately that $\Diff_c(M)_0$ is in the kernel of every \emph{positive energy representation}. 
	Secondly, $\Diff_c(M)_0$ is in the kernel of every \emph{smoothly-KMS representation} whose image generates a factor in $\mathrm{B}(\mc{H})$  (Corollary~\ref{cor: KMS_reps_trivial}).
	And finally,  $\Diff_c(M)_0$ is in the kernel of every projective unitary representation $(\overline{\rho},\mc{H})$ of $\Diff_{c}(M)$ that is \emph{bounded}, 
	i.e.\ continuous in the norm topology (Corollary~\ref{cor: bdd_reps_trivial}).
	
	\subsubsection*{Compactly supported Lie algebra cohomology}

	\noindent
	From a technical point of view, a key step towards Theorem~A is determining the continuous second Lie 
	algebra cohomology $H^2_\ct(\X_c(M), \R)$ of the Lie algebra $\X_c(M)$ of compactly supported vector fields, equipped with its natural locally convex LF-topology.
	Indeed, any smooth projective unitary representation $(\overline{\rho},\mc{H})$ of $\Diff_c(M) \rtimes_\upsilon \R$ gives rise to a canonical class in 
	$H^2_\ct(\X_c(M), \R)$ that controls the corresponding central extension at the infinitesimal level \cite{BasNeeb_ProjReps}. 
	If $\overline{\rho}$ is of generalized positive energy, then this cohomology class carries substantial information about the 
	kernel of $\overline{\rho}$ (\Fref{prop: generalized_pe_cauchy_schwarz}).
	Our main result in this regard is the following theorem:
	
	\newpage
	\begin{theoremB}%\label{thm: la_cohom}
		Let $M$ be a smooth manifold.
		\begin{enumerate}
			\item If $\dim(M) > 1$, then $H^2_\ct(\X_c(M), \R) = \{0\}$.
			\item If $\dim(M) = 1$, then $H^2_\ct(\X_c(M),\R) \cong H^0_{\mrm{dR}}(M)$ is the de Rham cohomology of $M$ in degree 0.
		\end{enumerate}
		\end{theoremB}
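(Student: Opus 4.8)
I would reduce to the local models $M = \R^n$ and $M = S^1$, dispatch these, and then globalize via a patching argument.

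\textbf{Reductions.} Writing $M$ as the disjoint union of its connected components $M_i$, we obtain a direct sum decomposition $\X_c(M) = \bigoplus_i \X_c(M_i)$ of topological Lie algebras, in which every compact subset of $M$ meets only finitely many summands. Each $\X_c(M_i)$ is perfect (by fragmentation this reduces to the classical fact that $\X_c(\R^n)$ is perfect), so $H^1_\ct(\X_c(M_i),\R) = 0$; feeding $X \in \X_c(M_i)$, $Z \in \X_c(M_i)$ and $Y \in \X_c(M_j)$ with $i\neq j$ into the cocycle identity gives $\omega([Z,X],Y) = 0$, whence $\omega(X,Y) = 0$ by perfectness. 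Thus any continuous $2$-cocycle splits over the components and $H^2_\ct(\X_c(M),\R) \cong \prod_i H^2_\ct(\X_c(M_i),\R)$. A connected $1$-manifold is diffeomorphic to $\R$ or $S^1$; since $H^2_\ct(\X_c(S^1),\R) = \R$ is the classical Virasoro computation, since $H^2_\ct(\X_c(\R),\R)$ will be settled by the local computation below, and since $H^0_{\mrm{dR}}(M) = \R^{\pi_0(M)}$, part (ii) follows once the local computation is in place. It therefore remains, for (i), to show $H^2_\ct(\X_c(M),\R) = 0$ for $M$ connected with $\dim M > 1$.

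\textbf{The local computation.} The technical core is the assertion that $H^2_\ct(\X_c(\R^n),\R) = 0$ for $n > 1$ and $= \R\langle\omega_{\mrm{GF}}\rangle$ for $n = 1$, where $\omega_{\mrm{GF}}(f\partial,g\partial) := \int_\R (f'g'' - f''g')\,dx$. I would prove this in two steps. \emph{Locality:} every continuous $2$-cocycle on $\X_c(\R^n)$ is cohomologous to one whose Schwartz kernel is supported on the diagonal of $\R^n\times\R^n$, hence of the form $(X,Y)\mapsto\int B(X,Y)$ for a bidifferential operator $B$ with values in densities. This uses the Schwartz kernel theorem, the transitivity of the $\X_c(\R^n)$-action on jets (prescribed finite jets are realised by vector fields with prescribed supports), and the cocycle identity applied to triples with disjoint or nested supports; a key device is the cut-off Euler field $\chi\sum_i x_i\partial_i$, which satisfies $[\chi\sum_i x_i\partial_i, X] = (d-1)X$ when $X$ is homogeneous of degree $d$ near the locus $\{\chi\equiv 1\}$, so that pairing a far-away field against such an $X$ must vanish unless $d = 1$ (the residual linear fields $x_j\partial_i$ and the field $x_i^2\partial_i$ being treated separately). \emph{Classification:} a local cocycle is affine-equivariant, so averaging over $\mathrm{SO}(n)$ makes it rotation invariant; antisymmetry, the cocycle identity and $\mathrm{GL}_n$-invariance then confine $B$ to a finite-dimensional space of differential expressions, and an explicit calculation shows every element is a coboundary for $n > 1$, whereas for $n = 1$ the space modulo coboundaries is spanned by $\omega_{\mrm{GF}}$. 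That $\omega_{\mrm{GF}}$ is not a coboundary on $\X_c(\R)$ follows because $\omega_{\mrm{GF}} = d\lambda$ would force $\lambda$ to be translation invariant (by perfectness) and homogeneous of degree $1$, which is impossible.

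\textbf{Globalization.} Let $M$ be connected with $\dim M > 1$ and $\omega$ a continuous $2$-cocycle. Choose a cover $\{U_\alpha\}$ by charts diffeomorphic to $\R^n$ with a subordinate partition of unity. By the local computation $\omega|_{\X_c(U_\alpha)} = d\lambda_\alpha$ for continuous functionals $\lambda_\alpha$ on $\X_c(U_\alpha)$; on an overlap $d(\lambda_\alpha - \lambda_\beta) = 0$, so $\lambda_\alpha - \lambda_\beta$ is a continuous homomorphism $\X_c(U_\alpha\cap U_\beta)\to\R$, hence zero since $\X_c(U_\alpha\cap U_\beta)$ is perfect. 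The partition of unity then assembles the $\lambda_\alpha$ into a single continuous functional $\lambda$ on $\X_c(M) = \sum_\alpha\X_c(U_\alpha)$ (well-definedness checked by refining partitions of unity). After subtracting $d\lambda$ we may assume $\omega$ vanishes on each $\X_c(U_\alpha)$; a second application of the locality argument, now over $M$, shows $\omega$ is represented by a diagonal-supported kernel whose restriction over each $U_\alpha$ vanishes, so $\omega = 0$. This proves (i), and (ii) follows from the reductions.

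\textbf{Main obstacle.} I expect the crux to be the locality step of the local computation: turning the cocycle identity, jet transitivity and continuity into triviality of the off-diagonal part of the Schwartz kernel. This is precisely where one departs from classical Gelfand--Fuks theory, whose analogous input is the spectral sequence of the diagonal filtration, not directly available for the LF-topology under the compact-support constraint — the source of the discrepancy flagged in the introduction. A secondary, bookkeeping-heavy difficulty is the globalization: because $\X_c(M)$ is only the \emph{sum}, not the colimit, of the $\X_c(U_\alpha)$, the gluing of the $\lambda_\alpha$ and the closing locality argument must be carried out explicitly with partitions of unity.
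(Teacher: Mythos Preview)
Your overall architecture matches the paper's: a local computation on $\R^n$, then patching. The globalization is essentially identical---the paper dresses it as a \v{C}ech double-complex diagram chase, but the content is exactly your partition-of-unity gluing, using perfectness of $\X_c(U_\alpha\cap U_\beta)$ to kill the \v{C}ech $1$-cocycle of local primitives and diagonality to conclude that a cocycle vanishing on every $\X_c(U_\alpha)$ vanishes globally.

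You have, however, inverted the difficulty in the local step. The ``locality'' you flag as the main obstacle is in fact a three-line consequence of perfectness: if $v,w\in\X_c(M)$ have disjoint support, write $v=\sum_i[v_1^i,v_2^i]$ with $\supp v_j^i\cap\supp w=\emptyset$ and apply the cocycle identity to get $\psi(v,w)=0$. So every continuous $2$-cocycle is \emph{already} diagonal---not merely cohomologous to a diagonal one---and Peetre's theorem then yields the differential-operator form with no Schwartz-kernel analysis, jet transitivity, or cut-off Euler fields. The genuine work is the \emph{classification} of diagonal cocycles on $\R^n$, and here your sketch has a gap: averaging over $\mathrm{SO}(n)$ destroys any $\mathrm{GL}_n$-equivariance of the individual cocycle, so the subsequent appeal to ``$\mathrm{GL}_n$-invariance'' is unclear as stated, and more seriously you give no mechanism to bound the order of the bidifferential operator $B$, without which the space of candidates is infinite-dimensional and no ``explicit calculation'' can close the argument. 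The paper's key move---which your cut-off Euler field gestures toward but does not reach---is to use diagonality to extend the associated $1$-cocycle to a map $\hat{\psi}\colon\X(\R^n)\to\X_c(\R^n)'$ defined on \emph{all} vector fields, and then to run an induction over the degree filtration $W_n^{\leq k}$ of polynomial vector fields using the genuine Euler field $E=x^\mu\partial_\mu$ together with the identity $E\cdot\phi=(n{+}1)\phi$ for translation-invariant $\phi\in\X_c(\R^n)'$. Representation theory enters exactly once, at $k=n{+}1$, via $\Hom_{\mathfrak{sl}(n,\R)}\bigl(S^{n+2}(\R^n)^*,(\R^n)^*\otimes(\R^n)^*\bigr)=0$; this vanishing fails for $n=1$, which is precisely where the Virasoro class survives.
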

	%\end{restatable*}
	
	\noindent
	The continuous Lie algebra cohomology $H^n_\ct(\X_c(M), \R)$ is closely related to the \emph{Gelfand--Fuks cohomology} $H^n_\ct(\X(M), \R)$, 
	the continuous Lie algebra cohomology of the Fr\'echet--Lie algebra $\X(M)$ of all vector fields. 
	However, these two notions do \emph{not} in general coincide \cite{Shnider_cohom_cpt_vfields}, and many of the tools that are used in Gelfand--Fuks
	 cohomology (such as Bott's homotopy operators \cite{Bott_notes_GF_cohom}) break down in the compactly supported case.\\
	
	\noindent
	The study of Gelfand--Fuks cohomology was initiated by Gelfand and Fuks
	\cite{GelfandFuks_cohom_circle, GelfandFuks_cohomTangential_I, GelfandFuks_cohomTangential_II, GelfandFuks_formal_cohom},
	and carried further by Bott, G.\ Segal, Haefliger and many others 
	\cite{Guillemin_cohom_vfields, Haefliger_cohom_vfields, BottSegal_cohom_vfields, Tsujishita_book, Losik_diag_cohom_vfields}, 
	see also  \cite{Bott_notes_GF_cohom, Fuks_book} and the recent exposition \cite{Lukas_GF_cohom}.
	For compact manifolds $M$ with $\dim(M) > 1$, it is well known that 
	the second Gelfand--Fuks cohomology $H^2_\ct(\X(M), \R)$ vanishes (\cite[Thm.\ 4.13 and Cor.\ 4.25]{Lukas_GF_cohom}), 
	covering part 1 of Theorem~B in the compact case.\\
	
	\noindent 
	Nevertheless, Theorem~B appears to be new if $M$ is a non-compact manifold, which is of course the case of primary interest in the context of 
	asymptotic symmetry groups.
	The proof is very much inspired by \cite{BasCornelia_ce_ham}, and by joint work in progress \cite{BasCorneliaLeo_CEdiv} with Cornelia Vizman and Leonid Ryvkin on the second 
	Lie algebra cohomology of the Lie algebra of exact volume preserving vector fields. We are grateful for their kind permission to use ideas from this 
	unpublished work in the current setting.\\
	
	\subsubsection*{Outline and further references}
	
	\noindent
	The paper is organized as follows. In \Fref{sec: preliminaries} we recall various preliminary definitions and observations.
	In \Fref{sec: gen_pe_reps_vfields} we proceed with the proof of Theorem~A and its consequences, 
	subject to the assumption that $H^2_\ct(\X_c(M), \R) = 0$ if $\dim(M) > 1$.
	Finally, in \Fref{sec: second_cts_LA_cohom}, we justify this assumption by 
	determining the continuous second Lie algebra cohomology $H^2_\ct(\X_c(M), \R)$ for arbitrary manifolds $M$, 
	culminating in Theorem~B. \\
	
	\noindent
	KMS representations of infinite dimensional Lie groups were studied by Str\v{a}til\v{a} and Voiculescu for $\U(\infty)$ \cite{StratilaVoiculescu1978}, see \cite{Boyer1980, NeebRusso2024} for related results 
	in the context of Hilbert--Lie groups.
	For finite dimensional Lie groups, 
	KMS representations that generate a factor of type~$\mrm{I}$ were fully classified in \cite{Tobias_typeI_factor_reps}.
	%For some further related literature, we mention the paper \cite{Tobias_typeI_factor_reps}, which fully characterizes KMS representations of finite-dimensional Lie groups that generate a factor of type~$\mrm{I}$. 
	Projective unitary KMS representations were constructed for the loop group $C^\infty(S^1, \U(N))$ in \cite{Carey_Hannabuss_KMS_1987, Buchholz_current_KMS}, and for certain Sobolev maps from $\R$ to $\U(N)$ in \cite{CareyHannabuss_KMS_1992}. 
%	\cite{StratilaVoiculescu1978, Boyer1980,NeebRusso2024}
	Various other examples of KMS representations are given in \cite[Sec.\ 5.2.2]{Milan_reps_jets}.

	\subsubsection*{Acknowledgments}
	This research is supported by the NWO grant 639.032.734 \dquotes{Cohomology and representation theory of infinite-dimensional Lie groups}. The authors are also grateful to Lukas Miaskiwskyi, Cornelia Vizman and Leonid Ryvkin for illuminating discussions on the topic, and to Cornelia Vizman and Leonid Ryvkin 
	for their permission to use ideas from our unpublished work \cite{BasCorneliaLeo_CEdiv}.
	We would like to thank the anonymous referees for suggestions that have substantially improved the structure of the paper.
	\section{Preliminaries}\label{sec: preliminaries}
	
	We briefly discuss projective unitary representations for locally convex Lie groups, and recall 
	some properties of generalized positive energy representations that we will need in \Fref{sec: gen_pe_reps_vfields}.
		
	\subsection{Projective unitary representations}
	
	\noindent
	Let $G$ be a locally convex Lie group, in the sense of Bastiani \cite{bastiani, milnor_inf_lie, neeb_towards_lie}, with Lie algebra~$\g$.\\
	
	\noindent
	Let $\mcD$ be a complex pre-Hilbert space with Hilbert space completion $\mc{H}$. We denote by $\mc{L}(\mcD)$ the set of linear operators $\mcD\to \mcD$. Define the algebra
	$$ \mc{L}^\dagger(\mcD) := \SET{X \in \mc{L}(\mcD) \st \exists X^\dagger \in \mc{L}(\mcD) \st \forall \psi, \eta \in \mcD \st \langle X^\dagger \psi, \eta\rangle = \langle \psi, X\eta\rangle}.$$
	It carries a natural involution $T \mapsto T^\dagger$. Let $I$ denote the identity on $\mcD$ and define the Lie algebra
	\[\mf{u}(\mcD) := \SET{X \in \mc{L}^\dagger(\mcD) \st X^\dagger + X = 0}.\]
	Define also the Lie algebra $\pu(\mcD) := \mf{u}(\mcD) / i \R I$. 
	
	\begin{definition}We define smooth projective unitary representations as follows:
		\begin{itemize}
			\item A unitary representation $(\rho, \mc{H})$ of $G$ is \emph{continuous} if $g \mapsto \rho(g)\psi$ is continuous for 
			every $\psi \in \mc{H}$. Similarly,  a projective unitary representation $(\overline{\rho}, \mc{H})$ is continuous 
			if the orbit map $g \mapsto \overline{\rho}(g)[\psi]$ is continuous for every $\psi \in \mc{H} \setminus\{0\}$.
			\item If $(\rho, \mc{H})$ is a unitary representation of $G$, then a vector $\psi \in \mc{H}$ is called \textit{smooth} if the orbit map $G \to \mc{H}, g \mapsto \rho(g)\psi$ is smooth. We denote by $\mc{H}^\infty$ the set of smooth vectors in $\mc{H}$, and we call $\rho$ smooth if $\mc{H}^\infty$ is dense in $\mc{H}$. 
			\item Similarly, if $(\overline{\rho}, \mc{H})$ is a projective unitary representation of $G$, then a ray $[\psi] \in \mrm{P}(\mc{H})$ is said to be \textit{smooth} if the orbit map $G \to \mrm{P}(\mc{H}), g \mapsto \overline{\rho}(g)[\psi]$ is smooth. We denote by $\mrm{P}(\mc{H})^\infty$ the set of smooth rays, and we say that $\overline{\rho}$ is smooth if $\mrm{P}(\mc{H})^\infty$ is dense in $\mrm{P}(\mc{H})$.
			\item A unitary representation of a locally convex Lie algebra $\g$ on $\mcD$ is a homomorphism $\pi : \g \to \mf{u}(\mcD)$ of Lie algebras. It is called continuous if the map $\xi \mapsto \pi(\xi)\psi$ is continuous for every $\psi \in \mc{D}$. A projective unitary representation of $\g$ on $\mc{D}$ is a Lie algebra homomorphism $\olpi : \g \to \pu(\mcD)$.
		\end{itemize}
	\end{definition}
	
	\begin{remark}
		A smooth unitary representation $(\rho, \mc{H})$ of $G$ defines a unitary $\g$-representation $d\rho : \g \to \mf{u}(\mc{H}^\infty)$ on $\mc{H}^\infty$ by $d\rho(\xi)\psi := \restr{\frac{d}{dt}}{t=0}\rho(\gamma_t)\psi$, where $\gamma : \R \to G$ is a smooth curve satisfying $T_0(\gamma) = \xi$. If $G$ is finite-dimensional, then $\mc{H}^\infty$ is dense in $\mc{H}$ for any continuous unitary representation $\rho$ of $G$, by a result of G\aa rding \cite{Garding_domain1947} (cf.\ \cite[Prop.\ 4.4.1.1]{Warner_book_1}). The analogous statement is generally false for infinite-dimensional Lie groups \cite{Beltita_Neeb_nonsmooth_repr}.
	\end{remark}
	
	\begin{definition} A \emph{central extension} of $G$ by the circle group $\T$ is an exact sequence 
	\begin{equation}\label{eq:Centralegroepsuitbreiding}
		1 \rightarrow \T \rightarrow \circled{G} \rightarrow G \rightarrow 1
	\end{equation}
	of groups for which the image of $\T$ in $\circled{G}$ is central. It is a central extension of \emph{topological groups} if 
	$\circled{G}$ is a topological group, and 
	the group homomorphisms in \eqref{eq:Centralegroepsuitbreiding} are continuous. It is a central extension of \emph{locally convex Lie groups} if 
	$\circled{G}$ is a locally convex Lie group,
	the group homomorphisms in \eqref{eq:Centralegroepsuitbreiding} are 
	smooth, and $\circled{G} \rightarrow G$ is a locally trivial smooth principal $\T$-bundle. An \emph{isomorphism} $\circled{G} \rightarrow \circled{G}'$ 
	of central extensions is an isomorphism of groups (topological groups, Lie groups) that induces the identity on $G$ and~$\T$.
	\end{definition}
	
	\begin{definition}
	A \emph{central extension of Lie algebras} is an exact sequence
	\begin{equation}\label{eq:CentraleAlgebraUitbreiding}
		0 \rightarrow \R \rightarrow \circled{\g} \rightarrow \g \rightarrow 0
	\end{equation}
	of Lie algebras for which the image of $\R$ in $\circled{\g}$ is central. It is a \emph{continuous} central extension
	if $\circled{\g}$ is a locally convex Lie algebra, and the Lie algebra homomorphisms in \eqref{eq:CentraleAlgebraUitbreiding}
	are continuous. %By the Hahn-Banach theorem, \eqref{eq:CentraleAlgebraUitbreiding} is linearly split as a sequence of locally convex vector spaces.
	\end{definition}

	\noindent
	Any central $\T$-extension of locally convex Lie groups determines a corresponding continuous central $\R$-extension of Lie algebras.\\
	
	\noindent
	If $\overline{\rho} : G \to \PU(\mc{H})$ is a continuous projective unitary representation of $G$, 
	then the pullback
	\begin{equation}\label{eq: circledG}
		\circled{G} := \big\{(g, U) \in G \times \U(\mc{H}) \, ; \, \overline{\rho}(g) = [U]\big\}	
	\end{equation}
	is a central extension $\circled{G} \rightarrow G$ of topological groups, and $\overline{\rho}$ lifts to a continuous unitary representation 
	$\rho \colon {\circled{G} \rightarrow \U(\mc{H})}$ that satisfies $\rho(z) = zI$ for all $z \in \T$. We call $\rho$ the \textit{lift} of $\overline{\rho}$.\\ 

	\noindent If, moreover, the projective unitary representation $\overline{\rho}$ of $G$ is smooth, then $\circled{G}$ is a locally convex Lie group,  
	and $\circled{G} \rightarrow G$ is a central extension of locally convex Lie groups
	\cite[Thm.\ 4.3]{BasNeeb_ProjReps}. The lift $\rho$ is smooth \cite[Cor.\ 4.5]{BasNeeb_ProjReps}, and $\mrm{P}(\mc{H})^\infty = \mrm{P}(\mc{H}^\infty)$ \cite[Thm.\ 4.3]{BasNeeb_ProjReps}. Suppose that $(\overline{\rho}_1, \mc{H}_1)$ and $(\overline{\rho}_2, \mc{H}_2)$ are two smooth projective unitary representations with lifts $\rho_1 : \circled{G}_1 \to \U(\mc{H}_{1})$ and $\rho_2 : \circled{G}_2 \to \U(\mc{H}_{2})$, respectively. Then $\overline{\rho}_1$ and $\overline{\rho}_2$ are unitarily equivalent if and only if there exists an isomorphism 
	$\Phi : {\circled{G}_1 \to \circled{G}_2}$ of central extensions and a unitary $U : \mc{H}_{1} \to \mc{H}_{2}$ such that $\rho_2(\Phi(x)) = U\rho_1(x)U^{-1}$ for all $x \in \circled{G}_1$ \cite[Thm.\ 7.3]{BasNeeb_ProjReps}.\\
	
	\noindent Analogously, any projective unitary $\g$-representation $\olpi$ on $\mcD$ can be lifted to a unitary representation $\pi : \circled{\g} \to \mf{u}(\mcD)$ of some central $\R$-extension $\circled{\g}$ of $\g$, by considering the pull-back of $\mf{u}(\mcD) \to \pu(\mcD)$ along $\olpi$.

	\begin{definition}\label{def: proj_reps_la_cts}
		We call a projective unitary representation $\olpi : \g \to \pu(\mcD)$ \textit{continuous} if its 
		lift $\pi : \circled{\g} \to \mf{u}(\mcD)$ is continuous.
	\end{definition}
	
	\subsection{Cohomology of Lie algebras and Lie groups}	
	
	Smooth projective unitary representations of $G$ give rise to central $\T$-extensions of locally convex Lie groups, and these in turn determine continuous central $\R$-extensions of the Lie algebra $\g$. The latter can be described in terms of continuous Lie algebra cohomology.

	\begin{definition}[Lie algebra cohomology]\label{def: cts_LA_cohom} 
	Let $E$ be a module over a Lie algebra $\g$.
		\begin{itemize}
		\item The Lie algebra cohomology $H^\bullet(\g, E)$ of $\g$ with values in $E$ is the cohomology of the complex 
			$C^\bullet(\g, E)$, where $C^q(\g,E)$ consists of alternating 
			multilinear maps $\g^q \rightarrow E$ for $q\geq 0$, and it is zero for $q<0$.
			The differential $d_\g : C^\bullet(\g, E) \to C^{\bullet + 1}(\g,E)$
			is given by	
			\begin{align}\label{eq: differential_la_coh}
				\begin{split}
					d_\g\omega(\xi_0, \ldots, \xi_q) := &\sum_{j=0}^q (-1)^j \xi_j\cdot \omega(\xi_0, \ldots, \widehat{\xi_j}, \ldots, \xi_q) \\
					+ &\sum_{0 \leq i < j \leq q}(-1)^{i+j}\omega([\xi_i, \xi_j], \xi_0, \ldots, \widehat{\xi_i}, \ldots, \widehat{\xi_j}, \ldots, \xi_q ).
				\end{split}
			\end{align}
			As usual, the arguments in \eqref{eq: differential_la_coh} with a caret are to be omitted. Unless mentioned otherwise, the vector space $\R$ is considered as a trivial $\g$-module.
		\item If $\g$ is a locally convex Lie algebra and $E$ a topological $\g$-module, then the
		continuous Lie algebra cohomology $H_{\ct}^\bullet(\g, E)$ is the cohomology of the subcomplex 
		$C_{\ct}^\bullet(\g, E)$ of continuous alternating multilinear maps.		
	\end{itemize}
	\end{definition}
	
	\noindent The continuous central extensions of $\g$ by $\R$ are classified up to isomorphism by $H_{\ct}^2(\g, \R)$, the continuous second Lie algebra cohomology with trivial coefficients \cite[Prop.\ 6.3]{BasNeeb_ProjReps}. 
	In order to study smooth projective unitary representations of $G = \Diff_{c}(M)$, it is sensible to determine
	$H^2_{\ct}(\g,\R)$ for the Lie algebra $\g = \X_{c}(M)$. This is done in Section~\ref{sec: second_cts_LA_cohom}.
	\\

	\noindent
	The interpretation of the second Lie algebra cohomology in terms of central extensions is already implicit in the work of 
	Schur \cite{Schur1904, Schur1907}, and its use in the projective representation theory of Lie groups
	was pioneered by E.~Wigner \cite{EugeneWigner1939} and V.~Bargmann \cite{Bargmann1954}, see \cite{TuynmanWiegerinck1987}
	for an exposition and further references.\\
	
	\noindent
	To some extent, Lie algebra cohomology functions as an infinitesimal counterpart of Lie group cohomology, 
	with their relation typically given by a van Est-type spectral sequence \cite{WimVanEst1958}.
	Although group cohomology for discrete groups admits a good description in terms of
	commutative algebra \cite{Brown1982}, the appropriate cohomology theory in the context of Lie groups requires 
	a bit more care. There are in fact many different flavours of group cohomology for Lie groups, 
	grounded either in \v{C}ech cohomology or in explicit cocycle models \cite{Segal70, DavidWigner1973, Brylinski00}.
	With the notable exception of 
	bounded cohomology \cite{Monod2001}, they mostly agree on the domain for which they are intended, 
	see \cite{WagemannWockel2015} for an overview and comparison, as well as for further references.

	\subsection{Generalized positive energy (GPE) representations}
	
	\noindent
	In the following, $G$ denotes a locally convex Lie group which is regular in the sense of Milnor \cite[Def.\ 7.6]{milnor_inf_lie} (cf.\ \cite[Def.\ II.5.2]{neeb_towards_lie}). We denote by $\g$ the Lie algebra of $G$.\\

	\noindent
	This paper is concerned with projective unitary representations $\overline{\rho}$ of $G$ that satisfy a so-called generalized positive energy condition. This class of representations was introduced in \cite[Sec.\ 4]{Milan_reps_jets}. It includes representations that satisfy a positive energy condition, and also representations that are in a suitable sense compatible with a KMS state on the von Neumann algebra generated by $\overline{\rho}(G)$.\\
	
	\noindent
	We first introduce the precise definitions, and then review the restrictions that the cohomology class $[\omega] \in H^2_\ct(\g, \R)$ associated to such a representation imposes on its kernel.
	These restrictions play a crucial role in the proof of Theorem~A.
	
	\begin{definition}\label{def: qpe}
		Let $\mc{D}$ be a complex pre-Hilbert space with Hilbert space completion $\mc{H}$. Let $\h$ be a locally convex topological Lie algebra.
		\begin{itemize}
			\item Let $\pi : \h \to \mf{u}(\mcD)$ be a continuous unitary representation of $\h$ on $\mcD$. We say that $\pi$ is of \textit{positive energy} at $\xi \in \h$ if 
			$$ \inf_{\psi \in \mcD}\langle \psi, -i\pi(\xi)\psi\rangle \geq 0.$$
			We say that $\pi : \h \to \mf{u}(\mcD)$ is of \textit{generalized positive energy} (GPE) at $\xi \in \h$ if there exists a $1$-connected regular Lie group $H$ with Lie algebra $\h$ and a dense linear subspace $\mcD_\xi \subseteq \mcD$ such that
			\begin{equation}\label{eq: qpe_la}
				\forall \psi \in \mcD_\xi \st \inf_{h \in H}\langle \psi, -i\pi(\Ad_h(\xi)) \psi\rangle > - \infty.
			\end{equation}
			\item Let $\olpi$ be a continuous projective unitary representation of $\h$
			with lift $\pi \colon {\circled{\h} \to \mf{u}(\mcD)}$. We say that $\olpi$ is of (generalized) positive energy at $\xi \in \h$ if $\pi$ is so at some $\circled{\xi} \in \circled{\h}$ covering $\xi$.
			\item A smooth unitary representation $(\rho, \mc{H})$ of $G$ is of (generalized) positive energy at $\xi \in \g$ if the derived representation $d\rho$ of $\g$ on $\mc{H}^\infty$ is so.
			\item Let $\overline{\rho}$ be a smooth projective unitary representation of $G$ on $\mc{H}$ with lift $\rho \colon {\circled{G} \to \U(\mc{H})}$. We say that $\overline{\rho}$ is of (generalized) positive energy at $\xi \in \g$ if there exists an element $\circled{\xi} \in \circled{\g}$ covering $\xi$ such that $\rho$ is of (generalized) positive energy at $\circled{\xi}$.\\
		\end{itemize}
	\end{definition}
	
	\begin{remark}\label{rem:RegularCase}
	Suppose that $\pi$ is a continuous unitary representation of $\g$ on $\mc{D}$ that is of GPE at some element in $\g$. Then the group $H$ in \eqref{eq: qpe_la} is the simply connected cover $\widetilde{G}_0$ of the identity component of $G$, because two regular $1$-connected Lie groups are isomorphic if their Lie algebras are so \cite[Cor.\ 8.2]{milnor_inf_lie}, and $\widetilde{G}_0$ is regular whenever $G$ is so \cite[Thm.\ V.1.8]{neeb_towards_lie}.\\
	\end{remark}

	\begin{remark}\label{rem: pe_implies_gpe}
		If a unitary representation $(\rho, \mc{H})$ of $G$ is of positive energy at $\xi \in \g$, then it is also of 
		generalized positive energy 
		at $\xi$. 
		Indeed, since 
		\[\langle \psi, d\rho(\Ad_g^{-1}(\xi))\psi\rangle = \langle \rho(g)\psi, d\rho(\xi)\rho(g)\psi\rangle \qquad \text{ for all } g \in G, \; \xi \in \g \text{ and }\psi \in \mc{H}^\infty,\]
		\Fref{rem:RegularCase} implies that $\rho$ is of generalized positive energy at $\xi \in \g$ if and only if 
		\begin{equation}\label{eq:VergelijkingVoorGPEKegel}
		I(\xi, \psi) := \inf_{g_0 \in G_0} \langle \rho(g_0)\psi, -id\rho(\xi) \rho(g_0)\psi\rangle > - \infty
		\end{equation}
		for all $\psi$ in some linear subspace $\mcD_\xi \subseteq \mc{H}^\infty$ that is dense in $\mc{H}$. If $\rho$ is of positive energy at $\xi$, then the left hand side of \eqref{eq:VergelijkingVoorGPEKegel} is nonnegative for any $\psi \in \mc{H}^\infty$, since $\mc{H}^{\infty}$ is $G$-invariant.
	\end{remark}
	
	\begin{remark}\label{rem: gpe_and_convex_cones}
		Let $(\rho, \mc{H})$ be a smooth unitary representation of $G$.
		\begin{itemize}
			\item
			The \emph{generalized positive energy cone}
			\[\mc{C}(\rho) := \SET{\xi \in \g \st \rho \text{ is of GPE at } \xi } \]
			is $\Ad_G$-invariant for the (not necessarily connected) Lie group $G$. Indeed, if $\xi \in \mc{C}(\rho)$ and $g\in G$, then 
			\eqref{eq:VergelijkingVoorGPEKegel} for $\xi$ and $\psi \in \mc{D}_{\xi}$ implies the corresponding inequality for 
			$\xi' = \Ad_g(\xi)$ and $\psi' \in \mc{D}_{\xi'}$ with $\mc{D}_{\xi'} := \rho(g)\mc{D}_\xi$. 
			\item 
			If $\xi \in \mc{C}(\rho)$, then $\mc{C}(\rho)$ 
			also contains the $\Ad_{G_0}$-invariant convex cone generated by $\xi$. To see this, suppose 
			that \eqref{eq:VergelijkingVoorGPEKegel} is satisfied for $\psi \in \mc{D}_\xi \subseteq \mc{H}^\infty$. 
			Let $n \in \N$, $g_k \in G_0$ and $c_k \geq 0$ for $k \in \{1, \ldots, n\}$, and define $C := \sum_{k=1}^n c_k$. Then for 
			$\xi' := \sum_{k=1}^n c_k \Ad_{g_k}(\xi)$ we have that $I(\xi', \psi) \geq  C\cdot I(\xi, \psi)$, so that $\xi'\in \mc{C}(\rho)$. 
			\item
			If, moreover, $\rho$ is of positive energy at $\xi\in \g$, then it is of positive energy at every element of the 
			$\Ad_G$-invariant closed convex cone $\mc{C}_\xi\subseteq \g$ generated by $\xi$. 
			It follows that $\ker(d\rho)$ contains the $\Ad_G$-invariant closed 
			ideal $\mc{C}_\xi \cap -\mc{C}_\xi$ of $\g$. If $\mc{C}_\xi \cap - \mc{C}_\xi = \g$, 
			then \cite[Prop.\ 3.4]{BasNeeb_ProjReps} implies that $G_0 \subseteq \ker \rho$. 
			\item
			In particular, if $\g$ admits no non-zero proper $\Ad_G$-invariant closed ideals, then $G_0$ can only act non-trivially in a smooth unitary representation of $G$ that is of positive energy at $\xi$ if the cone $\mc{C}_\xi$ is pointed. 
			This is the situation for $G = \Diff_c(M)$ if $M$ is connected, as we shall see in \Fref{cor: no_stable_ideals}. \\
		\end{itemize}	
	\end{remark}

	\begin{remark}
		A smooth projective unitary $G$-representation $\overline{\rho}$ with lift $\rho$ is of GPE at $\xi \in \g$ if and only if for some (and hence any) $\circled{\xi} \in \circled{\g}$ covering $\xi$, the function
		$$ \mu : \mrm{P}(\mc{H}^\infty) \to \R, \; \mu([\psi]) := \frac{1}{\|\psi\|^2} \langle \psi, -id\rho(\circled{\xi}) \psi \rangle $$
		is bounded below on the $G_0$-orbit $\mc{O}_{[\psi]} \subseteq \mrm{P}(\mc{H}^\infty)$ for all $\psi$ in some dense linear subspace $\mcD_\xi \subseteq \mc{H}^\infty$.\\
	\end{remark}
	
	\noindent
	The following observation plays a crucial role in \Fref{sec: gen_pe_reps_vfields}. It shows that the cohomology class in $H^2_{\mathrm{ct}}(\g,\R)$ associated 
	to a projective GPE-representation carries substantial information about the kernel.

	\begin{proposition}[{\cite[Prop.\ 4.4]{Milan_reps_jets}}]\label{prop: generalized_pe_cauchy_schwarz}
		Let $\overline{\pi} : \g \to \pu(\mcD)$ be a projective unitary representation of $\g$ that is of generalized positive energy at $\xi \in \g$. Let $\omega \in C_{\ct}^2(\g, \R)$ be a 2-cocycle whose class $[\omega]$ in  $H^2_\ct(\g, \R)$ corresponds to the central 
		extension $\circled{\g} \rightarrow \g$. Suppose that $\eta \in \g$ satisfies $[[\xi, \eta], \eta] = 0$. Then $\omega([\xi, \eta], \eta) \geq 0$ and
		$$ \omega([\xi, \eta], \eta) = 0 \iff \olpi([\xi, \eta]) = 0. $$
		In particular, if $[\omega] = 0$ in $H^2_\ct(\g, \R)$, then 
		$$ [[\xi, \eta], \eta] = 0 \implies \olpi([\xi, \eta]) = 0. $$
	\end{proposition}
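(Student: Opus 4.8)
The plan is to transfer everything to the lift $\pi\colon\circled{\g}\to\mf{u}(\mcD)$ of $\overline{\pi}$ and to feed the generalized positive energy inequality \eqref{eq: qpe_la} the one-parameter subgroup generated by a lift of $\eta$. Fix a lift $\circled{\xi}\in\circled{\g}$ of $\xi$ at which $\pi$ is of GPE, together with the associated $1$-connected regular Lie group $H$ with Lie algebra $\circled{\g}$ (so that $H$ carries an exponential map $\exp_H$, being regular) and the dense subspace $\mcD_{\circled{\xi}}\subseteq\mcD$ witnessing \eqref{eq: qpe_la}. Since $[\omega]$ corresponds to the extension $\circled{\g}\to\g$, I may as well take $\circled{\g}=\g\oplus\R$ with bracket $[(\zeta_1,s_1),(\zeta_2,s_2)]=([\zeta_1,\zeta_2],\omega(\zeta_1,\zeta_2))$, normalized so that $-i\pi(0,1)=I$, and pick any lift $\circled{\eta}=(\eta,\ast)$ of $\eta$. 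The point of the hypothesis $[[\xi,\eta],\eta]=0$ is to make
\[
[[\circled{\xi},\circled{\eta}],\circled{\eta}]=\big(0,\ \omega([\xi,\eta],\eta)\big)
\]
central in $\circled{\g}$, so that $\mathrm{ad}_{\circled{\eta}}^{3}(\circled{\xi})=0$. Hence the curve $t\mapsto\Ad_{\exp_H(t\circled{\eta})}(\circled{\xi})$, which solves the linear ODE $\dot c=\mathrm{ad}_{\circled{\eta}}\circ c$ with $c(0)=\circled{\xi}$ and nilpotent coefficient on a finite-dimensional invariant subspace, equals the terminating polynomial $\circled{\xi}+t[\circled{\eta},\circled{\xi}]+\tfrac{t^{2}}{2}[[\circled{\xi},\circled{\eta}],\circled{\eta}]$.

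Next I would substitute $h=\exp_H(t\circled{\eta})$ into \eqref{eq: qpe_la}. Using $-i\pi(0,\omega([\xi,\eta],\eta))=\omega([\xi,\eta],\eta)\,I$ and that $-i\pi(\zeta)$ is symmetric on $\mcD$ for every $\zeta\in\circled{\g}$, this yields, for each $\psi\in\mcD_{\circled{\xi}}$, that the real quadratic
\[
q_{\psi}(t)=\langle\psi,-i\pi(\circled{\xi})\psi\rangle+\langle\psi,-i\pi([\circled{\eta},\circled{\xi}])\psi\rangle\,t+\tfrac{1}{2}\,\omega([\xi,\eta],\eta)\,\|\psi\|^{2}\,t^{2}
\]
is bounded below on $\R$. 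Since a quadratic polynomial in $t$ is bounded below only if its leading coefficient is nonnegative, choosing $\psi\neq 0$ immediately gives $\omega([\xi,\eta],\eta)\geq 0$. (As a sanity check on well-posedness, $\omega([\xi,\eta],\eta)$ does not depend on the cocycle representative, as a coboundary $d_{\g}\alpha$ takes the value $-\alpha([[\xi,\eta],\eta])=0$ here.)

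For the equivalence I would argue as follows. If $\overline{\pi}([\xi,\eta])=0$, then $\pi([\circled{\xi},\circled{\eta}])\in i\R I$ is central in $\pi(\circled{\g})$, so $0=[\pi([\circled{\xi},\circled{\eta}]),\pi(\circled{\eta})]=\pi([[\circled{\xi},\circled{\eta}],\circled{\eta}])=i\,\omega([\xi,\eta],\eta)\,I$, forcing $\omega([\xi,\eta],\eta)=0$. Conversely, if $\omega([\xi,\eta],\eta)=0$, then each $q_{\psi}$ is affine, and an affine function is bounded below on $\R$ only if it is constant, so $\langle\psi,-i\pi([\circled{\eta},\circled{\xi}])\psi\rangle=0$ for every $\psi\in\mcD_{\circled{\xi}}$. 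Thus the symmetric operator $A:=-i\pi([\circled{\eta},\circled{\xi}])$ has vanishing quadratic form on the dense subspace $\mcD_{\circled{\xi}}$; a polarization argument then gives $\langle\psi,A\phi\rangle=0$ for all $\psi,\phi\in\mcD_{\circled{\xi}}$, hence $A\phi=0$ for $\phi\in\mcD_{\circled{\xi}}$ by density, and finally $\langle A\chi,\psi\rangle=\langle\chi,A\psi\rangle=0$ for all $\chi\in\mcD$ and $\psi\in\mcD_{\circled{\xi}}$, so $A=0$ on $\mcD$. Therefore $\pi([\circled{\xi},\circled{\eta}])=0$, in particular $\overline{\pi}([\xi,\eta])=0$. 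The concluding statement is then immediate: if $[\omega]=0$, write $\omega=d_{\g}\alpha$ with $\alpha$ a continuous linear functional on $\g$; then $\omega([\xi,\eta],\eta)=-\alpha([[\xi,\eta],\eta])=0$ whenever $[[\xi,\eta],\eta]=0$, and the equivalence gives $\overline{\pi}([\xi,\eta])=0$.

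The main obstacle I anticipate lies not in the algebra but in two infinite-dimensional technicalities: checking carefully that $\exp_H(t\circled{\eta})$ is defined for all $t\in\R$ (regularity of $H$, cf.\ \Fref{rem:RegularCase}) and that $\Ad_{\exp_H(t\circled{\eta})}(\circled{\xi})$ really is the terminating $\mathrm{ad}$-series --- here the local nilpotence forced by $[[\xi,\eta],\eta]=0$ is precisely what makes the matrix-exponential identity unproblematic --- together with the polarization-plus-density step promoting ``the quadratic form of $A$ vanishes on $\mcD_{\circled{\xi}}$'' to ``$A$ vanishes on $\mcD$''. Everything else is the elementary observation that $q_{\psi}$ is a quadratic polynomial in $t$ with leading coefficient $\tfrac{1}{2}\omega([\xi,\eta],\eta)\|\psi\|^{2}$, so the positive energy bound pins down the sign of $\omega([\xi,\eta],\eta)$ and, in the degenerate case, kills the commutator $\overline{\pi}([\xi,\eta])$.
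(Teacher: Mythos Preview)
The paper does not prove this proposition; it merely cites it from \cite[Prop.\ 4.4]{Milan_reps_jets}. Your argument is correct and is essentially the standard one: lift to $\circled{\g}$, use $[[\xi,\eta],\eta]=0$ to make $\mathrm{ad}_{\circled{\eta}}$ nilpotent on the relevant finite-dimensional subspace so that $t\mapsto\langle\psi,-i\pi(\Ad_{\exp(t\circled{\eta})}\circled{\xi})\psi\rangle$ is a quadratic polynomial with leading coefficient $\tfrac12\omega([\xi,\eta],\eta)\|\psi\|^{2}$, and then read off the sign and the degenerate case from the boundedness-below hypothesis. The two technicalities you flag are handled exactly as you indicate: regularity of $H$ supplies $\exp_H$ and the ODE $\dot c=\mathrm{ad}_{\circled{\eta}}c$ for $\Ad_{\exp_H(t\circled{\eta})}$, while nilpotency confines the dynamics to a finite-dimensional invariant subspace where the terminating series is automatic; and the polarization-plus-density step is valid because $-i\pi([\circled{\eta},\circled{\xi}])$ is symmetric on $\mcD$ and $\mcD_{\circled{\xi}}$ is dense in $\mc{H}$.
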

	
	\noindent
	Before defining the notion of a KMS-representation, we briefly recall the definition of a KMS state on a von Neumann algebra (cf.\ \cite[Ch.\ 5.3]{bratelli_robinson_2}, \cite[Ch.\ VIII]{Takesaki_II}). Let $\St := \SET{ z \in \C \st 0 < \mrm{Im}(z) < 1}$.

	\begin{definition}\label{def: mod_condition_kms}
		Let $\phi$ be a normal state on a von Neumann algebra $\mc{N}$, and let $\sigma \colon {\R \to \Aut(\mc{N})}$ be a one-parameter group.
		\begin{itemize}
			\item We say that $\phi$ satisfies the \textit{modular condition} for $\sigma$ if:
			\begin{enumerate}
				\item $\phi = \phi \circ \sigma_t$ for every $t \in \R$. 
				\item For every pair $x,y \in \mc{N}$, there is a bounded continuous function $F_{x,y} : \overline{\St} \to \C$, holomorphic on $\St$, so that for every $t \in \R$ we have:
				\begin{align*}
					F_{x,y}(t) &= \phi(\sigma_t(x)y), \\
					F_{x,y}(t+i) &= \phi(y\sigma_t(x)). 
				\end{align*}
			\end{enumerate}
			\item We say that $\phi$ is \textit{$\sigma$-KMS} if it satisfies the modular condition for the automorphism group $t \mapsto\sigma_{-t}$.
		\end{itemize}
	\end{definition}
	
	\begin{definition}\label{def: kms_rep}
	Let $\xi \in \g$, and let $N \subseteq G$ be an embedded Lie subgroup such that $e^{t\xi}Ne^{-t\xi} \subseteq N$ for all $t \in \R$. 
	\begin{enumerate}
			\item A continuous unitary representation $(\rho, \mc{H})$ of $G$ is \emph{KMS at $\xi$ relative to $N$} if the von Neumann 
			algebra $\mc{N} := \rho(N)^{\prime \prime}$ admits a normal state $\phi$ that is $\sigma$-KMS for $\sigma_t(x) := \rho(e^{t\xi})x\rho(e^{-t\xi})$. It is \emph{smoothly KMS} if additionally $n \mapsto \phi(\rho(n))$ is a smooth function $N \rightarrow \C$. 
			\item Let $\overline{\rho} \colon G \to \PU(\mc{H})$ be a smooth projective unitary representation of $G$, with lift
			$\rho \colon {\circled{G} \to \U(\mc{H})}$. Let $\circled{\g}$ be the Lie algebra of $\circled{G}$ and let $\circled{N} \subseteq \circled{G}$ be the Lie subgroup covering $N$. We say that $\overline{\rho}$ is smoothly-KMS at $\xi \in \g$ relative to $N$ if there exists $\circled{\xi}\in \circled{\g}$ covering $\xi$ such that $\rho$ is smoothly-KMS at $\circled{\xi}$ relative to $\circled{N}$.
		\end{enumerate}
	\end{definition}
	
	\noindent
	Various examples of KMS-representation are considered in \cite[Sec.\ 5.2.2]{Milan_reps_jets}.\\

	\noindent
	Suppose that $N \subseteq G$ and $\xi \in \g$ are as in \Fref{def: kms_rep}. Let $\n$ be the Lie algebra of $N$, $\rho$ a unitary representation of $G$, and let $\mc{N} := \rho(N)^{\prime \prime}$ be the von Neumann algebra generated by $\rho(N)$. Define $\alpha : \R \to \Aut(N)$ by $\alpha_t(n) = e^{t\xi}ne^{-t\xi}$ and $D \in \der(\n)$ by $D\eta := [\xi, \eta]$. Suppose that the normal state $\phi$ is KMS w.r.t.\ the automorphism group $\sigma : \R\to \Aut(\mc{N}), \; \sigma_t(x) := \rho(e^{t \xi})x\rho(e^{-t\xi})$. The GNS-construction (for Gelfand--Naimark--Segal) provides a $\ast$-representation $\pi_\phi$ of the von Neumann algebra $\mc{N}$ on the GNS-Hilbert space $\mc{H}_\phi$. We therefore obtain a unitary representation $\rho_\phi := \pi_\phi \circ \rho$ of $N$ on $\mc{H}_\phi$. Letting $\Delta_\phi$ denote the modular operator associated to $\phi$ (cf.\ \cite[Ch.\ 2.5]{bratelli_robinson_1}), the representation $\rho_\phi$ extends to $N \rtimes_\alpha \R$ by setting $\rho_\phi(n,t) := \rho_\phi(n)\Delta_\phi^{-it}$. This representation is smooth if $n \mapsto \phi(\rho(n))$ is a smooth map $N \to \C$ \cite[Lem.\ 5.10]{Milan_reps_jets}. \\
	
	\noindent
	The following relates KMS-representations to the generalized positive energy condition:
	
	\begin{theorem}[{\cite[Thm.\ 5.13]{Milan_reps_jets}}]\label{thm: kms_of_qpe}
		Let $\rho$ be a unitary representation of $G$ that is smoothly-KMS at $\xi \in \g$ relative to $N \subseteq G$,
		and let $\phi \colon \mc{N} \rightarrow \C$ be as in Definition~\ref{def: kms_rep}. Then the associated unitary representation $\rho_\phi$ of $N \rtimes_{\alpha} \R$ on the GNS-Hilbert space $\mc{H}_\phi$ is smooth and of generalized positive energy at $(0,1) \in \n \rtimes_D \R$.
	\end{theorem}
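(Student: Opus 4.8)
The plan is to dispose of the smoothness assertion by recalling known facts and then to verify the generalized positive energy condition by a short argument from modular theory. Recall first that $\rho_\phi(n,t):=\rho_\phi(n)\Delta_\phi^{-it}$ does define a representation of $N\rtimes_\alpha\R$: by the Tomita--Takesaki theorem the KMS property of $\phi$ identifies conjugation by $\Delta_\phi^{it}$ with the automorphism group of \Fref{def: kms_rep}, so that $\Delta_\phi^{it}\rho_\phi(n)\Delta_\phi^{-it}=\rho_\phi(\alpha_{-t}(n))$, whence the homomorphism property is immediate. Smoothness of $\rho_\phi$ is exactly \cite[Lem.\ 5.10]{Milan_reps_jets}, whose hypothesis that $n\mapsto\phi(\rho(n))$ be smooth is built into the assumption that $\rho$ is \emph{smoothly}-KMS; the same hypothesis guarantees that the GNS vector $\Omega_\phi$ is a smooth vector (\cite[Lem.\ 5.8]{Milan_reps_jets}).

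For the generalized positive energy statement, put $\mathbf{H}:=-i\,d\rho_\phi(0,1)$; since $\rho_\phi(0,t)=\Delta_\phi^{-it}$ this is the self-adjoint operator $-\log\Delta_\phi$. By \Fref{rem:RegularCase} the $1$-connected group appearing in \Fref{def: qpe} may be taken to be the simply connected cover of $(N\rtimes_\alpha\R)_0=N_0\rtimes_\alpha\R$, and since $d\rho_\phi(\Ad_{g_0}\eta)=\rho_\phi(g_0)\,d\rho_\phi(\eta)\,\rho_\phi(g_0)^{-1}$ on smooth vectors, it suffices to produce a dense linear subspace $\mc{D}\subseteq\mc{H}_\phi^\infty$ such that $\inf_{g_0\in(N\rtimes_\alpha\R)_0}\langle\rho_\phi(g_0)\psi,\mathbf{H}\rho_\phi(g_0)\psi\rangle>-\infty$ for every $\psi\in\mc{D}$. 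I would take $\mc{D}:=\mathrm{span}\{\rho_\phi(n)\Omega_\phi:n\in N\}$: this lies in $\mc{H}_\phi^\infty$ because $\Omega_\phi$ is a smooth vector and smooth vectors form a group-invariant subspace, and it is dense because $\mathrm{span}\,\rho(N)$ is a unital $\ast$-subalgebra of $\mc{N}=\rho(N)''$, hence $\sigma$-strongly dense, so that by Kaplansky's density theorem $\mathrm{span}\{\rho_\phi(n)\Omega_\phi\}=(\mathrm{span}\,\rho(N))\Omega_\phi$ is dense in $\mc{N}\Omega_\phi=\mc{H}_\phi$.

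The estimate rests on a single inequality. Write $\Omega:=\Omega_\phi$, which is cyclic and separating for $\mc{N}$ since $\phi$ is $\sigma$-KMS (so that $\Delta_\phi$ and $J$ exist and $\Delta_\phi\Omega=\Omega$). For every $y\in\mc{N}$ we have $y\Omega\in\mathrm{dom}(\Delta_\phi^{1/2})$ with $\|\Delta_\phi^{1/2}y\Omega\|=\|Jy^*\Omega\|=\|y^*\Omega\|$, and applying Jensen's inequality for $\exp$ to the normalised spectral measure of $y\Omega$ relative to $\log\Delta_\phi$ yields
\[
\langle y\Omega,\ \mathbf{H}\,y\Omega\rangle\ \ge\ \|y\Omega\|^2\,\log\frac{\|y\Omega\|^2}{\|y^*\Omega\|^2}.
\]

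Finally I would fix $\psi=x\Omega\in\mc{D}$ with $x\in\mc{N}$ and $\psi\neq 0$ (so $x\neq 0$, as $\Omega$ is separating) and compute, for $g_0=(n_0,s)$, that $\rho_\phi(g_0)\psi=\rho_\phi(n_0)\big(\Delta_\phi^{-is}x\Delta_\phi^{is}\big)\Omega=:y\Omega$ with $y\in\mc{N}$. Since $\rho_\phi(n_0)$ is unitary and conjugation by $\Delta_\phi^{is}$ is a unit-preserving $\ast$-automorphism of $\mc{N}$ leaving $\phi$ invariant, one gets $\|y\Omega\|^2=\phi(x^*x)=\|x\Omega\|^2$, independently of $g_0$, while the operator inequality $\Delta_\phi^{-is}xx^*\Delta_\phi^{is}\le\|x\|^2 1$ gives $\|y^*\Omega\|^2=\phi\big(\rho_\phi(n_0)(\Delta_\phi^{-is}xx^*\Delta_\phi^{is})\rho_\phi(n_0)^*\big)\le\|x\|^2$. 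Feeding this into the displayed inequality shows that $\langle\rho_\phi(g_0)\psi,\mathbf{H}\rho_\phi(g_0)\psi\rangle\ge\|x\Omega\|^2\log(\|x\Omega\|^2/\|x\|^2)$ uniformly in $g_0$, a finite lower bound, which is exactly the required GPE condition. The main — and essentially only non-routine — point is the observation that the full $(N\rtimes_\alpha\R)_0$-orbit of $x\Omega$ stays of the form $y\Omega$ with $\|y\Omega\|$ constant and $\|y^*\Omega\|$ bounded above by $\|x\|$, which is what upgrades the pointwise Jensen bound to a uniform one; all pairings in sight are finite since $\mc{D}$ consists of smooth vectors, so no domain subtleties arise.
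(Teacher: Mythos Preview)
The paper does not prove this statement; it is quoted from \cite[Thm.\ 5.13]{Milan_reps_jets} and used as a black box. Your argument is correct and is the natural one: the Jensen-type inequality $\langle y\Omega,(-\log\Delta_\phi)\,y\Omega\rangle\ge\|y\Omega\|^2\log(\|y\Omega\|^2/\|y^*\Omega\|^2)$, valid for $y\in\pi_\phi(\mc{N})$ with $y\Omega$ in the domain of $\log\Delta_\phi$, together with the observation that the $(N\rtimes_\alpha\R)_0$-orbit of $x\Omega$ remains of the form $y\Omega$ with $\|y\Omega\|=\|x\Omega\|$ and $\|y^*\Omega\|\le\|x\|$, gives a uniform lower bound. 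The domain and density verifications you give are sound; in particular $\Omega_\phi$ is separating because $\phi$ is KMS, and the smoothness of $\Omega_\phi$ for the full $N\rtimes_\alpha\R$-action uses both the hypothesis on $n\mapsto\phi(\rho(n))$ and the fact that $\Delta_\phi^{-it}\Omega_\phi=\Omega_\phi$. This is almost certainly the same line of argument as in the cited source.
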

	
	\section{GPE representations of $\Diff_c(M) \rtimes_\upsilon \R$}\label{sec: gen_pe_reps_vfields}
	
	\noindent
	Let $M$ be a smooth manifold of dimension $\dim(M) > 1$. If $\upsilon \in \X(M)$ is a complete vector field on $M$ with flow $\Phi^{\upsilon} : \R \to \Diff(M)$, we write $\Diff_c(M) \rtimes_\upsilon \R$ for the semidirect product of $\Diff_c(M)$ and $\R$ relative to the smooth $\R$-action on $\Diff_c(M)$ defined by 
	$\alpha_s(f) = \Phi^{\upsilon}_s \circ f \circ \Phi^{\upsilon}_{s^{-1}}$ for $s \in \R$ and $f \in \Diff_c(M)$. The corresponding Lie algebra is $\X_c(M) \rtimes \R \upsilon$, where $\upsilon$ acts on $\X_c(M)$ by the derivation $D w := [\upsilon, w]$. \\
	
	\noindent
	In Section~\ref{sec: second_cts_LA_cohom}, we will see that $H^2_\ct(\X_c(M), \R)$ is trivial for $\dim(M) > 1$. This puts severe restrictions on the class of projective unitary representations of $\X_c(M) \rtimes \R \upsilon$ that are of generalized positive energy at $\upsilon$. The following result on Lie algebra representations is the crux of the matter.
	
	\begin{theorem}\label{thm: gpe_vfields}
		Suppose that $\dim(M) > 1$. Let $\olpi : \X_c(M)\to \pu(\mcD)$ be a continuous projective unitary representation of $\X_c(M)$ on the complex pre-Hilbert space $\mcD$. Let $\mc{C} \subseteq \X(M)$ be a cone of complete vector fields on $M$, and define the open set
		$$ U := \bigcup_{\upsilon \in \mc{C}} \SET{p \in M \st \upsilon(p) \neq 0}.$$
		Suppose that for every $\upsilon \in \mc{C}$ the representation $\olpi$ extends to a continuous projective unitary representation of $\X_c(M) \rtimes \R \upsilon$ that is of generalized positive energy at $\upsilon$. Then $\X_c(U) \subseteq \ker \olpi$.
	\end{theorem}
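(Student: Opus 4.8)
\emph{Plan of proof.} The plan is to combine the vanishing of $H^2_\ct(\X_c(M),\R)$ (Theorem~B, which applies because $\dim M > 1$) with \Fref{prop: generalized_pe_cauchy_schwarz}, and then to run a local generation argument inside $\X_c(U)$.

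\emph{Reduction.} Fix $\upsilon\in\mc{C}$. By hypothesis $\olpi$ extends to a continuous projective unitary representation of $\g_\upsilon := \X_c(M)\rtimes\R\upsilon$ that is of generalized positive energy at $\upsilon$; let $\omega_\upsilon\in C^2_\ct(\g_\upsilon,\R)$ represent the class of the associated central extension. Its restriction to the closed Lie subalgebra $\X_c(M)\subseteq\g_\upsilon$ is a continuous $2$-cocycle, hence a coboundary by Theorem~B, so $\omega_\upsilon(v,w) = -\lambda_\upsilon([v,w])$ for all $v,w\in\X_c(M)$ and some continuous linear functional $\lambda_\upsilon$ on $\X_c(M)$. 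Now let $\eta\in\X_c(M)$ with $[[\upsilon,\eta],\eta] = 0$. Since $[\upsilon,\eta]$ and $\eta$ lie in $\X_c(M)$, so does $[[\upsilon,\eta],\eta]$, whence $\omega_\upsilon([\upsilon,\eta],\eta) = -\lambda_\upsilon([[\upsilon,\eta],\eta]) = 0$; by \Fref{prop: generalized_pe_cauchy_schwarz}, applied with $\xi = \upsilon$, this forces $\olpi([\upsilon,\eta]) = 0$ (note $[\upsilon,\eta]\in\X_c(M)$, so this is a statement about the original $\olpi$). Consequently the Lie ideal $\mathfrak{K} := \ker\olpi\subseteq\X_c(M)$ — which is \emph{closed}, since the lift of $\olpi$ is continuous (\Fref{def: proj_reps_la_cts}) — contains every element $[\upsilon,\eta]$ with $\upsilon\in\mc{C}$, $\eta\in\X_c(M)$ and $[[\upsilon,\eta],\eta] = 0$.

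\emph{Local generation.} It remains to show $\X_c(U)\subseteq\mathfrak{K}$. Fix $p\in U$ and choose $\upsilon\in\mc{C}$ with $\upsilon(p)\neq 0$. By the straightening theorem there is a chart $(W_p;x_1,\dots,x_n)$ with $p\in W_p$, $n = \dim M\geq 2$, in which $\upsilon = \partial_1$; write $\partial_j := \partial/\partial x_j$ and $x' := (x_2,\dots,x_n)$. If $f(x_1)$ and $X = \sum_{j\geq 2}X^j(x')\,\partial_j$ have all data compactly supported in the coordinate ranges, so that $fX\in\X_c(W_p)$, then $X$ annihilates $f$ and $f'$, hence $[\partial_1,fX] = f'X$ while $[[\partial_1,fX],fX] = [f'X,fX] = 0$; therefore $\mathfrak{K}$ contains every $\varphi(x_1)X(x')$ with $\int\varphi = 0$ and $X$ of this form. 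From this restricted family one propagates membership using that $\mathfrak K$ is a closed ideal: brackets of two such elements equal $\varphi_1\varphi_2\,[X_1,X_2]$, and combining (i) that finite sums of functions $\psi(x_1)\theta'(x_1)$ span the compactly supported smooth functions on an interval, (ii) that the Lie algebra of compactly supported vector fields on a manifold of positive dimension coincides with its own commutator, and (iii) the density of finite sums of decomposable tensors in $C^\infty_c(W_p)$, one obtains that $\mathfrak{K}$ contains all vector fields supported in $W_p$ with vanishing $\partial_1$-component. Finally, for $w\in\X_c(M)$ supported in $W_p$ the bracket $[w,\varphi X]$ has $\partial_1$-component $-\varphi(x_1)\sum_{j\geq 2}X^j\,\partial_j w^1$ and a $\partial_1$-free remainder that already lies in $\mathfrak K$; varying $w$, $X$ and $\varphi$ and using closedness once more yields $b\,\partial_1\in\mathfrak K$ for all $b\in C^\infty_c(W_p)$, so that $\X_c(W_p)\subseteq\mathfrak{K}$. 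Since $U = \bigcup_{p\in U}W_p$ and, via a partition of unity on $\supp w$, every $w\in\X_c(U)$ is a finite sum of vector fields each supported in some $W_p$, we conclude $\X_c(U)\subseteq\ker\olpi$.

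\emph{Main obstacle.} The reduction is essentially formal once Theorem~B and \Fref{prop: generalized_pe_cauchy_schwarz} are in hand; the real work — and the expected main obstacle — is the local generation step, i.e.\ organizing the bracket bookkeeping so that the very small initial family $\{\varphi(x_1)X(x') : \int\varphi = 0\}$ (with $X$ of the special form above) generates $\X_c(W_p)$ as a closed ideal, first recovering all $\partial_1$-free fields and then the $\partial_1$-direction. This is also the point at which $\dim M > 1$ enters a second time: it guarantees that such fields $X$ are abundant, complementing the role of the hypothesis in Theorem~B.
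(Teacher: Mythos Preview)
Your reduction step is exactly the paper's: Theorem~B kills the restriction of the cocycle to $\X_c(M)$, and then \Fref{prop: generalized_pe_cauchy_schwarz} with $\xi=\upsilon$ and $\eta\in\X_c(M)$ gives $[[\upsilon,\eta],\eta]=0\Rightarrow\olpi([\upsilon,\eta])=0$. The choice $\eta=f(x_1)X(x')$ in a straightening chart, so that $[\upsilon,\eta]=f'X$ and $[[\upsilon,\eta],\eta]=ff'[X,X]=0$, is also precisely what the paper does (its Lemma on the local case).

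Where you diverge is in the local generation. The paper does not attempt any bracket bookkeeping: once it has a single element $f'X\in\ker\olpi$ with $f'(t_0)\neq0$ and $X(x_0)\neq0$, it invokes \Fref{prop: ideals_locally}, which says that any ideal in $\X_c(M)$ containing a vector field nonvanishing at a point $p_0$ already contains $\X_c(V)$ for some neighbourhood $V$ of $p_0$. A partition-of-unity argument then yields $\X_c(U_p)\subseteq\ker\olpi$ in one stroke. Your hands-on route --- building up first the $\partial_1$-free fields and then the $\partial_1$-direction via explicit brackets and closures --- is a legitimate alternative and avoids appealing to that structural result, but as written it is sketchy in two places. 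First, ``brackets of two such elements'' (both with $\int\varphi_i=0$) is not what makes claim~(i) work; you need to bracket $\varphi X\in\mathfrak K$ with $\psi Y$ where $\psi$ is \emph{unconstrained}, using that $\mathfrak K$ is an ideal, to obtain $\psi\varphi\,[Y,X]$ and hence the full span in the $x_1$-variable. Second, the final step~(e) deserves an actual density argument (e.g.\ show that a continuous functional on $C^\infty_c(W_p)$ annihilating all $a\,\partial_j b$ with $j\ge2$ must vanish), rather than the phrase ``varying $w$, $X$ and $\varphi$''. With these fixes your argument goes through; the paper's route via \Fref{prop: ideals_locally} is simply shorter and more conceptual.
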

		
	\noindent
	In \Fref{sec: ideals}, we classify invariant ideals in $\X_c(M)$. This is an intermediate step towards the proof of \Fref{thm: gpe_vfields}, which is given in \Fref{sec: pf_gpe_vfields}. Lastly, in \Fref{sec: thmA} we use this result to derive Theorem~A, which is a group-level analogue of \Fref{thm: gpe_vfields}.

	\subsection{Ideals of the Lie algebra $\X_c(M)$}\label{sec: ideals}
	
	\noindent
	Let $M$ be a smooth manifold. For any $x \in M$, let $I_x \subseteq \X_c(M)$ denote the closed ideal of vector fields that are flat at $x$. So $v \in I_x \iff j^\infty_x(v) = 0$ for $v \in \X_c(M)$. The proof of \Fref{thm: gpe_vfields} uses \Fref{prop: ideals_locally} below.
	
	\begin{definition}
		If $J \subseteq \X_c(M)$ is an ideal, define its \textit{hull} by 
		$$ h(J) := \SET{x \in M \st v(x) = 0 \text{ for all } v \in J}.$$
	\end{definition}
	
	\begin{remark}
		The set of maximal ideals in $\X_c(M)$ is given by $\{I_x \st x\in M\}$ \cite[Thm.\ 1]{Shanks_Pursell} (cf.\ \cite[prop.\ 7.2.2]{Banyaga_book_diffeo} or \cite[Prop.\ 1]{Bas_inf_nat_bundles}). Moreover, if $x \in M$ and $J \subseteq \X_c(M)$ is an ideal, then $x \in h(J)$ if and only if $j^\infty_x(v) = 0$ for all $v \in J$. Indeed, if $x\in h(J)$, then for any $w_1, \ldots, w_m \in \X_c(M)$ and $v \in J$ we have $\mc{L}_{w_1}\cdots \mc{L}_{w_m}v \in J$, as $J$ is an ideal, and so $\big(\mc{L}_{w_1}\cdots \mc{L}_{w_m}v\big)(x) = 0$. Consequently $j^\infty_x(v) = 0$. We thus see that $h(J) = {\{x \in M \st J \subseteq I_x\}}$ corresponds to the set of maximal ideals of $\X_c(M)$ containing $J$. 
	\end{remark}
	
	\begin{proposition}\label{prop: ideals_locally}
		Let $J \subseteq \X_c(M)$ be an ideal and let $x \in M$. Then either $x \in h(J)$, or there is an open neighborhood $U \subseteq M$ of $x$ such that 
		$\X_c(U) \subseteq [J, \X_c(M)] \subseteq J$.
	\end{proposition}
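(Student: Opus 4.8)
The plan is to dispose of the case $x\in h(J)$ at once (nothing to prove) and, in the complementary case, to use the Remark above: since $j^\infty_x(w)=0$ for every $w\in J$ would force $x\in h(J)$, there is some $v\in J$ with $v(x)\neq 0$. I would then straighten $v$. By the flow box theorem there is a chart $\varphi\colon V\to(-3,3)^n$ with $n=\dim M$, $\varphi(x)=0$, and $\varphi_*(v|_V)=\partial_1$. Put $Q:=\varphi^{-1}\big((-2,2)^n\big)$ and $U:=\varphi^{-1}\big((-1,1)^n\big)$. Since $J$ is an ideal we have $[J,\X_c(M)]\subseteq J$ for free, so it suffices to prove $\X_c(U)\subseteq[J,\X_c(M)]$.

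The reduction I would use is to replace $[J,\X_c(M)]$ by the (smaller) linear span $\mathfrak I\subseteq\X_c(M)$ of all iterated brackets $[[\cdots[v,u_1],\cdots],u_k]$ with $k\geq 1$ and $u_1,\dots,u_k\in\X_c(Q)$. One checks that $\mathfrak I\subseteq[J,\X_c(M)]$: for $k=1$ this is immediate, and for $k\geq 2$ the inner bracket $[[\cdots[v,u_1],\cdots],u_{k-1}]$ already lies in the ideal $J$. The point of working with $\mathfrak I$ is that, since each $u_j$ is supported in $Q$ and $v$ equals the coordinate field $\partial_1$ on $V\supseteq\overline Q$, every element of $\mathfrak I$ can be computed in the coordinates on $Q$ with $v$ replaced by $\partial_1$. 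So the whole problem collapses to a concrete claim in $\R^n$: the span $\mathfrak I$ of iterated $\partial_1$-brackets of vector fields supported in $(-2,2)^n$ contains $\X_c\big((-1,1)^n\big)$.

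For this I would argue in two stages. First, cutting off $x_1\partial_i$ gives fields $\tau_i:=[\partial_1,\psi_i x_1\partial_i]\in\mathfrak I$ that coincide with $\partial_i$ on $(-\tfrac32,\tfrac32)^n$, whence $(\partial_j g)\partial_i=[\tau_j,g\partial_i]\in\mathfrak I$ for all $i,j$ and all $g\in C^\infty_c\big((-1,1)^n\big)$. Together with the classical fact that a compactly supported function on an open cube equals $\sum_j\partial_j g_j$ for suitable compactly supported $g_j$ exactly when its integral vanishes (one-dimensionality of the top compactly supported de Rham cohomology of a cube), this shows $\mathfrak I$ contains every field in $\X_c\big((-1,1)^n\big)$ all of whose components have zero integral. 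Second, for each $i$ I would produce one field $\beta_i\partial_i\in\mathfrak I$ with $\int\beta_i\neq 0$ from the quadratic bracket $[(\partial_i f)\partial_i,\,g\partial_i]=\big((\partial_i f)(\partial_i g)-g\,\partial_i^2 f\big)\partial_i$, which lies in $\mathfrak I$ because $(\partial_i f)\partial_i=[\tau_i,f\partial_i]\in\mathfrak I$: choosing $f=g=\phi(x_i)\eta(x^{(i)})$ for bumps $\phi$ and $\eta$ (with $\eta\equiv 1$ if $n=1$), its $x_i$-integral is $2\|\phi'\|_{L^2}^2\,\eta(x^{(i)})^2$, which is $\geq 0$ and not identically zero. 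Given these two facts, an arbitrary $w=\sum_i b_i\partial_i\in\X_c(U)$ is finished off by subtracting suitable multiples $\lambda_i\beta_i\partial_i$ so that $w-\sum_i\lambda_i\beta_i\partial_i$ has all components of zero integral, hence lies in $\mathfrak I$; so $w\in\mathfrak I\subseteq[J,\X_c(M)]$.

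I expect the genuine obstacle to be precisely the second stage. Brackets $[\partial_1,u]$ with $u$ supported in the chart only ever produce fields whose coefficient functions integrate to zero along the lines parallel to $\partial_1$, and $[J,\X_c(M)]$ essentially hands us nothing beyond such iterated brackets, so it is not at all clear a priori that $\mathfrak I$ is large enough to contain a bump field. Getting past this ``integral obstruction'' is the heart of the matter; the quadratic bracket above does it because its $x_i$-integral can be arranged to be a nowhere-vanishing function of the transverse variables. The remaining ingredients --- the flow box normal form, the reduction to $\mathfrak I$, and the de Rham input --- are routine.
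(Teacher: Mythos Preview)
Your argument is correct and self-contained. The paper itself does not prove this proposition; it simply notes that the proof of \cite[Lem.~2.1]{Bas_inf_nat_bundles} goes through without assuming $J$ to be maximal. Your route---straighten $v$ to $\partial_1$, build local coordinate fields $\tau_i=[v,\psi_i x_1\partial_i]$, obtain all zero-integral components from the $k=2$ brackets $[\tau_j,g\partial_i]=(\partial_j g)\partial_i$ via the compactly supported Poincar\'e lemma, and then break the integral obstruction with the $k=3$ bracket $[[\tau_i,f\partial_i],g\partial_i]$ whose coefficient has strictly positive total integral---is a clean direct argument and presumably close in spirit to the cited one. You have also correctly located the only non-formal step: a single bracket $[v,u]$ lands in the kernel of integration along the $\partial_1$-lines, so one further nesting is genuinely required, and your quadratic bracket supplies exactly that.
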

	\begin{proof}
		This is immediate from the proof of \cite[Lem.\ 2.1]{Bas_inf_nat_bundles}, which does not require the ideal $J \subseteq \X_c(M)$ to be maximal.
	\end{proof}
	
	\noindent
	Although $\X_c(M)$ is not simple, the following related result does hold true:
	\begin{corollary}\label{cor: no_stable_ideals}
		Assume that $M$ is connected. Suppose that $J \subseteq \X_c(M)$ is an ideal that is stable, in the sense that $\Ad_g(J) \subseteq J$ for all $g \in \Diff_c(M)_0$. Then either $J = \X_c(M)$ or $J = \{0\}$.
	\end{corollary}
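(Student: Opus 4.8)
\emph{The plan.} Assume $J \neq \{0\}$; I will show $J = \X_c(M)$, the case $J = \{0\}$ being trivial. The strategy is to feed \Fref{prop: ideals_locally} into the transitivity of $\Diff_c(M)_0$ on the connected manifold $M$, and then patch local vector fields together with a partition of unity. Since $J \neq \{0\}$, some $v \in J$ is nonzero, so $v(x_0) \neq 0$ for some $x_0 \in M$; in particular $x_0 \notin h(J)$, and \Fref{prop: ideals_locally} provides an open neighbourhood of $x_0$ all of whose compactly supported vector fields lie in $J$. Hence the set
\[
V := \SET{x \in M \st \text{there is an open } U \ni x \text{ with } \X_c(U) \subseteq J}
\]
is open and nonempty.

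The key observation is that $V$ is $\Diff_c(M)_0$-invariant. For $g \in \Diff_c(M)_0$, pushing forward vector fields gives $\Ad_g\big(\X_c(U)\big) = \X_c\big(g(U)\big)$; so if $\X_c(U) \subseteq J$ then $\X_c(g(U)) = \Ad_g(\X_c(U)) \subseteq \Ad_g(J) \subseteq J$ because $J$ is stable, and therefore $g(x) \in V$ whenever $x \in V$. As $M$ is connected, $\Diff_c(M)_0$ acts transitively on $M$, and hence $V = M$.

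It then remains to patch. Given an arbitrary $w \in \X_c(M)$ with compact support $K$, cover $K$ by finitely many open sets $U_1, \dots, U_k$ with $\X_c(U_i) \subseteq J$ (using $V = M$), choose smooth compactly supported functions $\chi_i$ with $\supp(\chi_i) \subseteq U_i$ and $\sum_{i=1}^k \chi_i \equiv 1$ on $K$, and note that $w = \sum_{i=1}^k \chi_i w$ with each $\chi_i w \in \X_c(U_i) \subseteq J$. Thus $w \in J$, so $J = \X_c(M)$. I do not anticipate a genuine obstacle: essentially all the content is in \Fref{prop: ideals_locally} (imported from \cite{Bas_inf_nat_bundles}), and the remaining ingredients — the identity $\Ad_g(\X_c(U)) = \X_c(g(U))$, the transitivity of $\Diff_c(M)_0$ on a connected manifold, and the existence of a partition of unity with compact supports subordinate to a finite cover of $K$ — are standard.
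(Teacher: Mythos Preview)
Your proof is correct and takes essentially the same approach as the paper. The paper phrases the argument via the hull $h(J)$ rather than your set $V$ (which equals $M \setminus h(J)$ by \Fref{prop: ideals_locally}), observing that stability of $J$ makes $h(J)$ a $\Diff_c(M)_0$-invariant subset and hence empty or all of $M$ by transitivity; the partition-of-unity conclusion is identical.
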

	\begin{proof}
		That $J$ is stable implies that its hull $h(J) \subseteq M$ is $\Diff_c(M)_0$-invariant. Since $M$ is connected, $\Diff_c(M)_0$ acts transitively on $M$ (cf.\ \cite[p.\ 22]{Milnor_topology}). It follows that either $h(J) = \emptyset$ or $h(J) = M$. Using a partition of unity argument, \Fref{prop: ideals_locally} implies that either $J = \X_c(M)$ or $J = \{0\}$.
	\end{proof}
	
	\begin{remark}
		Suppose that $M$ is connected. Let $\rho : \Diff_c(M) \to \PU(\mc{H})$ be a smooth projective unitary representation. Let $\overline{d\rho} : \X_c(M) \to \pu(\mc{H}^\infty)$ be its derived representation. Its kernel $J := \ker \overline{d\rho}$ is a closed ideal in $\X_c(M)$ satisfying $\Ad_g(J) \subseteq J$ for all $g \in \Diff_c(M)$. So $\overline{d\rho}$ is either trivial or injective by \Fref{cor: no_stable_ideals}.
	\end{remark}
	
	\subsection{The proof of \Fref{thm: gpe_vfields}}\label{sec: pf_gpe_vfields}
	
	\noindent
	We now proceed with the proof of \Fref{thm: gpe_vfields}. Let $n := \dim(M) >1$. We start with a lemma that concerns the local situation near a regular point of a vector field $\upsilon \in \mc{C}$. We thus consider the following setting:\\
	
	\noindent
	Let $I \subseteq \R$ be an open interval containing zero. Let $U_0 \subseteq \R^{n-1}$ be an open subset that is diffeomorphic to $\R^{n-1}$. Define $U := I \times U_0$, which is then diffeomorphic to $\R^{n}$. We consider the locally convex Lie algebra $\X_c(U)$ of compactly supported smooth vector fields on $U$. We write $(t, x_1, \ldots, x_{n-1}) \in \R^{n}$ for the coordinates on $\R^{n}$, and $(\partial_t, \partial_{x_1}, \ldots, \partial_{x_{n-1}})$ for the corresponding basis of $\X(\R^{n})$ over $C^\infty(\R^{n})$. Notice that the derivation ${D := [\partial_t, \--]}$ on $\X_c(U)$ does not necessarily integrate to a $1$-parameter group of automorphisms of $\X_c(U)$, because the open set $U$ need not be invariant under the flow of $\partial_t$.

	\begin{lemma}\label{lem: local_case}
		Let $\olpi : \X_c(U) \rtimes \R \partial_t \to \pu(\mcD)$ be a continuous projective unitary representation on the pre-Hilbert space $\mcD$. Assume that
		\begin{equation}\label{eq: commutation_rel_zero_implies_in_kernel}
			[v, D v] = 0 \implies \olpi(Dv) = 0, \qquad \forall v \in \X_c(U).
		\end{equation}
		Then $\X_c(U) \subseteq \ker \olpi$.
	\end{lemma}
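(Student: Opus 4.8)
The plan is to exhibit enough compactly supported vector fields $v$ on $U$ satisfying $[v, Dv] = 0$ that the elements $Dv$ --- which hypothesis \eqref{eq: commutation_rel_zero_implies_in_kernel} places in $\ker\olpi$ --- generate an ideal of $\X_c(U)$ with empty hull; \Fref{prop: ideals_locally} will then force that ideal to be all of $\X_c(U)$.

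For the key family, take $g \in C_c^\infty(I)$ and $w \in \X_c(U_0)$, the latter regarded as a $t$-independent vector field on $U = I \times U_0$ with vanishing $\partial_t$-component, and set $v := g(t)\,w \in \X_c(U)$; this is compactly supported since $\supp v \subseteq \supp g \times \supp w$. Because $[\partial_t, w] = 0$, one has $Dv = [\partial_t, v] = g'(t)\,w$; and because $w$ annihilates every function of $t$ alone, the Leibniz rule gives $[v, Dv] = [gw,\, g'w] = gg'[w,w] + g(wg')\,w - g'(wg)\,w = 0$. Hence \eqref{eq: commutation_rel_zero_implies_in_kernel} yields $\olpi(g'(t)\,w) = 0$, and, taking linear combinations over $g$, the representation $\olpi$ vanishes on $\psi(t)\,w$ for every $w \in \X_c(U_0)$ and every $\psi \in C_c^\infty(I)$ with $\int_I \psi\,dt = 0$.

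Now set $J := \ker\olpi \cap \X_c(U)$. For $w \in \X_c(U)$ and $j \in J$ one has $[w,j] \in \X_c(U) \cap \ker\olpi = J$, so $J$ is an ideal of $\X_c(U)$, and by the previous paragraph it contains all the vector fields $\psi(t)\,w$ just described. I claim its hull $h(J)$ is empty. Given $(t_0, y_0) \in I \times U_0$, choose $w \in \X_c(U_0)$ with $w(y_0) \neq 0$ (here, and only here, the hypothesis $\dim M > 1$ is used: it guarantees the factor $U_0 \subseteq \R^{n-1}$ is positive-dimensional) and $\psi \in C_c^\infty(I)$ with $\int_I \psi\,dt = 0$ but $\psi(t_0) \neq 0$. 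Then $\psi(t)\,w \in J$ and $(\psi(t)\,w)(t_0, y_0) = \psi(t_0)\,w(y_0) \neq 0$, so $(t_0, y_0) \notin h(J)$. Since $h(J) = \emptyset$, \Fref{prop: ideals_locally} supplies, for each $x \in U$, an open neighbourhood $U_x$ with $\X_c(U_x) \subseteq J$; covering the (compact) support of an arbitrary $w \in \X_c(U)$ by finitely many such $U_x$ and using a subordinate partition of unity writes $w$ as a finite sum of elements of $J$, whence $\X_c(U) = J \subseteq \ker\olpi$.

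I do not anticipate a serious obstacle: the only real insight needed is that the \emph{$t$-factorised single-block} fields $g(t)\,w$ with $w \in \X_c(U_0)$ automatically satisfy $[v, Dv] = 0$, which is exactly what sidesteps the $\partial_t$-direction --- where no such identity is available --- and reduces the problem to the purely algebraic ideal dichotomy provided by \Fref{prop: ideals_locally}. Consistently with the failure of the statement for $M = S^1$, the condition $\dim M > 1$ is needed only to ensure that $U_0$ carries non-trivial compactly supported vector fields.
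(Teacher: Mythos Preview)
Your proof is correct and follows essentially the same approach as the paper: both use the $t$-factorised fields $v = g(t)\,w$ with $w \in \X_c(U_0)$ to produce elements $Dv = g'(t)\,w$ in $\ker\olpi$ that are nonvanishing at any prescribed point, and then invoke \Fref{prop: ideals_locally} together with a partition of unity to conclude. The only cosmetic difference is that the paper works point-by-point with the ideal generated by a single $Dv$, whereas you work globally with the ideal $J = \ker\olpi \cap \X_c(U)$ and show $h(J) = \emptyset$; these are equivalent formulations of the same idea.
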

	\begin{proof}
		Let $p_0 = (t_0,x_0) \in U = I \times U_0$ be arbitrary. Let $f \in C^\infty_c(I)$ and $w \in \X_c(U_0)$ be s.t.\ $f^\prime(t_0) \neq 0$ and $w(x_0) \neq 0$. Define $v \in \X_c(U)$ by $v(t,x) := f(t)w(x)$ for $t \in I$ and $x \in U_0$. Observe that $Dv(t,x) = f^\prime(t)w(x)$. In particular, $Dv(p_0) \neq 0$ and $[v, Dv](t,x) = f(t)f^\prime(t)[w, w](x) = 0$. It follows using \eqref{eq: commutation_rel_zero_implies_in_kernel} that $Dv \in \ker \olpi$. Let $J \subseteq \X_c(U)$ be the ideal generated by $Dv$. Then $J \subseteq \ker \olpi$. As $Dv(p_0) \neq 0$, it follows using \Fref{prop: ideals_locally} that $\X_c(V) \subseteq J$ for some open neighborhood $V \subseteq U$ of $p_0$. So we have $\X_c(V) \subseteq \ker \olpi$. We have thus shown that any $p \in U$ has a neighborhood $V \subseteq U$ for which $\X_c(V) \subseteq \ker \olpi$. Consequently, if $K \subseteq U$ is a compact subset, we can find a finite open cover $\{U_1, \ldots, U_m\}$ of $K$ with $\X_c(U_k) \subseteq \ker \overline{d\rho}$ for all $k \in \{1, \ldots, m\}$. Using a partition of unity argument, it follows that $\X_K(U) \subseteq \ker \overline{d\rho}$ for any compact set $K \subseteq M$, so that $\X_c(U) \subseteq \ker \overline{d\rho}$.
	\end{proof}
	
	\noindent
	We now return to the global setting, and prove \Fref{thm: gpe_vfields}.
	
	\begin{proof}[Proof of \Fref{thm: gpe_vfields}:]
		Let $p \in U$ and let $\upsilon \in \mc{C}$ satisfy $\upsilon(p) \neq 0$. By assumption, $\olpi$ extends to a continuous projective unitary representation of $\X_c(M) \rtimes \R\upsilon$ that is of generalized positive energy at $\upsilon$, again denoted $\olpi$. Since $\upsilon(p) \neq 0$, we can find an open neighborhood $U_p \subseteq M$ of $p$, an open interval $I \subseteq \R$ containing zero, an open subset $U_0 \subseteq \R^{n-1}$ that is diffeomorphic to $\R^{n-1}$, and a diffeomorphism $\phi : I \times U_0 \to U_p$ such that $\phi_\ast([\partial_t, w]) = [\upsilon, \phi_\ast(w)]$ for all $w \in \X_c(I \times U_0)$ \cite[Thm.\ 9.22]{Lee_smooth_mfds}. So $\phi_\ast$ defines an isomorphism 
		$$\phi_\ast : \X_c(I \times U_0) \rtimes \R \partial_t \to \X_c(U_p) \rtimes \R\upsilon.$$
		In view of Theorem~B, we know that $H^2_{\ct}(\X_c(M), \R) = 0$. As $\olpi$ is of generalized positive energy at $\upsilon$, it follows using \Fref{prop: generalized_pe_cauchy_schwarz} that $[w, Dw] = 0$ implies $\olpi(Dw) = 0$ for any $w \in \X_c(M)$. As a consequence, the pull-back of $\olpi$ along the composition 
		$$\X_c(I \times U_0) \rtimes \R \partial_t \xrightarrow{\phi_\ast}  \X_c(U_p) \rtimes \R\upsilon\hookrightarrow \X_c(M) \rtimes \R\upsilon$$
		satisfies the conditions of \Fref{lem: local_case}, from which it then follows that $\X_c(U_p) \subseteq \ker \olpi$. So any $p \in U$ has an open neighborhood $U_p \subseteq M$ satisfying $\X_c(U_p) \subseteq \ker \olpi$. This implies that $\X_c(U) \subseteq \ker \olpi$.
	\end{proof}

	\subsection{The proof of Theorem~A}\label{sec: thmA}
	
	\noindent
	In this section, we derive Theorem~A as a group-level consequence of \Fref{thm: gpe_vfields}. As a special case, we obtain similar results for KMS and bounded representations.
	
	\begin{proof}[Proof of Theorem A\phantom{}:]
		Since the derived representation $\overline{d\rho} : \X_c(M) \rtimes \R \upsilon \to \pu(\mc{H}^\infty)$ is of generalized positive energy at $\upsilon$, it is so at every $\upsilon^\prime$ in the cone $\mc{C}$ generated by the adjoint orbit of $\upsilon$ in $\X_c(M) \rtimes \R \upsilon$. Since $\upsilon$ is non-zero, there exists some open subset $U_0 \subseteq M$ on which $\upsilon$ is non-vanishing. Then $\Ad_f(\upsilon)$ is non-zero on $f(U_0)$ for every $f \in \Diff_c(M)$. Since $M$ is connected, $\Diff_c(M)$ acts transitively on $M$ (because all orbit are open and therefore also closed, cf.\ \cite[p.\ 22]{Milnor_topology}). Hence $\,\bigcup_{\upsilon^\prime \in \mc{C}} \SET{p\in M \st \upsilon^\prime(p) \neq 0} = M$. We obtain using \Fref{thm: gpe_vfields} that $\X_c(M) \subseteq \ker \overline{d\rho}$. \Fref{cor: kernels} now implies that $\Diff_c(M)_0 \subseteq \ker \overline{\rho}$. 
	\end{proof}

	\begin{corollary}\label{cor: KMS_reps_trivial}
		Suppose that $M$ is connected and that $\dim(M) > 1$. Let $\upsilon \in \X(M)\setminus \{0\}$ be a complete vector field on $M$. Let $\overline{\rho} : \Diff_c(M) \rtimes_\upsilon \R \to \PU(\mc{H})$ be a smooth projective unitary representation that is smoothly-KMS at $\upsilon$ relative to $\Diff_c(M)$. Assume that the von Neumann algebra $\rho(\Diff_c(M))^{\prime \prime}$ is a factor. Then $\Diff_c(M)_0 \subseteq \ker \overline{\rho}$.
	\end{corollary}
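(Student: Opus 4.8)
The plan is to deduce the statement from Theorem~A by passing to the GNS representation attached to the KMS state, and then to exploit the factor hypothesis to transfer the conclusion back to $\overline\rho$.

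First I would set up the GNS picture and apply \Fref{thm: kms_of_qpe}. Write $N := \Diff_c(M)$, let $\rho \colon \circled G \to \U(\mc H)$ be the lift of $\overline\rho$, let $\circled N \subseteq \circled G$ be the Lie subgroup covering $N$, and let $\circled\upsilon \in \circled\g$ cover $\upsilon$ together with a normal $\sigma$-KMS state $\phi$ on $\mc N := \rho(\circled N)^{\prime\prime}$ for which $\circled n \mapsto \phi(\rho(\circled n))$ is smooth, as provided by the smoothly-KMS hypothesis and \Fref{def: kms_rep}. Since the $\R$-action on $N$ is $\alpha_t(f) = \Phi^{\upsilon}_t \circ f \circ \Phi^{\upsilon}_{-t}$ with associated derivation $D = [\upsilon,\,\cdot\,]$ on $\X_c(M)$, the group $N \rtimes_\alpha \R$ of \Fref{thm: kms_of_qpe} is precisely $\Diff_c(M) \rtimes_\upsilon \R$, whose Lie algebra $\X_c(M) \rtimes \R\upsilon$ contains $\upsilon = (0,1)$. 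Applying \Fref{thm: kms_of_qpe} to $\rho$ (which is smoothly-KMS at $\circled\upsilon$ relative to $\circled N$) produces a smooth unitary representation $\rho_\phi = \pi_\phi \circ \rho$ of $\circled N \rtimes_\alpha \R$ on the GNS Hilbert space $\mc H_\phi$, extended by $\rho_\phi(\circled n, t) := \rho_\phi(\circled n)\Delta_\phi^{-it}$, that is of generalized positive energy at the element $(0,1)$ covering $\upsilon$.

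Since $\pi_\phi$ is unital, $\rho_\phi$ restricts to $z \mapsto z\,\id_{\mc H_\phi}$ on the central circle, hence descends to a smooth projective unitary representation $\overline{\rho}_\phi$ of $\Diff_c(M) \rtimes_\upsilon \R$; identifying $\circled N \rtimes_\alpha \R$ with the lift of $\overline{\rho}_\phi$ (the natural comparison map is a morphism of $\T$-central extensions over $\Diff_c(M)\rtimes_\upsilon\R$ restricting to the identity on $\T$, hence an isomorphism), we see that $\overline{\rho}_\phi$ is of generalized positive energy at $\upsilon$. Theorem~A then yields $\Diff_c(M)_0 \subseteq \ker \overline{\rho}_\phi$. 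To transfer this to $\overline\rho$, note that since $\phi$ is normal, $\pi_\phi$ is a normal $\ast$-representation of $\mc N$, so $\ker \pi_\phi$ is a $\sigma$-weakly closed two-sided ideal of $\mc N$; as $\mc N$ is a factor and $\pi_\phi \neq 0$, this forces $\ker \pi_\phi = \{0\}$, i.e.\ $\pi_\phi$ is faithful. Hence, for $g \in \Diff_c(M)_0$ with lift $\circled g \in \circled N$, the identity $\overline{\rho}_\phi(g) = [\id]$ says $\pi_\phi(\rho(\circled g)) \in \T\,\id_{\mc H_\phi}$, and since $\pi_\phi(z\,\id_{\mc H}) = z\,\id_{\mc H_\phi}$ for $z \in \T$, faithfulness forces $\rho(\circled g) \in \T\,\id_{\mc H}$, that is $g \in \ker\overline\rho$. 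Therefore $\Diff_c(M)_0 \subseteq \ker\overline\rho$.

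I expect the only real friction to lie in the bookkeeping with the various central extensions, and in verifying that ``generalized positive energy at $(0,1)$'' for $\rho_\phi$ really translates into ``generalized positive energy at $\upsilon$'' for $\overline{\rho}_\phi$ through the identification above; one also has to check that $\circled G$ is a regular locally convex Lie group so that \Fref{thm: kms_of_qpe} is applicable, which holds since $\Diff_c(M)\rtimes_\upsilon\R$ is regular and central $\T$-extensions of regular locally convex Lie groups are regular. The substantive input, however, is the factor condition: it is precisely what renders $\pi_\phi$ faithful, and hence what allows the kernel statement to pass from the GNS picture down to $\overline\rho$.
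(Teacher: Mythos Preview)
Your proposal is correct and follows essentially the same approach as the paper: pass to the GNS representation, invoke \Fref{thm: kms_of_qpe} to obtain a smooth GPE representation, apply Theorem~A there, and then use the factor hypothesis (hence faithfulness of $\pi_\phi$) to pull the conclusion back to $\overline\rho$. The paper's proof is terser about the descent from $\rho_\phi$ on $\circled N \rtimes_\alpha \R$ to a projective representation of $\Diff_c(M)\rtimes_\upsilon\R$ before invoking Theorem~A, whereas you spell out this bookkeeping and the regularity check for $\circled G$; these are exactly the right points to flag, but they do not represent a different argument.
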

	
	\begin{proof}
		Let $\rho : G \to \U(\mc{H})$ be the lift of $\overline{\rho}$, where the Lie group $G$ is a central $\T$-extension of $\Diff_c(M) \rtimes_\upsilon \R$. Let $H \subseteq G$ be the Lie subgroup covering $\Diff_c(M)$. Let $\h$ and $\g$ denote the Lie algebras of $H$ and $G$, respectively. Let $\mc{N} := \rho(H)^{\prime \prime}$ be the von Neumann algebra generated by $\rho(H)$. As $\overline{\rho}$ is smoothly-KMS at $\upsilon$ relative to $\Diff_c(M)$, there is some $\xi \in \g$ covering $\upsilon$ such that $\rho$ is smoothly-KMS at $\xi \in \g$ relative to $H$. Let $\phi$ be a normal state on $\mc{N}$ for which the function $H \to \C, h \mapsto \phi(\rho(h))$ is smooth, and that is $\sigma$-KMS for $\sigma_t(x) = \rho(e^{t\xi})x\rho(e^{-t\xi})$ with $t \in \R$ and $x\in \mc{N}$. Let $\rho_\phi : H \rtimes \R \to \U(\mc{H}_\phi)$ be the associated unitary representation of $H \rtimes \R$ on the GNS-Hilbert space $\mc{H}_\phi$. According to \Fref{thm: kms_of_qpe}, the representation $\rho_\phi$ on $\mc{H}_\phi$ is smooth and of generalized positive energy at $(0,1) \in \h \rtimes \R$. It follows from Theorem~A that $\rho_\phi(H_0) \subseteq \T \id_{\mc{H}_\phi} $, where $H_0$ denotes the identity component of $H$. Because the von Neumann algebra $\mc{N}$ is a factor, the GNS-representation $\mc{N} \to \mathcal{B}(\mc{H}_\phi)$ is injective (see e.g.\ \cite[Rem.\ 5.3 items 1 and 3]{Milan_reps_jets}). It follows that $\rho(H_0) \subseteq \T \id_{\mc{H}}$. Since $H_0$ covers $\Diff_c(M)_0$, this implies that $\Diff_c(M)_0 \subseteq \ker \overline{\rho}$.
	\end{proof}

	\begin{corollary}\label{cor: bdd_reps_trivial}
		Suppose that $\dim(M) > 1$. Let $\overline{\rho} : \Diff_c(M) \to \PU(\mc{H})$ be a smooth projective unitary representation that is bounded, i.e., continuous w.r.t.\ the norm topology on $\PU(\mc{H})$. Then $\Diff_c(M)_0 \subseteq \ker \overline{\rho}$.
	\end{corollary}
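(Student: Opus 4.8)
The plan is to deduce this from \Fref{thm: gpe_vfields}, exploiting that a \emph{bounded} representation is automatically of generalized positive energy at every element of the Lie algebra. First I would record the following elementary fact: if $\overline{\sigma}$ is a bounded smooth projective unitary representation of a regular locally convex Lie group $G$ with Lie algebra $\g$, then $\overline{\sigma}$ is of generalized positive energy at \emph{every} $\xi\in\g$. Indeed, since $\overline{\sigma}$ is norm-continuous, the one-parameter group $t\mapsto\overline{\sigma}(\exp_G(t\xi))$ is norm-continuous in $\PU(\mc{H})$, hence lifts to a norm-continuous one-parameter group in $\U(\mc{H})$, whose infinitesimal generator is therefore a bounded operator; consequently $-i\,d\sigma(\circled{\xi})$ is a bounded self-adjoint operator on $\mc{H}=\mc{H}^\infty$ for a suitable lift $\circled{\xi}\in\circled{\g}$ of $\xi$. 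Adding a suitable real multiple of the central generator $z\in\circled{\g}$ (which satisfies $-i\,d\sigma(z)=I_{\mc{H}}$) we may arrange in addition that $-i\,d\sigma(\circled{\xi})\geq 0$, so that $\sigma$ is of positive energy at $\circled{\xi}$; by \Fref{rem: pe_implies_gpe} it is then of generalized positive energy at $\circled{\xi}$, and hence $\overline{\sigma}$ is of generalized positive energy at $\xi$.

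Next I would apply \Fref{thm: gpe_vfields} with the cone $\mc{C}:=\X_c(M)$. Every compactly supported vector field is complete and $\X_c(M)$ is a linear subspace of $\X(M)$, so $\mc{C}$ is indeed a cone of complete vector fields; since every point of $M$ lies in the support of some compactly supported vector field, the associated open set is $U=M$. It then remains to verify the hypothesis of \Fref{thm: gpe_vfields}: for each $\upsilon\in\X_c(M)$, the derived representation $\overline{d\rho}\colon\X_c(M)\to\pu(\mc{H}^\infty)$ should extend to a continuous projective unitary representation of $\X_c(M)\rtimes\R\upsilon$ that is of generalized positive energy at $\upsilon$. For this I would use that $m_\upsilon\colon\Diff_c(M)\rtimes_\upsilon\R\to\Diff_c(M)$, $(f,t)\mapsto f\circ\Phi^{\upsilon}_t$, is a smooth group homomorphism (it is a homomorphism by a direct computation, and smooth because $t\mapsto\Phi^{\upsilon}_t$ is a smooth one-parameter group in $\Diff_c(M)$ and composition is smooth). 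Hence $\overline{\rho}':=\overline{\rho}\circ m_\upsilon$ is a smooth projective unitary representation of $\Diff_c(M)\rtimes_\upsilon\R$; it is bounded, being the composite of the norm-continuous map $\overline{\rho}$ with the continuous map $m_\upsilon$, and it restricts to $\overline{\rho}$ along the inclusion $f\mapsto(f,0)$. Its derived representation $\overline{d\rho'}$ therefore restricts to $\overline{d\rho}$ on $\X_c(M)$ and, by the fact recorded above, is of generalized positive energy at $\upsilon=(0,1)\in\X_c(M)\rtimes\R\upsilon$. Now \Fref{thm: gpe_vfields} yields $\X_c(M)=\X_c(U)\subseteq\ker\overline{d\rho}$, i.e.\ $\overline{d\rho}=0$, and \Fref{cor: kernels} then gives $\Diff_c(M)_0\subseteq\ker\overline{\rho}$, exactly as in the proof of Theorem~A.

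The conceptual content lies in the first two paragraphs — recognizing that boundedness forces generalized positive energy at all of $\X_c(M)$, and that $\mc{C}=\X_c(M)$ already exhausts $M$ in \Fref{thm: gpe_vfields}; everything else is routine composition. The only mildly technical ingredient, which I expect to be the one point needing care, is the standard fact that a norm-continuous one-parameter subgroup of $\PU(\mc{H})$ lifts to a norm-continuous one-parameter subgroup of $\U(\mc{H})$ (equivalently, that $\U(\mc{H})\to\PU(\mc{H})$ is a central extension of Banach--Lie groups admitting a smooth local section in the norm topology), which is what makes the generators $-i\,d\sigma(\circled{\xi})$ genuinely bounded. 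Note that, as in \Fref{thm: gpe_vfields} and Theorem~B, no connectedness or compactness hypothesis on $M$ is needed, and the vanishing of $H^2_\ct(\X_c(M),\R)$ for $\dim M>1$ enters only through \Fref{thm: gpe_vfields}.
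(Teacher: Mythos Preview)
Your proposal is correct and follows the same core idea as the paper: boundedness of $\overline{\rho}$ forces the generator of each one-parameter subgroup to be a bounded self-adjoint operator, hence $\overline{\rho}$ is of (generalized) positive energy at every $\upsilon\in\X_c(M)$, and the main machinery then kills $\X_c(M)$.

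The difference is in which intermediate result you invoke. The paper picks a single $\upsilon$ with $\upsilon(p)\neq 0$, observes that $\overline{\rho}$ is of positive energy at $\upsilon$, and applies Theorem~A to the connected component of $p$; varying $p$ over $M$ then exhausts all components. You instead bypass Theorem~A and go directly to \Fref{thm: gpe_vfields} with the full cone $\mc{C}=\X_c(M)$, constructing for each $\upsilon$ the splitting homomorphism $m_\upsilon\colon\Diff_c(M)\rtimes_\upsilon\R\to\Diff_c(M)$ to produce the required extension. This has the advantage of handling disconnected $M$ in one stroke, at the cost of spelling out the (easy) construction of $m_\upsilon$. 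The paper's route is marginally shorter because Theorem~A already packages the cone argument and the semidirect-product extension internally; your route makes it more transparent that no connectedness of $M$ is ever needed. Both are perfectly valid, and the technical content is the same.
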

	
	\begin{proof}
		Let $\rho : G \to \U(\mc{H})$ be the lift of $\overline{\rho}$, where $G$ is a central $\T$-extension of $\Diff_c(M)$ with Lie algebra $\g$. Let $p \in M$. Take $\upsilon \in \X_c(M)$ with $\upsilon(p) \neq 0$ and let $\xi \in \g$ cover~$\upsilon$. Since $\rho$ is continuous w.r.t.\ the norm-topology on $\U(\mc{H})$, the self-adjoint operator $-i \restr{\frac{d}{dt}}{t=0} \rho(\exp_G(t\xi))$ is bounded. It follows that $\overline{\rho}$ is of (generalized) positive energy at $\upsilon \in \X_c(M)$. Using Theorem~A, this implies that $\upsilon^\prime \in \ker \overline{d\rho}$ for any $\upsilon^\prime \in \X_c(M)$ for which $\supp(\upsilon^\prime)$ is contained in the connected component of $p$ in $M$. As $p \in M$ was arbitrary, we find that $\X_c(M) \subseteq \ker \overline{d\rho}$. We conclude using \Fref{cor: kernels} that $\Diff_c(M)_0 \subseteq \ker \overline{\rho}$.
	\end{proof}
	
	\section{Continuous second Lie algebra cohomology}\label{sec: second_cts_LA_cohom}
	\noindent
	In this section we prove Theorem~B, that is, we determine the continuous second Lie algebra cohomology $H_{\ct}^2(\X_c(M), \R)$.
	This is a crucial ingredient for the results of \Fref{sec: gen_pe_reps_vfields}.\\
	
	\noindent
	In \Fref{sec: cohom_dim_geq2} we consider the proof of Theorem~B for the case ${\dim(M) > 1}$, 
	and in \Fref{sec: cohom_dim_one} we consider the case $\dim(M) = 1$. 
	The cohomology $H^\bullet_\ct(\X_c(M), \R)$ 
	is generally different from the Gelfand--Fuks 
	cohomology $H^\bullet_\ct(\X(M), \R)$ (cf.\ also \cite{Shnider_cohom_cpt_vfields}). In \Fref{sec: relation_with_gf} we therefore clarify 
	the relationship between $H^2_\ct(\X_c(M), \R)$ and the second Gelfand--Fuks cohomology $H^2_\ct(\X(M), \R)$.
	They both vanish when $\mathrm{dim}(M) >1$, %for manifolds of dimension $ >1$, 
	but differ for noncompact manifolds of dimension 1.
	\\
	
	\noindent
	We first make some general observations that will be useful for the proof of Theorem~B.
	%Before proceeding with the proof of Theorem~B, we will make some general observations.
	 
	\begin{definition}
		A $2$-cochain $\psi \colon \X_c(M) \times \X_c(M) \rightarrow \R$ is called \emph{diagonal} if 
		$${\supp(v)\cap \supp(w) = \emptyset} \quad\implies\quad \psi(v,w) = 0, \qquad \forall v,w\in \X_{c}(M).$$
	\end{definition}
	
	\begin{lemma}\label{lem: diagonal}
		Every $2$-cocycle on $\X_c(M)$ is diagonal.
	\end{lemma}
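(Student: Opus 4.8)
The plan is to show that a 2-cocycle $\psi$ on $\X_c(M)$ vanishes on pairs $(v,w)$ with disjoint supports by reducing to the case where one of the two vector fields is a commutator. Recall that the cocycle identity for $\psi \in C^2_\ct(\X_c(M),\R)$ reads
$$ \psi([u,v],w) - \psi([u,w],v) + \psi([v,w],u) = 0 $$
for all $u,v,w \in \X_c(M)$. The key elementary observation is that if $\supp(v) \cap \supp(w) = \emptyset$, then $[v,w] = 0$, and moreover one can often write $v = [u, v']$ for suitable $u, v'$ supported near $\supp(v)$, hence disjoint from $\supp(w)$; then $[v',w] = 0$ and $[u,w] = 0$ as well, and the cocycle identity collapses to $\psi(v,w) = \psi([u,v'],w) = 0$.

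First I would make the reduction precise. Given $v,w$ with disjoint supports, choose a relatively compact open set $\Omega$ with $\supp(v) \subseteq \Omega$ and $\overline{\Omega} \cap \supp(w) = \emptyset$. It suffices to show $\psi(v,w)=0$ for $v$ in a set that spans (a dense subspace is not quite enough unless we invoke continuity, but here we can argue algebraically) $\X_{\overline{\Omega}}(M)$, or more efficiently to show that every such $v$ lies in the span of commutators $[u,v']$ with $u,v'$ supported in $\Omega$. This last point is where I would use a standard fact: on $\R^n$ (and hence locally on $M$) every compactly supported vector field is a finite sum of commutators of compactly supported vector fields — indeed $\X_c(\R^n) = [\X_c(\R^n),\X_c(\R^n)]$, because $\X_c(\R^n)$ is a perfect Lie algebra (one can write $f(x)\partial_i$ as a combination of brackets using the Euler vector field trick localized by a cutoff, e.g.\ $[\,x_i \chi(x)\partial_i,\; g(x)\partial_j\,]$ type identities). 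Using a partition of unity subordinate to finitely many charts covering $\supp(v)$, one writes $v = \sum_k v_k$ with each $v_k$ supported in a chart inside $\Omega$, and each $v_k = \sum_l [u_{kl}, v'_{kl}]$ with all $u_{kl}, v'_{kl}$ supported in $\Omega$, hence with supports disjoint from $\supp(w)$.

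Then for each bracket $[u_{kl},v'_{kl}]$, applying the cocycle identity with the triple $(u_{kl}, v'_{kl}, w)$ and using $[u_{kl},w] = [v'_{kl},w] = 0$ gives $\psi([u_{kl},v'_{kl}], w) = 0$. Summing, $\psi(v,w) = 0$, which is the claim. The main obstacle — and the only genuinely nontrivial input — is the perfectness statement $\X_c(U) = [\X_c(U),\X_c(U)]$ for $U$ diffeomorphic to $\R^n$, together with the bookkeeping to ensure the auxiliary fields stay supported away from $\supp(w)$; both are handled by a cutoff/partition-of-unity argument, and perfectness of $\X_c(\R^n)$ is classical (it follows, for instance, from the analogous statement for $\Diff_c$ or directly from explicit bracket identities). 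No appeal to continuity of $\psi$ is needed for this lemma, since the argument is purely algebraic once the local decomposition into brackets is in hand.
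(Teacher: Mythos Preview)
Your proposal is correct and follows essentially the same approach as the paper: choose an open set $U_1$ containing $\supp(v)$ and disjoint from $\supp(w)$, invoke perfectness of $\X_c(U_1)$ to write $v$ as a finite sum of brackets of vector fields supported in $U_1$, and then apply the cocycle identity (noting that each auxiliary field commutes with $w$). The paper cites perfectness of $\X_c(U_1)$ directly rather than reducing to charts via a partition of unity, but this is only a cosmetic difference.
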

	\begin{proof}
		Let $v,w \in \X_c(M)$ have disjoint support. Then we can find open subsets $U_1, U_2 \subseteq M$ with $U_1 \cap U_2 = \emptyset$ such that $\supp(v) \subseteq U_1$ and $\supp(w) \subseteq U_2$. Since the Lie algebra $\X_c(U_1)$ is perfect (\cite[Thm.\ 1.4.3]{Banyaga_book_diffeo} or \cite[Cor.\ 1]{Bas_inf_nat_bundles}), there exist $v^{i}_1, v^{i}_2 \in \X_c(U_1)$ s.t.\ 
		$v = \sum_{i=1}^{N} [v^{i}_1, v^{i}_2]$. Since $\psi \colon \X_c(M) \times \X_c(M) \rightarrow \R$ satisfies the cocycle identity
		\begin{equation}\label{eq: 2cocycle}
			\psi([u,v], w) = \psi([u,w],v) + \psi(u, [v,w]),
		\end{equation}
		and since $w$ has support disjoint from that of $v^{i}_1$ and $v^{i}_2$, we have
		$$ \psi(v, w) = \sum_{i=1}^{N} \psi([v^{i}_1, v^{i}_2], w) = \sum_{i=1}^{N}\psi([v^{i}_1, w], v^{i}_2) + \psi(v^{i}_1, [v^{i}_2, w]) = 0,$$
	as required.
	\end{proof}
	\noindent
	Let $\g$ be a locally convex Lie algebra, and let $\g'$ be its continuous dual. We consider $\g'$ as a 
	$\g$-module with the coadjoint action, defined by
	$(\xi \cdot \alpha)(\eta) := -\alpha ([\xi, \eta])$ for $\alpha \in \g'$. For a continuous cochain $\psi \in C_{\ct}^q(\g, \R)$ with values in $\R$, 
	we define a (not necessarily continuous) cochain $\hat{\psi}\in C^{q-1}(\g, \g')$ by
	\begin{equation}\label{eq: hat_map}
		\hat{\psi}(\xi_1, \ldots, \xi_{q-1})(\eta) := \psi(\xi_1, \ldots, \xi_{q-1}, \eta).
	\end{equation}
	\noindent
	Let $C^{\bullet -1}(\g, \g')$ denote the shifted complex with $C^{q-1}(\g,\g')$ in degree $q$, 
	and with differential from the $q^{\mrm{th}}$ to the $(q+1)^{\mrm{th}}$ degree given by	$d_{\g} \colon C^{q-1}(\g,\g') \rightarrow C^{q}(\g,\g')$ as in \eqref{eq: differential_la_coh}. We note the following:
	\begin{proposition}
	The assignment $\psi \mapsto \hat{\psi}$ defines a morphism $C^{\bullet}_{\ct}(\g, \R) \rightarrow C^{\bullet -1}(\g, \g')$ of cochain complexes. Moreover, the induced map $H^2_{\ct}(\g, \R) \rightarrow H^1(\g,\g')$ is injective.
	\end{proposition}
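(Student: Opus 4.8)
The plan is to check directly that $\psi\mapsto\hat{\psi}$ intertwines the two differentials (this is the morphism of complexes), and then to read off injectivity on $H^2$ from the elementary but decisive observation that the $0$-cochains $C^0(\g,\g')=\g'$ are already the \emph{continuous} functionals on $\g$. I do not expect a genuine conceptual obstacle anywhere; the only thing that needs care is the sign bookkeeping in the chain-map identity.

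First I would record that $\hat{\psi}$ is well defined: for fixed $\xi_1,\dots,\xi_{q-1}$ the map $\eta\mapsto\psi(\xi_1,\dots,\xi_{q-1},\eta)$ is continuous and linear, hence an element of $\g'$, so $\hat{\psi}\in C^{q-1}(\g,\g')$ (though not necessarily a continuous map $\g^{q-1}\to\g'$). To see that $\widehat{d_\g\psi}=d_\g\hat{\psi}$, I would evaluate both sides on $(\xi_0,\dots,\xi_{q-1})$ together with an auxiliary vector $\eta$ and expand $d_\g\psi(\xi_0,\dots,\xi_{q-1},\eta)$ via \eqref{eq: differential_la_coh}. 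Because $\R$ is a trivial module the ``module'' sum vanishes, and I would split the surviving bracket sum $\sum_{0\le i<j\le q}$ (writing $\xi_q:=\eta$) according to whether $j<q$ or $j=q$. The terms with $j<q$ keep $\eta$ in the last slot and reassemble exactly into the bracket part of $d_\g\hat{\psi}(\xi_0,\dots,\xi_{q-1})(\eta)$. In each term with $j=q$ the bracket is $[\xi_i,\eta]$; moving it to the last slot costs a sign $(-1)^{q-1}$, and then the defining relation $(\xi\cdot\alpha)(\eta)=-\alpha([\xi,\eta])$ of the coadjoint action turns these terms, after collecting signs, into $\sum_{i}(-1)^i\bigl(\xi_i\cdot\hat{\psi}(\xi_0,\dots,\widehat{\xi_i},\dots,\xi_{q-1})\bigr)(\eta)$ --- precisely the ``module'' part of $d_\g\hat{\psi}$ that is invisible in the $\R$-valued complex. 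The hard part here is purely combinatorial sign-tracking; in degree $2$ everything collapses to the single identity
\[
d_\g\psi(\xi_0,\xi_1,\eta)\;=\;\bigl(\xi_0\cdot\hat{\psi}(\xi_1)\bigr)(\eta)-\bigl(\xi_1\cdot\hat{\psi}(\xi_0)\bigr)(\eta)-\hat{\psi}([\xi_0,\xi_1])(\eta),
\]
whose right-hand side is $d_\g\hat{\psi}(\xi_0,\xi_1)(\eta)$.

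For injectivity of the induced map $H^2_\ct(\g,\R)\to H^1(\g,\g')$, I would take a continuous $2$-cocycle $\psi$ whose class maps to $0$, so that $\hat{\psi}=d_\g\beta$ for some $\beta\in C^0(\g,\g')$. The key point is that $C^0(\g,\g')$ \emph{is} $\g'$, the continuous dual, so $\beta$ is automatically a continuous $1$-cochain of the $\R$-valued complex, i.e.\ $\beta\in C^1_\ct(\g,\R)$. Unwinding definitions, for all $\xi,\eta\in\g$
\[
\psi(\xi,\eta)=\hat{\psi}(\xi)(\eta)=(d_\g\beta)(\xi)(\eta)=(\xi\cdot\beta)(\eta)=-\beta([\xi,\eta])=(d_\g\beta)(\xi,\eta),
\]
where on the far right $d_\g$ is the trivial-coefficient differential of $C^\bullet_\ct(\g,\R)$. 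Hence $\psi=d_\g\beta$ already in $C^\bullet_\ct(\g,\R)$, so $[\psi]=0$ in $H^2_\ct(\g,\R)$, which is injectivity. The one subtlety worth flagging --- and the only place the subscript ``$\ct$'' enters --- is that the $0$-cochains of the $\g'$-valued complex are exactly the continuous functionals and nothing larger, so a primitive obtained downstairs is a legitimate primitive upstairs.
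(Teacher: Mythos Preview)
Your proposal is correct and follows essentially the same approach as the paper: the chain-map identity is verified by the direct combinatorial computation you outline (which the paper summarizes as ``a straightforward computation using \eqref{eq: differential_la_coh}''), and the injectivity on $H^2$ is deduced exactly as in the paper from the observation that $C^0(\g,\g')=\g'=C^1_{\ct}(\g,\R)$, so that a primitive for $\hat{\psi}$ is automatically a continuous primitive for $\psi$.
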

	\begin{proof}
	That $\psi \mapsto \hat{\psi}$ defines a morphism $C^{\bullet}_{\ct}(\g, \R) \rightarrow C^{\bullet -1}(\g, \g')$ of cochain complexes follows from a straightforward computation using \eqref{eq: differential_la_coh}. For the final assertion, note that $\psi \mapsto \hat{\psi}$ is injective and restricts to an isomorphism $C^1_{\ct}(\g, \R) \rightarrow C^0(\g,\g')$. This implies that a 2-cocycle $\psi \in C^2_{\ct}(\g,\R)$ is a coboundary if and only if $\hat{\psi} \in C^1(\g,\g')$
	is a coboundary
	\end{proof}
	
	\noindent
	So if $\psi \in C_{\ct}^2(\X_c(M),\R)$ is a continuous 2-cocycle on $\X_c(M)$ with trivial coefficients, 
	then $\hat{\psi} \in C^1(\X_c(M),\X_c(M)')$ is a 1-cocycle with values in the continuous dual space $\X_c(M)'$, and
	\begin{equation}
		\hat{\psi}([v,w]) = v \cdot \hat{\psi}(w) - w \cdot \hat{\psi}(v)
	\end{equation}
	for all $v, w \in \X_{c}(M)$.
	
	\begin{lemma}\label{lem: extension_of_1cocycle}
		Let $\psi \in C_{\ct}^2(\X_c(M))$ be a $2$-cocycle. Then $\hat{\psi} \in C^1(\X_c(M), \X_c(M)^\prime)$ extends to a 
		$1$-cocycle $\hat{\psi} \in C^1(\X(M), \X_c(M)^\prime)$.
	\end{lemma}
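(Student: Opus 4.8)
The plan is to extend $\hat{\psi}$ by a locality argument, the engine being \Fref{lem: diagonal}, which tells us that $\psi$ is diagonal. Given $X \in \X(M)$ and $\eta \in \X_c(M)$, I would choose any $X' \in \X_c(M)$ that agrees with $X$ on an open neighbourhood of $\supp(\eta)$ — for instance $X' = \chi X$ with $\chi \in C^\infty_c(M)$ identically $1$ near $\supp(\eta)$ — and set
$$ \hat{\psi}(X)(\eta) := \psi(X', \eta). $$
This does not depend on the choice of $X'$: if $X'_1, X'_2 \in \X_c(M)$ both agree with $X$ near $\supp(\eta)$, then $X'_1 - X'_2 \in \X_c(M)$ has support disjoint from $\supp(\eta)$, whence $\psi(X'_1 - X'_2, \eta) = 0$ by diagonality. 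For $X \in \X_c(M)$ one may take $X' = X$, so the new $\hat{\psi}$ restricts to the original one on $\X_c(M)$.

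Next I would record that $\hat{\psi}(X)$ is a linear functional on $\X_c(M)$ and that $X \mapsto \hat{\psi}(X)$ is linear: for any finite collection of vector fields one can select a single compactly supported field that localises $X$ simultaneously near all the supports involved, and the required identities then follow from bilinearity of $\psi$. To see that $\hat{\psi}(X) \in \X_c(M)'$, recall that $\X_c(M)$ carries the LF-topology of the inductive limit of the Fréchet spaces $\X_K(M)$ of fields supported in a compact set $K$; on each $\X_K(M)$ we have $\hat{\psi}(X)(\eta) = \psi(X_K, \eta)$ for a fixed $X_K \in \X_c(M)$ agreeing with $X$ near $K$, and this is continuous in $\eta$ since $\psi$ is continuous. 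Hence $\hat{\psi}(X)$ is continuous, so that $\hat{\psi} \in C^1(\X(M), \X_c(M)')$ as required.

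It then remains to verify the cocycle identity
$$ \hat{\psi}([v,w]) = v\cdot\hat{\psi}(w) - w\cdot\hat{\psi}(v), \qquad v,w \in \X(M), $$
where $\X_c(M)'$ is viewed as an $\X(M)$-module via the coadjoint action; this is legitimate because $\X_c(M)$ is an ideal in $\X(M)$, so that $[X,\eta] \in \X_c(M)$ with $\supp([X,\eta]) \subseteq \supp(\eta)$ for $X \in \X(M)$ and $\eta \in \X_c(M)$. Fixing $\eta \in \X_c(M)$ and choosing $v', w' \in \X_c(M)$ that agree with $v, w$ respectively on a neighbourhood of $\supp(\eta)$, locality of the Lie bracket gives that $[v',w']$ agrees with $[v,w]$ near $\supp(\eta)$, and that $[v',\eta] = [v,\eta]$ and $[w',\eta] = [w,\eta]$ on all of $M$ (both sides being supported in $\supp(\eta)$). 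Evaluating both sides of the identity on $\eta$ and unwinding the definition of $\hat{\psi}$ reduces the claim to
$$ \psi([v',w'], \eta) = \psi(v', [w',\eta]) - \psi(w', [v',\eta]), $$
which is exactly the $2$-cocycle identity \eqref{eq: 2cocycle} for $\psi$ applied to $v', w', \eta \in \X_c(M)$, combined with the antisymmetry of $\psi$.

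I expect no serious obstacle. The only point requiring a little care is checking that the auxiliary compactly supported fields used to evaluate $\hat{\psi}([v,w])$, $\hat{\psi}(v)$ and $\hat{\psi}(w)$ on $\eta$ may all be taken to be the single chosen pair $v', w'$, which is guaranteed by the independence-of-choice established at the outset. Thus the substance of the lemma is carried entirely by \Fref{lem: diagonal} and the fact that $\X_c(M)$ is an ideal of $\X(M)$.
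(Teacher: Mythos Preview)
Your proposal is correct and follows essentially the same route as the paper: both arguments define the extension by cutting off $X$ with a bump function equal to $1$ near the support of the test field, invoke \Fref{lem: diagonal} for well-definedness, and check continuity by restricting to the Fr\'echet pieces $\X_K(M)$ of the LF-structure. Your write-up is in fact more explicit than the paper's in spelling out the verification of the cocycle identity via localisation of both $v$ and $w$; the paper merely asserts that \eqref{eq: OneCocycle} holds ``because $\psi$ is diagonal''.
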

	\begin{proof}
		Let $\{K_{i}\}_{i \in \N}$ be an exhaustion of $M$ by compact subsets. For $v \in \X(M)$ and $i \in \N$, we define $\hat{\psi}_i(v) \in \X_{K_i}(M)'$ by 
		$\hat{\psi}_i(v)(w) := \psi(f_{K_i}v, w)$ for an arbitrary $f_{K_i} \in C^\infty_c(M)$ that satisfies $f_{K_i}(x) = 1$ for all $x$ in some open neighborhood $U_i$ of $K_i$. This is independent of $f_{K_i}$ by \Fref{lem: diagonal}. The various $\hat{\psi}_i(v)$ define an element $\hat{\psi}(v)$ of $\X_{c}(M)^\prime$. Indeed,
		$\X_c(M)$ is the locally convex inductive limit $\X_c(M) = \varinjlim_i X_{K_i}(M)$, and
		the functionals $\hat{\psi}_i(v)$ are compatible in the sense that the restriction of $\psi_j(v)$ to $\X_{K_i}(M) \subseteq \X_{K_j}(M)$ coincides with $\psi_i(v)$ if $K_i \subseteq K_j$. The linear map $\hat{\psi} : \X(M) \to \X_c(M)^\prime$ obtained in this way clearly extends the original map $\hat{\psi} : \X_c(M) \to \X_c(M)^\prime$ and because $\psi$ is diagonal, it satisfies the cocycle identity
		\begin{equation}\label{eq: OneCocycle}
			\hat{\psi}([v,w]) = v \cdot \hat{\psi}(w) - w \cdot \hat{\psi}(v)
		\end{equation}
		for the action of $\X(M)$ on $\X_{c}(M)'$ by $(v\cdot \phi) (u) = \phi([-v,u])$. 
	\end{proof}
	
	\begin{remark}\label{rem: Peetre}
	Because the continuous $2$-cocycle $\psi$ is diagonal (\Fref{lem: diagonal}),
	the map ${\hat{\psi} : \X(M) \to \X_c(M)^\prime}$ from \Fref{lem: extension_of_1cocycle}
	is \emph{support decreasing}, in the sense that $\supp(\hat{\psi}(v)) \subseteq \supp(v)$ for any $v \in \X(M)$.
	By Peetre's Theorem, we conclude that the linear map $\hat{\psi} : \X(M) \to \X_c(M)^\prime$ is a 
	differential operator of locally finite degree. This follows from \cite[Thm.~1]{Peetre}; the locally finite set of discontinuous points for 
	$\hat{\psi} \colon \X(M) \rightarrow \X_{c}(M)'$ is empty because $\psi \colon \X_c(M) \times \X_c(M) \rightarrow \R$ 
	is continuous.
	\end{remark}
	
	\subsection{Manifolds $M$ of dimension $\dim(M) > 1$}\label{sec: cohom_dim_geq2}
	\noindent
	We now proceed with the proof of Theorem~B for manifolds of dimension $\dim(M) > 1$. 
	Note that we will occasionally use Einstein summation convention, so repeated indices imply a sum.

	\subsubsection{The local setting}
	\noindent
	We begin with the case where $M = \R^n$. % for some integer $n >1$. 
	In \Fref{lem: second_la_cohom_vanishes_locally} below, we will show that $H_{\ct}^2(\X_c(\R^n), \R) = \{0\}$ for $n>1$.
	This should be regarded as the local analog of Theorem~B for $\dim(M) > 1$. The analogous statement in Gelfand--Fuks cohomology follows e.g.\ from \cite[Thm.\ 3.12]{Lukas_GF_cohom}.\\
	
	\noindent The following is an adaptation of a result in \cite{BasCorneliaLeo_CEdiv}, and we thank Cornelia Vizman and Leonid Ryvkin for illuminating discussions on this topic.\\
	
	\noindent
	%Before proceeding with the proof of \Fref{lem: second_la_cohom_vanishes_locally}, 
	We first make some preliminary observations. 
	The Lie algebra $W_n \subseteq \X(\R^n)$ of vector fields with polynomial coefficients is $\Z$-graded, with $W_n^k$ consisting of vector fields with homogeneous polynomial coefficients of degree $k+1$ for $k\geq -1$, and $W_n^k = \{0\}$ for $k < -1$. Since $[W_n^k, W_n^l] \subseteq W_n^{k+l}$, the constant vector fields $W_n^{-1}$ decrease the degree by $1$. Also, every $W_n^{k}$ is a representation of the Lie algebra $W_n^0$ of linear vector fields, which we identify with $\gl(n, \R)$ via the isomorphism $\gl(n, \R) \to W_n^0$ that maps $(A^{\mu}_{\nu})_{\mu, \nu = 1}^n$ to the linear vector field $a^{\mu}_{\nu}x^{\nu}\partial_{\mu}$ with constant coefficients $a^\mu_\nu = -A^{\mu}_{\nu}$. Under this identification, we have $W_n^k \cong S^{k+1}(\R^d)^\ast \otimes \R^d$ as $\gl(n, \R)$-representation for every $k \in \N_{\geq 0}$, where $S^{k+1}(\R^d)^\ast$ denotes the space of homogeneous polynomials on $\R^n$ of degree $k+1$. The Euler vector field $E = x^{\mu}\partial_{\mu}$ acts on $v\in W_n^{k}$ by $[E,v] = k v$.
	
	\noindent
	\begin{lemma}\label{lem: action_euler_on_translation_inv}
		The space $\X_c(\R^n)^\prime$ of translation invariant elements is equivalent to the space ${(\R^n)^\ast\otimes \wedge^n(\R^n)^\ast}$ as a $\mathfrak{gl}(n,\R)$-representation. In particular, the Euler vector field $E \in W_n^0$ acts on a translation-invariant $\phi \in \X_c(\R^n)^\prime$ by 
		\begin{equation}\label{eq: euler_on_trlns_inv}
			E\cdot \phi = (n+1)\phi.
		\end{equation}
	\end{lemma}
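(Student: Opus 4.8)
The plan is to first make the translation-invariant part of $\X_c(\R^n)'$ completely explicit, and then compute the induced $\gl(n,\R)$-action by hand. Since $T\R^n$ is trivial we may write $\X_c(\R^n) = C^\infty_c(\R^n,\R^n)$, whose continuous dual consists of $n$-tuples of distributions $(T_1,\dots,T_n)$ acting by $\phi(v^\mu\partial_\mu) = \langle T_\mu, v^\mu\rangle$. The translations $\partial_\nu \in W_n^{-1}$ act on $v = v^\mu\partial_\mu$ by $[\partial_\nu,v] = (\partial_\nu v^\mu)\partial_\mu$, so in the coadjoint representation $(\partial_\nu\cdot\phi)(v) = -\langle T_\mu,\partial_\nu v^\mu\rangle = \langle \partial_\nu T_\mu, v^\mu\rangle$; hence $\phi$ is translation-invariant precisely when $\partial_\nu T_\mu = 0$ for all $\mu,\nu$. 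Since $\R^n$ is connected, this forces each $T_\mu$ to be a constant multiple $c_\mu\, d^n x$ of Lebesgue measure. Thus I would identify the translation-invariant functionals with the family $\phi_c(v) := c_\mu\int_{\R^n}v^\mu\,d^nx$ indexed by $c \in (\R^n)^\ast$, the point being that the constant top form $d^nx$ spans $\wedge^n(\R^n)^\ast$, so $c\otimes d^nx \mapsto \phi_c$ is a linear isomorphism from $(\R^n)^\ast\otimes\wedge^n(\R^n)^\ast$ onto this subspace.

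The next step is to verify that this isomorphism intertwines the two $\gl(n,\R)$-actions. Given $A\in\gl(n,\R)$, the associated $X_A = a^\mu_\nu x^\nu\partial_\mu\in W_n^0$ with $a^\mu_\nu = -A^\mu_\nu$ satisfies $[X_A,v]^\mu = a^\rho_\sigma x^\sigma\partial_\rho v^\mu - a^\mu_\rho v^\rho$, and I would substitute this into $\phi_c$ and integrate by parts in the first term, using $\int_{\R^n}x^\sigma\partial_\rho v^\mu\,d^nx = -\delta^\sigma_\rho\int_{\R^n}v^\mu\,d^nx$ (no boundary terms since $v$ is compactly supported). This gives $(X_A\cdot\phi_c)(v) = -\phi_c([X_A,v]) = \phi_{c'}(v)$ with $c' = -\mathrm{tr}(A)\,c - A^{\mathsf T}c$, which is exactly the action of $A$ on $c\otimes d^nx$: the coadjoint action $c\mapsto -A^{\mathsf T}c$ on $(\R^n)^\ast$ twisted by the scalar $-\mathrm{tr}(A)$ on $\wedge^n(\R^n)^\ast$, the latter being the infinitesimal transformation law of the volume form. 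This proves the asserted equivalence of representations.

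Finally, \eqref{eq: euler_on_trlns_inv} drops out: under $\gl(n,\R)\cong W_n^0$ the Euler field $E = x^\mu\partial_\mu$ corresponds to $A = -I$, so the formula above gives $E\cdot\phi_c = \phi_{-\mathrm{tr}(-I)c-(-I)^{\mathsf T}c} = \phi_{(n+1)c} = (n+1)\phi_c$. I do not expect any genuine obstacle here; the only step needing a little care is the identification of the translation-invariant functionals, where one must invoke the description of $\X_c(\R^n)'$ as vector-valued distributions together with the elementary fact that a distribution with vanishing gradient on a connected open set is constant. The remainder is index bookkeeping and a single integration by parts.
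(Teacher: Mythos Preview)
Your proof is correct and follows essentially the same approach as the paper's: both identify the translation-invariant functionals explicitly as $\phi_c(v)=c_\mu\int v^\mu\,d^nx$, compute the $\gl(n,\R)$-action to obtain $c'=-\mathrm{tr}(A)c-A^{\mathsf T}c$, and then specialize to $A=-I$ for the Euler field. The only organizational difference is that the paper first writes out the coadjoint action of $W_n^0$ on a general $\phi$ and then imposes translation invariance, whereas you impose translation invariance first and then compute---but the content is the same.
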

	\begin{proof}
		The linear vector field $a^{\mu}_{\nu}x^{\nu}\partial_{\mu}$ corresponding to $A = (A^{\mu}_{\nu})_{\mu,\nu=1}^n$ acts on $\phi \in \X_c(\R^n)^\prime$ according to 
		\begin{align}\label{eq: action_lin_vfields_on_distr}
			\begin{split}
				(a^{\mu}_{\nu}x^{\nu}\partial_{\mu} \cdot \phi )(u^{\sigma}\partial_{\sigma}) 
				&= \phi(-a^{\mu}_{\nu}[x^{\nu}\partial_{\mu}, u^{\sigma}\partial_{\sigma}]) \\
				&= -\phi(a^\mu_\nu x^\nu (\partial_\mu u^\sigma) \partial_\sigma) + \phi(a^\mu_\sigma u^\sigma \partial_\mu)\\
				&= -\mathrm{tr}(A)\phi(u^{\sigma}\partial_{\sigma}) + \phi(a^{\mu}_{\sigma}u^{\sigma}\partial_{\mu} ) + (\partial_{\mu}\cdot \phi)(a^{\mu}_{\nu}x^{\nu}u^{\sigma}\partial_{\sigma}),
			\end{split}
		\end{align}
		where the last equality uses $(\partial_\mu \cdot \phi)(a^\mu_\nu x^\nu u^\sigma \partial_\sigma) = - \phi(a_\nu^\mu u^\sigma \partial_\sigma) \delta_{\mu}^\nu - \phi(a_\nu^\mu x^\nu (\partial_\mu u^\sigma)\partial_\sigma)$. Assume now that $\phi$ is translation-invariant. Then $(\partial_{\mu}\cdot \phi)(a^{\mu}_{\nu}x^{\nu}u^{\sigma}\partial_{\sigma}) = 0$, and $\phi(u^\sigma \partial_\sigma) = b_\sigma I(u^\sigma)$ for some vector $b = (b_\sigma)_{\sigma=1}^n \in \R^n$, where $I(f) := \int_{\R^n} fdx$ for $f \in C^\infty_c(\R^n)$. It follows using \eqref{eq: action_lin_vfields_on_distr} that
		$$ (a^{\mu}_{\nu}x^{\nu}\partial_{\mu} \cdot \phi )(u^{\sigma}\partial_{\sigma}) = \big(-\mrm{tr}(A) b_\sigma + a_\sigma^\mu b_\mu\big) I(u^\sigma) = b^\prime_\sigma I(u^\sigma), $$
		where $b^\prime := -\mrm{tr}(A^T) b - A^T b$. This corresponds to the natural action of $\gl(n, \R)$ on $(\R^n)^\ast\otimes \wedge^n(\R^n)^\ast$ under the isomorphism $\gl(n, \R) \cong W_n^0$ specified above, so the assertion follows.
	\end{proof}
	
	\begin{lemma}\label{lem: second_la_cohom_vanishes_locally}
		Let $n > 1$ be an integer. Then $H_{\ct}^2(\X_c(\R^n), \R) = \{0\}$.
	\end{lemma}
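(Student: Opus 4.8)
The plan is to show that the given continuous $2$-cocycle $\psi$ is a coboundary. By the injectivity statement proved just above, it suffices to show that the associated $1$-cocycle $\hat\psi$ is a coboundary, and by Lemma~\ref{lem: extension_of_1cocycle} I may work with its extension $\hat\psi\colon\X(\R^n)\to\X_c(\R^n)'$ and produce $\phi_\ast\in\X_c(\R^n)'$ with $\hat\psi=d\phi_\ast$. The key observation is that $\hat\psi$ is weak-$\ast$ continuous: by the construction in Lemma~\ref{lem: extension_of_1cocycle}, for fixed $w$ the map $v\mapsto\hat\psi(v)(w)$ is the composition of multiplication by a fixed cutoff function, which is continuous $\X(\R^n)\to\X_c(\R^n)$, with the continuous functional $\psi(\,\cdot\,,w)$. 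Hence $\hat\psi$ is determined by its restriction to the dense subalgebra $W_n\subseteq\X(\R^n)$ of vector fields with polynomial coefficients, and on $W_n$ one can exploit the $\Z$-grading together with the action of $\gl(n,\R)\cong W_n^0$, in the spirit of Gelfand--Fuks computations; the relevant fact about the coadjoint module is Lemma~\ref{lem: action_euler_on_translation_inv}.

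I would then kill $\hat\psi$ on $W_n$ degree by degree. In degree $-1$: since $[\partial_i,\partial_j]=0$, the cocycle identity says that $(\hat\psi(\partial_i))_i\in\X_c(\R^n)'$, read componentwise as a distributional $1$-form, is closed; the Poincar\'e lemma for currents on $\R^n$ gives $\phi\in\X_c(\R^n)'$ with $\hat\psi(\partial_i)=\partial_i\cdot\phi$, so after subtracting $d\phi$ we may assume $\hat\psi$ vanishes on $W_n^{-1}$. Inductively in $k\ge0$: for $v\in W_n^k$ and constant $w$ the identity $\hat\psi([w,v])=w\cdot\hat\psi(v)$ together with $[w,v]\in W_n^{k-1}$ shows that $\hat\psi(v)$ is translation-invariant once $\hat\psi$ kills $W_n^{k-1}$; by Lemma~\ref{lem: action_euler_on_translation_inv} these functionals form the finite-dimensional module $(\R^n)^\ast\otimes\wedge^n(\R^n)^\ast$, on which $I\in\gl(n,\R)$ (corresponding to $-E$) acts by the scalar $-(n+1)$. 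In degree $0$ the cocycle identity makes $\hat\psi|_{W_n^0}$ a $\gl(n,\R)$-cocycle into this module; evaluating it on $[I,A]=0$ and using that $I$ acts invertibly exhibits it as a coboundary $d\phi_0$ with $\phi_0\in\X_c(\R^n)'$, and subtracting this we may assume $\hat\psi$ vanishes on $W_n^{-1}\oplus W_n^0$. For $k\ge1$ the cocycle identity with $A\in W_n^0$ then reads $\hat\psi([A,v])=A\cdot\hat\psi(v)$, so $\hat\psi|_{W_n^k}$ is a $\gl(n,\R)$-module map $W_n^k\cong S^{k+1}(\R^n)^\ast\otimes\R^n\to(\R^n)^\ast\otimes\wedge^n(\R^n)^\ast$. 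If $k\neq n+1$ the two sides have distinct central characters ($I$ acts by $-k$ and by $-(n+1)$), so this map is zero. If $k=n+1$ the central characters agree, but one argues directly: $W_n^{n+1}\cong S^{n+2}(\R^n)^\ast\otimes\R^n$ splits as $S^{n+1}(\R^n)^\ast$ together with an irreducible $\gl(n,\R)$-module of highest weight $e_1-(n+2)e_n$ (a dominant weight precisely when $n\ge2$), and comparing these with the highest weight $(-1,\dots,-1,-2)$ of $(\R^n)^\ast\otimes\wedge^n(\R^n)^\ast$ shows $\Hom_{\gl(n,\R)}\!\big(W_n^{n+1},(\R^n)^\ast\otimes\wedge^n(\R^n)^\ast\big)=0$ exactly because $n>1$. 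Hence $\hat\psi$ vanishes on all of $W_n$, therefore on $\X(\R^n)$ by weak-$\ast$ continuity; so $\hat\psi=d(\phi+\phi_0)$, and restricting to $\X_c(\R^n)$ we conclude that $\psi$ is a coboundary, i.e. $H^2_\ct(\X_c(\R^n),\R)=\{0\}$.

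The crux is the degree-$(n+1)$ step: it is the one degree where the central-character comparison is inconclusive, and also the only place where the hypothesis $n>1$ enters — for $n=1$ the corresponding Hom space is one-dimensional, which is precisely the source of the nontrivial cohomology in dimension one. Beyond that, the main care needed is bookkeeping: checking that the Poincar\'e primitive and the translation-invariant primitive genuinely lie in $\X_c(\R^n)'$, that subtracting their coboundaries is compatible with passing between the $\X_c(\R^n)$- and $\X(\R^n)$-complexes, and that the induction is run in the order $-1,0,1,2,\dots$ so that the translation-invariance input is available at each stage. (Peetre's theorem, recorded in Remark~\ref{rem: Peetre}, gives an alternative handle on locality, but for this argument only the weak-$\ast$ continuity of $\hat\psi$ is actually used.)
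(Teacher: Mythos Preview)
Your proposal is correct and follows essentially the same strategy as the paper: extend $\hat\psi$ to $\X(\R^n)$, then kill it on $W_n$ degree by degree using the Poincar\'e lemma at $k=-1$, a coboundary correction at $k=0$, the Euler/identity eigenvalue at generic $k$, and a $\gl(n,\R)$-intertwiner argument at $k=n+1$. The only notable difference is how you pass from $W_n$ to all of $\X(\R^n)$: the paper invokes Peetre's theorem (Remark~\ref{rem: Peetre}) to write $\hat\psi$ as a locally finite differential operator and concludes by showing all coefficients vanish, whereas you use weak-$\ast$ continuity of $v\mapsto\hat\psi(v)(w)$ together with density of $W_n$ in $\X(\R^n)$; both routes are valid and lead to the same conclusion.
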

	\begin{proof}
		Let $\psi$ be a continuous $2$-cocycle on $\X_{c}(\R^n)$, and let $\hat{\psi}$ 
		be the corresponding 1-cocycle in $C^1(\X(\R^n), \X_c(\R^n)^\prime)$, obtained using \Fref{lem: extension_of_1cocycle}. By \Fref{rem: Peetre}, we can expand $\hat{\psi}$ into a locally finite sum as
		\begin{equation}\label{eq: DOexpansion}
			\hat{\psi}(v) = \sum_{\vec{\sigma} \in \mathbb{N}_{\geq 0}^{n}}
			\left(\frac{\partial^{|\vec{\sigma}|}}{\partial x^{\vec{\sigma}}} v^{\mu}\right) 
			\phi^{\vec{\sigma}}_{\mu},
		\end{equation}
		where $\phi^{\vec{\sigma}}_{\mu} \in \X_{c}(\R^n)^\prime$. Here 
		$\frac{\partial^{|\vec{\sigma}|}}{\partial x^{\vec{\sigma}}} := (\frac{\partial}{\partial x^{1}})^{\sigma_1} \cdots (\frac{\partial}{\partial x^n})^{\sigma_n}$ is a higher order partial derivative.
		We show for any integer $k \geq -1$ that $\hat{\psi}$ is cohomologous to a 1-cocycle that vanishes on the subspace $W_n^{\leq k}$ of vector fields with polynomial coefficients 
		of degree at most $k+1$.
		
		\paragraph{The case $k=-1$.}
		The cocycle identity \eqref{eq: OneCocycle} for constant vector fields $v = \partial_{\mu}$ and $w = \partial_{\nu}$ yields 
		\begin{equation}\label{eq:distributionclosed}
			\partial_{\mu} \cdot \phi^{\vec{0}}_{\nu} - \partial_{\nu}\cdot\phi^{\vec{0}}_{\mu} = 0.
		\end{equation}
		We identify $\X_{c}(\R^n)' \simeq \mc{D}^\prime(\R^n)\otimes (\R^n)^\ast$ with $n$ copies of the
		distributions $\mc{D}^\prime(\R^n)$ by setting 
		$$(\zeta_{\sigma}\otimes dx^{\sigma}) (v^{\mu}\partial_{\mu}) := \zeta_{\sigma}(v^{\sigma}), \qquad \text{ for }\zeta_\sigma \in \mcD^\prime(\R^n) \text{ and } v^\mu \in C^\infty_c(\R^n).$$
		The action of $\partial_{\mu}$ on $\X_{c}(\R^n)'$ is then simply given by differentiating the components in $\mcD^\prime(\R^n)$, so that for $\phi^{\vec{0}}_{\nu} =  \phi^{\vec{0}}_{\nu\sigma} \otimes dx^{\sigma}$ we have $\partial_{\mu}\cdot \phi^{\vec{0}}_{\nu} = \partial_{\mu}\phi^{\vec{0}}_{\nu\sigma} \otimes dx^{\sigma}$. Indeed, we compute that
		\begin{align*}
			(\partial_{\mu}\cdot\phi^{\vec{0}}_{\nu})(X^{\tau}\partial_{\tau}) 
			&= \phi^{\vec{0}}_{\nu}(-[\partial_{\mu}, X^{\tau}\partial_{\tau}]) = \phi^{\vec{0}}_{\nu\sigma}\otimes dx^{\sigma}(-(\partial_{\mu}X^{\tau})\partial_{\tau}) \\
			&= \phi^{\vec{0}}_{\nu\sigma}(-(\partial_{\mu}X^{\sigma})) = (\partial_{\mu}\phi^{\vec{0}}_{\nu\sigma})(X^{\sigma}) =  (\partial_{\mu}\phi^{\vec{0}}_{\nu\sigma} \otimes dx^{\sigma})(X^{\tau}\partial_{\tau}).	
		\end{align*}
		\Fref{eq:distributionclosed} therefore yields for each $\sigma$ that $\partial_{\mu}\cdot\phi^{\vec{0}}_{\nu\sigma} - \partial_{\nu}\cdot\phi^{\vec{0}}_{\mu\sigma} = 0$ for all integers $1 \leq \mu, \nu \leq n$. With respect to the differential $d : \Omega_c^{n-p}(\R^n)^\prime \to \Omega_c^{n - (p+1)}(\R^n)^\prime,\; \langle d T, \alpha\rangle := (-1)^{p+1} \langle T, d\alpha\rangle$ on the space of currents (cf.\ \cite[III\S11]{deRham_book_1984}), this means precisely that the current $c_\sigma := \phi_{\mu \sigma}^{\vec{0}} dx^\mu \in \Omega_c^{n-1}(\R^n)^\prime$ is closed, because 
		$$d c_\sigma = \sum_{1 \leq \mu < \nu \leq n} (\partial_\mu \cdot \phi_{\nu \sigma}^{\vec{0}} - \partial_\nu \cdot \phi_{\mu \sigma}^{\vec{0}}) dx^\mu \wedge dx^\nu = 0.$$
		(Identifying $\mcD^\prime(\R^n) \cong \Omega_c^n(\R^n)^\prime$ using the volume form $dx^1 \wedge \cdots \wedge dx^n$ on $\R^n$, for any $T \in \mcD^\prime(\R^n)$ and $\alpha \in \Omega^p(\R^n)$, we interpret $T\alpha$ as element of $\Omega_c^{n-p}(\R^n)^\prime$ via the pairing $\langle T\alpha, \beta\rangle := T(\alpha \wedge \beta)$ for $\beta \in \Omega_c^{n-p}(\R^n)$, cf.\ \cite[p.\ 36]{deRham_book_1984}.) By the Poincar\'e Lemma for currents \cite[IV\S19]{deRham_book_1984}, it follows that there exist distributions $\eta_\sigma \in \mc{D}^\prime(\R^n)$ with $\partial_{\mu} \eta_{\sigma} = \phi^{\vec{0}}_{\mu\sigma}$ for all integers $1 \leq \mu, \sigma \leq n$. The $1$-coboundary $d_\g(\eta_\sigma \otimes dx^\sigma)$ in $C^1(\X(\R^n), \X_c(\R^n)^\prime)$ thus agrees with $\hat{\psi}$ on $\partial_\mu \in W_n^{-1}$:
		$$ d_\g(\eta_\sigma \otimes dx^\sigma)(\partial_\mu) = \partial_\mu \cdot (\eta_\sigma \otimes dx^\sigma) = \partial_\mu \eta_\sigma \otimes dx^\sigma = \phi^{\vec{0}}_{\mu\sigma} \otimes dx^\sigma = \phi^{\vec{0}}_{\mu} = \hat{\psi}(\partial_\mu) $$
		Replacing $\hat{\psi}$ by the $1$-cocycle $\hat{\psi} - d_\g(\eta_{\sigma}\otimes dx^{\sigma})$, we assume from now on that $\hat{\psi}$ vanishes on $W_n^{-1}$.
		
		\paragraph{The case $0 \leq k \leq n$.}
		Suppose that $\hat{\psi}$ vanishes on $W_{n}^{\leq(k-1)}$. Let $v \in W_{n}^{k}$. Since $[\partial_{\mu}, v] \in W_{n}^{k-1}$, the cocycle identity \eqref{eq: OneCocycle} yields $\partial_{\mu} \cdot \hat{\psi}(v) = 0$ for all $\mu$, so that $\hat{\psi}(v)$ is translation invariant. So $E\cdot \hat{\psi}(v) = (n+1)\hat{\psi}(v)$, in view of \Fref{lem: action_euler_on_translation_inv}. From the cocycle identity $\hat{\psi}([E,v]) = E\cdot \hat{\psi}(v) - v\cdot \hat{\psi}(E)$, we find for any $v \in W_n^{k}$ that
		\begin{equation}\label{eq:EulerFixesEverything}
			(n + 1 - k)\hat{\psi}(v) = v \cdot \hat{\psi}(E). 
		\end{equation}
		We consider separately the cases $k = 0$ and $0 < k \leq n$. Suppose that $k = 0$. The preceding then shows that $\hat{\psi}(v) = \frac{1}{n+1} v \cdot \hat{\psi}(E)$ for all $v \in W_n^{0}$. The $0$-cochain $\eta = \frac{1}{n+1}\hat{\psi}(E)$ therefore satisfies $(d_\g \eta)(v) = \hat{\psi}(v)$ for any $v \in W_n^{0}$. Since $E \in W_n^0$, we know that $\hat{\psi}(E)$ is translation-invariant, so we also have $(d_\g \eta)(v) = \frac{1}{n+1}v\cdot \hat{\psi}(E) = 0$ for $v \in W_{n}^{-1}$. Replacing $\hat{\psi}$ by the cohomologous cocycle $\hat{\psi} - d_\g\eta$ if necessary, we may assume that $\hat{\psi}$ vanishes on $W_{n}^{\leq 0}$. Suppose next that $0 < k \leq n$. Then $E \in W_n^{\leq (k-1)}$, so $\hat{\psi}(E) = 0$. Consequently, \eqref{eq:EulerFixesEverything} implies that $\hat{\psi}(v) = 0$ for any $v \in W_n^k$ and hence $\hat{\psi}$ vanishes on $W_{n}^{\leq k}$. Inductively, we thus find that $\hat{\psi}$ vanishes on $W_{n}^{\leq n}$, and that $\hat{\psi}(v)$ is translation invariant for any $v\in W_{n}^{n+1}$.
		
		\paragraph{The case $k = n+1$.} The cocycle identity \eqref{eq: OneCocycle} for $A\in W_n^{0}$ and $v\in W_{n}^{n+1}$ reads $\hat{\psi}([A,v]) = A \cdot \hat{\psi}(v)$, because $\hat{\psi}(A) = 0$. Since $\hat{\psi}(v)$ is translation invariant for any $v \in W_{n}^{n+1}$, we conclude using \Fref{lem: action_euler_on_translation_inv} that the linear map 
		$$\restr{\hat{\psi}}{W_{n}^{n+1}} : W_{n}^{n+1} \rightarrow (\R^n)^\ast\otimes \wedge^{n}(\R^n)^\ast \subseteq \X_c(\R^n)^\prime$$
		is an intertwiner of $\mathfrak{gl}(n,\R)$-representations. The action of $\mf{sl}(n, \R)$ on $ \wedge^{n}(\R^n)^\ast$ is trivial, and we notice that
		$$ \Hom_{\mf{sl}(n, \R)}\left(W_{n}^{n+1}, (\R^n)^\ast\right) \cong \Hom_{\mf{sl}(n,\R)}\left(S^{n+2}({\R^n})^\ast, (\R^n)^\ast \otimes (\R^n)^\ast  \right) = 0,$$
		because $(\R^n)^\ast \otimes (\R^n)^\ast \cong S^2(\R^n)^\ast \oplus \bigwedge^2(\R^n)^\ast$ does not contain the irreducible $\mf{sl}(n,\R)$-representation on $S^{n+2}({\R^n})^\ast$ (cf.\ \cite[Prop.\ 15.15]{Fulton_Harris}). So $\hat{\psi}(v) = 0$ for any $v \in W_{n}^{n+1}$.
		
		\paragraph{The case $k > n+1$.}
		Suppose that $\hat{\psi}$ vanishes on $W_{n}^{\leq(k-1)}$ for $k > n + 1$. Then \eqref{eq:EulerFixesEverything} implies that $\hat{\psi}(v) = 0$ for any $v \in W_n^k$, so $\hat{\psi}$ vanishes on $W_{n}^{\leq k}$. Inductively, we thus find that $\hat{\psi}$ vanishes on $W_{n}^{\leq k}$ for any integer $k \geq -1$. This implies that all the coefficients $\phi^{\vec{\sigma}}_{\mu}$ in \fref{eq: DOexpansion} are zero, so $\hat{\psi}= 0$ and hence $\psi = 0$.
	\end{proof}

	\subsubsection{A local-to-global argument}
	
	\noindent
	Having established that $H^2_\ct(\X_c(\R^n), \R) = \{0\}$, we now show that $H^2_\ct(\X_c(M), \R)$ 
	vanishes for general manifolds $M$ of dimension greater than $1$, using a local-to-global argument. This completes the proof of the first part of Theorem~B.\\
	
	\noindent 
	We let $\X_c^\prime$ denote the presheaf defined by $U \mapsto \X_c(U)^\prime$ and the natural restriction maps. This is in fact an acyclic sheaf by \Fref{prop: distr_acyclic_sheaf}.
	
	\begin{lemma}\label{lem: second_la_cohom_vanishes}
		Assume that $\dim(M) > 1$. Then $H_{\ct}^2(\X_c(M), \R) = \{0\}$.
	\end{lemma}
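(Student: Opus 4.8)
The plan is to run a Mayer--Vietoris / sheaf-theoretic local-to-global argument, bootstrapping from the local vanishing result \Fref{lem: second_la_cohom_vanishes_locally}. The main players are: (i) any continuous $2$-cocycle $\psi$ on $\X_c(M)$ is diagonal (\Fref{lem: diagonal}), so the associated $1$-cocycle $\hat\psi\colon \X(M)\to \X_c(M)'$ is support-decreasing and hence a differential operator of locally finite degree (\Fref{rem: Peetre}); (ii) the presheaf $\X_c'$ given by $U\mapsto \X_c(U)'$ is an acyclic sheaf (\Fref{prop: distr_acyclic_sheaf}, referenced above). The idea is that $\hat\psi$, being a differential operator, restricts to each open set $U\subseteq M$, giving for every $U$ a $1$-cocycle $\hat\psi|_U \in C^1(\X(U),\X_c(U)')$, and the obstruction to $\psi$ being a coboundary is measured by a class that can be computed locally.

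First I would set up the restriction maps carefully. Given $\psi\in C^2_{\ct}(\X_c(M),\R)$, \Fref{lem: extension_of_1cocycle} produces $\hat\psi\in C^1(\X(M),\X_c(M)')$ satisfying the cocycle identity \eqref{eq: OneCocycle}; since it is support-decreasing, for each open $U$ it descends to $\hat\psi_U\in C^1(\X(U),\X_c(U)')$, compatibly with restrictions $V\subseteq U$. Next, cover $M$ by open sets $U_i$, each diffeomorphic to $\R^n$ (possible since every manifold admits such a cover; one may even take a good cover). On each $U_i$ we have $n=\dim M>1$, so by \Fref{lem: second_la_cohom_vanishes_locally} the cocycle $\psi|_{U_i}$ on $\X_c(U_i)$ is a coboundary; equivalently (by the injectivity in the Proposition on $\psi\mapsto\hat\psi$, applied over $U_i$) $\hat\psi_{U_i}=d_{\g}\eta_i$ for some $\eta_i\in C^0(\X(U_i),\X_c(U_i)') = \X_c(U_i)'$.

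The remaining work is to glue the local primitives $\eta_i$ into a global one. On overlaps $U_i\cap U_j$ the difference $\eta_i - \eta_j$ (suitably restricted) is a $1$-cocycle-primitive discrepancy: $d_{\g}(\eta_i-\eta_j)=0$ on $\X(U_i\cap U_j)$, i.e.\ $\eta_i-\eta_j$ is annihilated by the coadjoint action of all of $\X(U_i\cap U_j)$, hence lies in the invariants $\big(\X_c(U_i\cap U_j)'\big)^{\X(U_i\cap U_j)}$. Because $\X_c$ of a connected open set is perfect and acts with no nonzero invariant functionals of the relevant type (one shows $H^0(\X(V),\X_c(V)')=0$ for $V$ connected of dimension $>1$, using perfectness of $\X_c(V)$ together with the local structure, much as in \Fref{lem: diagonal}), these discrepancies vanish or are themselves controlled. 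More precisely, I would phrase the obstruction as an element of $H^1$ of the \v{C}ech complex of the cover with values in the acyclic sheaf $\X_c'$ (using \Fref{prop: distr_acyclic_sheaf}), which therefore vanishes; this yields a global $\eta\in\X_c(M)'$ with $\hat\psi = d_{\g}\eta$, and then the injectivity statement gives that $\psi$ itself is a coboundary. Hence $H^2_{\ct}(\X_c(M),\R)=\{0\}$.

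The main obstacle is the gluing step: one must control the ``constants'' $H^0$ and the \v{C}ech $H^1$ with coefficients in $\X_c'$ precisely, making sure that the coadjoint-invariant functionals behave well and that the acyclicity of $\X_c'$ applies to the complex built from the $\eta_i$'s. A secondary technical point is verifying that $\hat\psi$ genuinely restricts to opens and that the local splittings $\eta_i$ can be chosen to be themselves support-decreasing / continuous in a way compatible with the sheaf structure — this is where the differential-operator description from \Fref{rem: Peetre} does the heavy lifting, since a differential operator restricts to any open set automatically. Once the bookkeeping of the Mayer--Vietoris / \v{C}ech argument is in place, the dimension hypothesis $\dim(M)>1$ enters only through the local input \Fref{lem: second_la_cohom_vanishes_locally} and through the vanishing of $H^0(\X(V),\X_c(V)')$ for connected $V$.
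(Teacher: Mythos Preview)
Your proposal is correct and follows essentially the same local-to-global strategy as the paper: cover $M$ by charts diffeomorphic to $\R^n$, use \Fref{lem: second_la_cohom_vanishes_locally} to trivialize locally, observe that the overlap discrepancies $\eta_i-\eta_j$ vanish by perfectness of $\X_c(U_i\cap U_j)$, and glue via the sheaf property of $\X_c'$ (\Fref{prop: distr_acyclic_sheaf}); the paper packages this as a diagram chase in the \v{C}ech--Chevalley--Eilenberg double complex rather than via $\hat\psi$ and Peetre, but the content is the same. One small correction: perfectness of $\X_c(V)$ (hence the vanishing of the coadjoint invariants) holds in all dimensions, so the hypothesis $\dim(M)>1$ enters \emph{only} through the local input \Fref{lem: second_la_cohom_vanishes_locally}, not through the $H^0$ step.
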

	\begin{proof}
		Let $n := \dim(M)$. The continuous Chevalley--Eilenberg cochains define a presheaf $U \mapsto C_{\ct}^m(\X_{c}(U), \R)$ for any $m \in \N$, that we denote by $C_{\ct}^m(\X_{c})$. We denote by $Z_{\ct}^m(\X_{c}) \subseteq C_{\ct}^m(\X_{c})$ its sub-presheaf consisting of cocycles. Let $\mathcal{U} = \{U_i \st i\in S\}$ be an open cover of $M$ such that every $U_i$ is diffeomorphic to $\R^{n}$, and consider the (augmented) double complex $\check{C}^{\bullet}(\mc{U}, C_{\ct}^{\bullet}(\X_{c}))$ for the \v{C}ech-cohomology with coefficients in the presheaf $C_{\ct}^{\bullet}(\X_{c})$. Restricted to cocycles in Chevalley--Eilenberg degree $2$, the left lower portion looks as follows:
		
		\begin{center}
			\begin{tikzcd}
				& 0 & 0 & 0\\
				0\arrow[r]& Z_{\ct}^2(\X_{c}(M))\arrow[u]\arrow[r,"\check{\delta}"] &\prod_{i\in S} Z_{\ct}^2(\X_{c}(U_i)) \arrow[u]\arrow[r, "\check{\delta}"] 
				& \prod_{i, j\in S} Z_{\ct}^2(\X_{c}(U_i\cap U_j))\arrow[u]\\
				0\arrow[r]& C_{\ct}^1(\X_{c}(M)) \arrow[u,"d_\g"]\arrow[r,"\check{\delta}"]& \prod_{i\in S} C_{\ct}^1(\X_{c}(U_i)) \arrow[u,"d_\g"]\arrow[r, "\check{\delta}"]
				& \prod_{i, j\in S} C_{\ct}^1(\X_{c}(U_i\cap U_j))
				\arrow[u,"d_\g"]\\
				& 0 \arrow[u] & 0\arrow[u] & 0.\arrow[u]
			\end{tikzcd}
		\end{center}
		The middle column is exact by \Fref{lem: second_la_cohom_vanishes_locally}, as every $U_i \in \mc{U}$ is diffeomorphic to $\R^n$, and the column on the right is exact at $\prod_{i, j\in S} C_{\ct}^1(\X_{c}(U_i\cap U_j))$ because $\X_c(U_i \cap U_j)$ is perfect for any $i,j\in S$ \cite[Thm.\ 1.4.3]{Banyaga_book_diffeo}. The bottom row is exact because $C_{\ct}^1(\X_c) = \X_{c}^\prime$ is an acyclic sheaf, by \Fref{prop: distr_acyclic_sheaf}. \Fref{lem: diagonal} further guarantees that the map $\check{\delta} \colon Z_{\ct}^2(\X_{c}(M)) \rightarrow \prod_{i\in S}Z_{\ct}^2(\X_{c}(U_i))$ is injective. Indeed, suppose that $\psi(\X_c(U_i), \X_c(U_i)) = \{0\}$ for all $i \in S$. Then $\psi(X_c(U_i), \X_c(M)) = \{0\}$ for any $\psi \in C_{\ct}^2(\X_c(M))$ and $i \in S$, because $\psi$ is diagonal. So $\psi = 0$, by a partition of unity argument. A straightforward diagram chase now shows that $H_\ct^2(\X_{c}(M),\R)$ vanishes.\qedhere \\
	\end{proof}
	
	\subsection{Manifolds $M$ of dimension one}\label{sec: cohom_dim_one}
	
	\noindent
	Having proven Theorem~B for manifolds of dimension greater than $1$, we proceed with the remaining case, and determine
	%the continuous second Lie algebra cohomology 
	$H^2_\ct(\X_c(M), \R)$ for manifolds of dimension 1. In the connected case, $M$ must be diffeomorphic to either $\R$ or $S^1$. It is well-known that $H_{\ct}^2(\X(S^1), \R) = \R$ is spanned by the class of the Virasoro cocycle (cf.\ \cite[Prop.\ 2.3]{Khesin_book}):
	$$ \psi_{\mathrm{vir}}(f\partial_\theta, g\partial_\theta) = \int_{S^1} f^{\prime \prime \prime}(\theta) g(\theta)d\theta, \qquad f,g \in C^\infty(S^1).$$
	A slight adaptation of the proof of \Fref{lem: second_la_cohom_vanishes_locally} allows us to prove the analogous result on the real line:
	\begin{lemma}\label{lem: second_cohom_on_line}
		The second Lie algebra cohomology $H_{\ct}^2(\X_c(\R), \R)$ is $1$-dimensional. It is spanned by the class of the Virasoro cocycle 
		\begin{equation}\label{eq: vir_cocycle_reals}
			\psi_{\mathrm{vir}}(f\partial_x, g\partial_x) = \int_{\R}f'''(x)g(x)dx, \qquad f,g \in C^\infty_c(\R). 
		\end{equation}
	\end{lemma}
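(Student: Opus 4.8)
The plan is to run the proof of \Fref{lem: second_la_cohom_vanishes_locally} with $n=1$, keeping track of the single place where the argument for $n>1$ genuinely breaks down. Given a continuous $2$-cocycle $\psi$ on $\X_c(\R)$, I pass to the associated $1$-cocycle $\hat\psi \in C^1(\X(\R), \X_c(\R)^\prime)$ of \Fref{lem: extension_of_1cocycle}, which by \Fref{rem: Peetre} is a differential operator of locally finite degree. Writing $W_1$ for the graded Lie algebra of polynomial vector fields on $\R$, so that $W_1^k = \R\, x^{k+1}\partial_x$ for $k\geq -1$ and $\gl(1,\R) = \R E$ with Euler field $E = x\partial_x$, I aim to show inductively that $\hat\psi$ is, up to a coboundary and a multiple of $\hat\psi_{\mrm{vir}}$, zero on all of $W_1$; the conclusion then follows as before, since a locally finite differential operator $\X(\R)\to\X_c(\R)^\prime$ annihilating every polynomial vector field vanishes (evaluate on monomials and solve degree by degree).

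For $k=-1$ the cocycle identity no longer imposes a closedness condition, since $\R$ carries only one constant vector field; but $\phi^{\vec 0}:= \hat\psi(\partial_x) \in \mcD^\prime(\R)$ still has a primitive $\eta$ with $\partial_x\eta = \phi^{\vec 0}$ by the one-dimensional case of the Poincar\'e lemma for currents \cite[IV\S19]{deRham_book_1984}, so after subtracting $d_\g\eta$ I may assume $\hat\psi$ vanishes on $W_1^{-1}$. For $0\leq k\leq n=1$ the arguments are unchanged: testing the cocycle identity against $\partial_x$ shows $\hat\psi(v)$ is translation-invariant for $v\in W_1^k$, on which $E$ acts by the scalar $n+1=2$ (\Fref{lem: action_euler_on_translation_inv}), and the resulting identity $(n+1-k)\hat\psi(v) = v\cdot\hat\psi(E)$ lets me remove a coboundary at $k=0$ and then forces $\hat\psi(x^2\partial_x)=0$ at $k=1$. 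Hence I may assume $\hat\psi$ vanishes on $W_1^{\leq 1}$.

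The decisive step is $k = n+1 = 2$. As in \Fref{lem: second_la_cohom_vanishes_locally}, $\hat\psi(x^3\partial_x)$ is translation-invariant and $\restr{\hat\psi}{W_1^2}$ is a $\gl(1,\R)$-intertwiner from $W_1^2\cong S^3(\R)^\ast\otimes\R$ into the line of translation-invariant elements $(\R)^\ast\otimes\wedge^1(\R)^\ast\subseteq\X_c(\R)^\prime$; but now $\mf{sl}(1,\R)=0$, so the representation-theoretic vanishing that finished the case $n>1$ simply fails, and this intertwiner may be an arbitrary scalar, say $\hat\psi(x^3\partial_x) = c\, I$ with $I(g\partial_x):=\int_\R g\,dx$. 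This missing piece is exactly the Virasoro class: one checks directly that the extension $\hat\psi_{\mrm{vir}}$ of $\psi_{\mrm{vir}}$ from \eqref{eq: vir_cocycle_reals} vanishes on $W_1^{\leq 1}$ (since $(x^j)^{\prime\prime\prime}=0$ for $j\leq 2$) and satisfies $\hat\psi_{\mrm{vir}}(x^3\partial_x) = 6\,I$, so replacing $\psi$ by $\psi - \tfrac{c}{6}\psi_{\mrm{vir}}$ produces a cocycle whose $1$-cocycle vanishes on $W_1^{\leq 2}$. For $k>2$ the coefficient $n+1-k$ is nonzero and $\hat\psi(E)=0$, so the Euler identity gives $\hat\psi(v)=0$; inductively $\hat\psi$ vanishes on all of $W_1$, and therefore $\psi - \tfrac{c}{6}\psi_{\mrm{vir}}$ is a coboundary. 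Thus $H^2_\ct(\X_c(\R),\R)$ is spanned by $[\psi_{\mrm{vir}}]$.

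It remains to see that $[\psi_{\mrm{vir}}]\neq 0$. If $\psi_{\mrm{vir}} = d_\g\mu$ with $\mu\in\X_c(\R)^\prime$, then by locality (both sides are support-decreasing and agree on $\X_c(\R)$) $\hat\psi_{\mrm{vir}} = d_\g\mu$ holds on all of $\X(\R)$; since $\hat\psi_{\mrm{vir}}(\partial_x)=0$ this forces $\partial_x\cdot\mu=0$, so $\mu = c^\prime I$ is translation-invariant. But then $\hat\psi_{\mrm{vir}}(x^3\partial_x) = (x^3\partial_x)\cdot(c^\prime I)$ equals $6c^\prime$ times the functional $g\partial_x\mapsto\int_\R x^2 g\,dx$, which is not a multiple of $I$, contradicting $\hat\psi_{\mrm{vir}}(x^3\partial_x)=6I$. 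I expect the only real work to be bookkeeping — arranging the successive coboundary reductions so that they remain compatible with subtracting the multiple of $\psi_{\mrm{vir}}$, and confirming that $\psi_{\mrm{vir}}$ is a continuous $2$-cocycle with the claimed behaviour on low-degree polynomial fields — rather than anything conceptual; the one genuinely new input relative to \Fref{lem: second_la_cohom_vanishes_locally} is that the degree-$(n+1)$ obstruction, killed by $\mf{sl}(n,\R)$-representation theory when $n>1$, is one-dimensional and realised by $\psi_{\mrm{vir}}$ when $n=1$.
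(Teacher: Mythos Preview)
Your proposal is correct and follows essentially the same route as the paper: reduce $\hat\psi$ modulo a coboundary to vanish on $W_1^{\leq 1}$, identify the translation-invariant obstruction $\hat\psi(x^3\partial_x)=cI$ at degree $k=2$, subtract the appropriate multiple of $\psi_{\mathrm{vir}}$, and finish with the Euler identity for $k>2$. The only cosmetic differences are that the paper builds the primitive at $k=-1$ via an explicit integral operator $P(f)(x)=\int_{-\infty}^x(f-I(f)\chi)$ rather than citing the Poincar\'e lemma, and it proves $[\psi_{\mathrm{vir}}]\neq 0$ by observing that $\widehat{d_\g\eta}(f\partial_x)=f(\partial_x\cdot\eta)+2f'\eta$ is first order while $\hat\psi_{\mathrm{vir}}(f\partial_x)=f'''I$ is third order, instead of your evaluation at $x^3\partial_x$.
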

	\begin{proof}
		Let us first observe that the cocycle
		\[
		\psi_{\mathrm{vir}}(f\partial_x, g\partial_x) = \int_{\R}f'''(x)g(x)dx 
		\]
		is not a coboundary. Indeed, if $\eta \in C_{\ct}^1(\X_c(\R), \R)$ is a 1-cochain, then the map $\widehat{d_\g \eta} : \X(\R) \to \X_c(\R)^\prime$ obtained using \Fref{lem: extension_of_1cocycle} is the first-order differential operator $\widehat{d_\g \eta}(f\partial_x) = f (\partial_x \cdot \eta) + 2 f' \eta$. Indeed, this follows from the calculation
		\begin{align*}
			\widehat{d_\g \eta}(f \partial_x)(g\partial_x) 
			&= -\eta([f\partial_x, g\partial_x]) = \eta(f^\prime g \partial_x - fg^\prime \partial_x) \\
			&= 2\eta(f^\prime g \partial_x) - \eta((fg)^\prime \partial_x) = (2f^\prime \eta + f(\partial_x \cdot \eta))(g \partial_x)
		\end{align*}
		for $f \in C^\infty(\R)$ and $g \in C^\infty_c(\R)$. On the other hand, $\hat{\psi}_{\mathrm{vir}}$ is the third-order differential operator $\hat{\psi}_{\mathrm{vir}}(f\partial_x) = f''' I$, where 
		$I \in \X_{c}(\R)'$ is defined by $I(f\partial_x) = \int_{\R}f(x)dx$. So $\psi_{\mathrm{vir}}$ can not be a coboundary. \\
		
		\noindent
		Let $\psi$ be a continuous $2$-cocycle. Let $\hat{\psi} \in C^1(\X(\R), \X_c(\R)^\prime)$ be the corresponding $1$-cocycle obtained using \Fref{lem: extension_of_1cocycle}. We show that $\hat{\psi}$ is cohomologous to a $1$-cocycle in $C^1(\X(\R), \X_c(\R)^\prime)$ that vanishes on the subspace $W_1^{-1} = \R\partial_x$. Choose $\chi \in C^{\infty}_{c}(\R)$ with $\int_{\R}\chi(x)dx = 1$. For any $f \in C^\infty_c(\R)$, the function $$P(f)(x) := \int_{-\infty}^{x} \big(f(s) - I(f \partial_x)\chi(s)\big) ds$$ is smooth and compactly supported. We moreover have $P(f^\prime) =f$, because $I(f^\prime \partial_x) = 0$. Observe that the $0$-cochain $\eta \in \X_c(\R)^\prime = C^0(\X(\R), \X_c(\R)^\prime)$ defined by $\eta(f\partial_x) := \hat{\psi}(\partial_x)(P(f)\partial_x)$ satisfies $\hat{\psi}(\partial_x) +(d_\g\eta)(\partial_x) = 0$, because
		$$(d_\g\eta)(\partial_x)(f \partial_x) = -\eta(f^\prime \partial_x) = -\hat{\psi}(\partial_x)(P(f^\prime)\partial_x) = -\hat{\psi}(\partial_x)(f\partial_x), \qquad \forall f \in C^\infty_c(\R).  $$
		Replacing $\hat{\psi}$ by $\hat{\psi} + d_\g \eta$, we assume from now on that $\hat{\psi}$ vanishes on $W_1^{-1} = \R \partial_x$. Following the case $0 \leq k \leq n$ in the proof of \Fref{lem: second_la_cohom_vanishes_locally}, we may then further assume that $\hat{\psi}$ vanishes on $W_1^{\leq 1}$ and that $\hat{\psi}(x^3 \partial_x) \in \X_c(\R)^\prime$ is translation invariant. The latter implies that $\hat{\psi}(x^3\partial_x) = cI$ for some constant $c \in \R$. It follows that the $1$-cocycle $\hat{\psi} - c\hat{\psi}_{\mathrm{vir}} \in C^1(\X(\R), \X_c(\R)^\prime)$ vanishes on $W_1^{\leq 2}$. Following the case $k > n+1$ in the proof of \Fref{lem: second_la_cohom_vanishes_locally}, this implies that $\hat{\psi} - c\hat{\psi}_{\mathrm{vir}}$ vanishes on $W_1^{\leq k}$ for any integer $k \geq -1$ and therefore that $\hat{\psi} = c\hat{\psi}_{\mathrm{vir}}$. Hence $\psi = c \psi_{\mrm{vir}}$.
	\end{proof}

	\noindent
	We have thus shown that $H^2_{\ct}(\X_c(M), \R) \cong \R$ for any connected $1$-dimensional manifold. Combined with \Fref{lem: second_la_cohom_vanishes}, the following now completes the proof of Theorem~B.
	
	\begin{lemma}\label{lem: one_dim_full}
		Let $M$ be a smooth manifold of dimension $1$. Then 
		\[H^2_{\ct}(\X_c(M), \R) = H^0_{\mathrm{dR}}(M).\]
	\end{lemma}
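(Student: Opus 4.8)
The plan is to reduce to the connected case, where the answer is already in hand: $H^2_{\ct}(\X_c(\R),\R) \cong \R$ by \Fref{lem: second_cohom_on_line}, and $H^2_{\ct}(\X_c(S^1),\R) = H^2_{\ct}(\X(S^1),\R) \cong \R$ is the classical Virasoro computation (recall $\X_c(S^1) = \X(S^1)$ since $S^1$ is compact). Write $M = \bigsqcup_{i \in \pi_0(M)} M_i$ as the disjoint union of its connected components; each $M_i$ is a connected $1$-manifold, hence diffeomorphic to $\R$ or to $S^1$. Since a compactly supported vector field has support meeting only finitely many components, $\X_c(M) = \bigoplus_{i} \X_c(M_i)$ is a locally convex direct sum of LF-spaces, with $[\X_c(M_i),\X_c(M_j)] = 0$ for $i \neq j$, and continuous dual $\X_c(M)' = \prod_i \X_c(M_i)'$.

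The first step is to show that Chevalley--Eilenberg cochains split along components. A continuous $1$-cochain is just a family $(\eta_i)_i$ with $\eta_i \in \X_c(M_i)'$. By \Fref{lem: diagonal}, a continuous $2$-cocycle $\psi$ vanishes on $\X_c(M_i)\times\X_c(M_j)$ for $i \neq j$, so bilinearity gives $\psi = \sum_i \psi_i$ with $\psi_i$ the restriction of $\psi$ to $\X_c(M_i)\times\X_c(M_i)$, a continuous $2$-cocycle on $\X_c(M_i)$; conversely, for any family of continuous $2$-cocycles $\psi_i$ on $\X_c(M_i)$ the sum $\sum_i \psi_i$ is a well-defined continuous cocycle, since on each $\X_K(M)$ with $K \subseteq M$ compact only finitely many summands contribute. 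Moreover $d_\g \eta(v,w) = -\eta([v,w])$ together with $[\X_c(M_i),\X_c(M_j)]=0$ for $i \neq j$ shows that $d_\g \eta$ is again diagonal and restricts on $\X_c(M_i)\times\X_c(M_i)$ to $d_\g(\eta_i)$; hence $\psi$ is a coboundary on $\X_c(M)$ if and only if every $\psi_i$ is a coboundary on $\X_c(M_i)$. This is precisely where \emph{diagonality} does the work that, in a naive Künneth decomposition, would be done by vanishing of the cross-terms. Combining, restriction induces an isomorphism
\[ H^2_{\ct}(\X_c(M),\R) \;\cong\; \prod_{i \in \pi_0(M)} H^2_{\ct}(\X_c(M_i),\R). \]

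The second step is purely local: a diffeomorphism $M_i \cong \R$ (resp.\ $M_i \cong S^1$) induces a topological Lie algebra isomorphism $\X_c(M_i) \cong \X_c(\R)$ (resp.\ $\X_c(M_i) \cong \X(S^1)$), whence $H^2_{\ct}(\X_c(M_i),\R) \cong \R$ in either case by the results recalled above. Therefore $H^2_{\ct}(\X_c(M),\R) \cong \prod_{i\in\pi_0(M)}\R = \R^{\pi_0(M)}$, which is precisely the space of locally constant functions on $M$, i.e.\ $H^0_{\mathrm{dR}}(M)$.

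I do not expect a genuine obstacle here; the only point requiring care is the interaction between the LF-topology $\X_c(M) = \varinjlim_K \X_K(M) = \bigoplus_i \X_c(M_i)$ and continuity of cochains, namely that continuous cochains, cocycles and coboundaries all decompose compatibly with the component decomposition, and that the result is the full product $\prod_i \R$ (matching $H^0_{\mathrm{dR}}(M)$) rather than the restricted direct sum --- a distinction that only becomes visible when $M$ has infinitely many connected components.
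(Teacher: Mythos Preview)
Your proposal is correct and follows essentially the same route as the paper: decompose $\X_c(M)$ as the locally convex direct sum over connected components, use \Fref{lem: diagonal} to split $2$-cocycles and coboundaries compatibly, and invoke the connected case ($\R$ or $S^1$) to get $\prod_{i}\R \cong H^0_{\mathrm{dR}}(M)$. If anything, you are more explicit than the paper about why the result is the full product rather than a direct sum, and about the well-definedness of $\sum_i\psi_i$ on each $\X_K(M)$.
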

	\begin{proof}
		Let $\{M_\alpha\}_{\alpha \in \mc{I}}$ be the set of connected components of $M$, where $\mc{I}$ is some countable indexing set. (Here we used that $M$ is second-countable.) As the support of a compactly supported vector field on $M$ intersects only finitely many $M_\alpha$ non-trivially, $\X_c(M)$ is isomorphic to the locally convex direct sum $\X_c(M) \cong \bigoplus_{\alpha \in \mc{I}} \X_c(M_\alpha)$. Hence $\X_c(M)^\prime \cong \prod_{\alpha \in \mc{I}} \X_c(M_\alpha)^\prime$. Furthermore, any $2$-cocycle $\psi : \X_c(M) \times \X_c(M) \to \R$ is diagonal by \Fref{lem: diagonal}, and therefore decomposes as $\psi = \sum_\alpha \psi_\alpha$ for some $2$-cochains $\psi_\alpha$ on $\X_c(M_\alpha)$. Moreover, $\psi$ is a cocycle (or a coboundary) if and only if every $\psi_\alpha$ is so. It follows that 
		$$H^2_{\ct}(\X_c(M), \R) = \prod_{\alpha \in \mc{I}} H^2_{\ct}(\X_c(M_\alpha),\R) = \prod_{\alpha\in \mc{I}} \R \cong H^0_{\mrm{dR}}(M).$$
	\end{proof}
	
	\subsection{Relation with Gelfand--Fuks cohomology}\label{sec: relation_with_gf}
	
	\noindent
	Finally, let us consider the relationship between $H^2_\ct(\X_c(M), \R)$ and the Gelfand--Fuks cohomology $H^2_\ct(\X(M), \R)$. The continuous injection $\X_c(M) \hookrightarrow \X(M)$ induces a natural morphism $C_{\ct}^\bullet(\X(M), \R) \to C_{\ct}^\bullet(\X_c(M), \R)$ of cochain complexes, which descends to a linear map $H^\bullet_\ct(\X(M), \R) \to H_{\ct}^\bullet(\X_c(M), \R)$ on cohomology.
	We show that %Our main result is that 
	this map is injective in degree 2. We also show that this map is in general not surjective, so that the continuous cohomology 
	of the compactly supported vector fields is different from Gelfand--Fuks cohomology.\\

	\noindent If $\psi \in C_{\ct}^2(\X_c(M), \R)$ is a diagonal $2$-cochain, its \textit{support} $\supp(\psi)$ is the set of points $x \in M$ with the property that for any neighborhood $U$ of $x$, there exist $v,w \in \X_c(U)$ with $\psi(v,w) \neq 0$. If $x \notin \supp(\psi)$ and $U$ is a neighborhood of $x$ with $\psi(\X_c(U), \X_c(U)) = \{0\}$, then $\psi(\X_c(U), \X_c(M)) = \{0\}$, because $\psi$ is diagonal. The following is a straightforward adaptation of \cite[Lem.\ 4.19]{BasCornelia_ce_ham} to the present setting:
	
	\newpage
	\begin{proposition}\label{prop: relate_cpt_gelfand_fuks} Let $M$ be a smooth manifold.
		\begin{enumerate}
			\item A continuous $2$-cocycle $\psi \in C_{\ct}^2(\X_c(M), \R)$ extends to a continuous $2$-cocycle on $\X(M)$ if and only if it has compact support.
			\item Assume that the $2$-cocycle $\psi \in C_{\ct}^2(\X_c(M), \R)$ has compact support and satisfies $\psi = d_\g \eta$ for some $\eta \in \X_c(M)^\prime$. Then $\supp(\eta) = \supp(\psi)$ and $\eta$ extends to a continuous linear map $\X(M) \to \R$.
			\item The canonical linear map $H^2_\ct(\X(M), \R) \to H^2_\ct(\X_c(M), \R)$ is injective.
		\end{enumerate}		
	\end{proposition}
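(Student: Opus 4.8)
The plan is to handle the three parts in sequence, with parts~1 and~2 carrying the substance and part~3 following formally. Throughout I will lean on two facts about the relevant topologies. First, a continuous alternating bilinear form on the Fr\'echet space $\X(M)$ is bounded by a product of two $C^m$-seminorms relative to a single compact set $L\subseteq M$, and therefore vanishes as soon as one of its arguments vanishes on a neighbourhood of $L$. Second, for a fixed $\chi\in C^\infty_c(M)$, multiplication by $\chi$ is a continuous linear map $\X(M)\to\X_{\supp\chi}(M)\subseteq\X_c(M)$. I will also use freely that a diagonal $2$-cocycle $\psi$ satisfies $\psi(u,w)=0$ whenever $\supp u\cap\supp\psi=\emptyset$: cover $\supp u$ by open sets on which $\psi$ vanishes, apply a partition of unity, and invoke diagonality, exactly as in the discussion preceding the proposition.

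For part~1, the forward implication is quick: if $\psi$ extends to a continuous cocycle $\tilde\psi$ on $\X(M)$, fix $L$ and $m$ so that $|\tilde\psi(v,w)|$ is controlled by the $C^m$-seminorms on $L$; then $\psi(v,w)=0$ for $v,w\in\X_c(M)$ with $\supp v\cap L=\emptyset$, which forces $\supp\psi\subseteq L$, hence compact. For the converse, put $K:=\supp\psi$, pick $\chi\in C^\infty_c(M)$ with $\chi\equiv1$ on an open neighbourhood $N\supseteq K$, and set $\tilde\psi(v,w):=\psi(\chi v,\chi w)$ for $v,w\in\X(M)$. Continuity follows from the two facts above, and $\tilde\psi$ restricts to $\psi$ on $\X_c(M)$ because $(\chi-1)v$ is supported in $M\setminus N$, away from $K$. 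The only step requiring care is the cocycle identity. Here I would use the identity
\[
\chi[u,v]-[\chi u,\chi v]=\chi(1-\chi)[u,v]-\chi\,(u\cdot\chi)\,v+\chi\,(v\cdot\chi)\,u,
\]
whose right-hand side vanishes identically on $N$ and is therefore supported off $K$; hence $\psi\bigl(\chi[u,v]-[\chi u,\chi v],\,\chi w\bigr)=0$, so replacing $\chi[u,v]$ by $[\chi u,\chi v]$ (and cyclically) converts $d_\g\tilde\psi(u,v,w)$ into $d_\g\psi(\chi u,\chi v,\chi w)=0$.

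For part~2, write $K:=\supp\psi$ and recall $\psi(v,w)=-\eta([v,w])$. To show $\supp\eta=\supp\psi$ I would argue both inclusions. If $x\notin\supp\psi$, choose a neighbourhood $U\ni x$ diffeomorphic to $\R^n$ ($n:=\dim M$) on which $\psi$ vanishes; since $\X_c(U)$ is perfect, every $v\in\X_c(U)$ is a finite sum of brackets of elements of $\X_c(U)$, so $\eta(v)=0$ and $x\notin\supp\eta$. Conversely, if $\eta$ vanishes on $\X_c(U)$ then $\psi$ vanishes on $\X_c(U)\times\X_c(U)$ because $[v,w]\in\X_c(U)$, so $x\notin\supp\psi$. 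In particular $\supp\eta=K$ is compact, hence $\eta$ vanishes on $\X_c(M\setminus K)$ (using that $\X_c^\prime$ is a sheaf), and with $\chi$ chosen as above the formula $\tilde\eta(v):=\eta(\chi v)$ defines a continuous extension of $\eta$ to $\X(M)$.

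Part~3 is then assembled from the first two. Suppose $[\omega]\in H^2_\ct(\X(M),\R)$ maps to zero, so $\psi:=\omega|_{\X_c(M)}=d_\g\eta$ for some continuous $\eta\in\X_c(M)^\prime$. Since $\psi$ extends (to $\omega$), part~1 shows $\psi$ has compact support, and part~2 produces a continuous extension $\tilde\eta\in\X(M)^\prime$ of $\eta$. Then $\beta:=\omega-d_\g\tilde\eta$ is a continuous $2$-cochain on $\X(M)$ with $\beta|_{\X_c(M)}=0$, since $(d_\g\tilde\eta)|_{\X_c(M)}=d_\g\eta=\omega|_{\X_c(M)}$. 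By the first topological fact there is a compact $L$ with $\beta(v,w)=0$ whenever $v$ or $w$ vanishes near $L$; choosing $\chi\equiv1$ near $L$ and expanding $v=\chi v+(1-\chi)v$ and $w=\chi w+(1-\chi)w$ gives $\beta(v,w)=\beta(\chi v,\chi w)=0$ for all $v,w\in\X(M)$. Hence $\omega=d_\g\tilde\eta$ and $[\omega]=0$, which is the desired injectivity. I expect the genuinely delicate point of the whole argument to be the cocycle identity in the converse direction of part~1, i.e.\ verifying that the bracket-correction term is supported away from $\supp\psi$; everything else is bookkeeping with supports, seminorms, and bump functions.
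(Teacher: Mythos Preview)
Your proof is correct, and the overall architecture---extend via $\psi(\chi v,\chi w)$ and $\eta(\chi v)$, control supports using diagonality and perfectness of $\X_c(U)$, then deduce injectivity---matches the paper's. There are, however, two places where your route differs from the paper's in a way worth noting.

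For the forward implication in part~1, the paper argues by contradiction: if $\supp\psi$ were non-compact, pick a locally finite sequence $x_i\in\supp\psi$, disjoint neighbourhoods $U_i$, and $v_i,w_i\in\X_c(U_i)$ with $\psi(v_i,w_i)=1$; then $v=\sum v_i$ and $w=\sum w_i$ are well-defined in $\X(M)$, and diagonality plus continuity of the extension force $\psi(v,w)=\lim_N N$, a contradiction. Your argument instead exploits the Fr\'echet structure directly: joint continuity of a bilinear form on $\X(M)$ gives a bound $|\psi(v,w)|\le C\,p_{L,m}(v)\,p_{L,m}(w)$ for a single compact $L$, whence $\supp\psi\subseteq L$. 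Your version is shorter and more structural; the paper's version is more elementary in that it avoids invoking the seminorm estimate for bilinear maps on Fr\'echet spaces.

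For part~3, the paper simply observes that $\omega+\tilde\eta\circ[\,\cdot\,,\,\cdot\,]$ is continuous on $\X(M)\times\X(M)$ and vanishes on the dense subset $\X_c(M)\times\X_c(M)$, hence vanishes identically. You instead reuse the seminorm bound from part~1 to reduce $\beta(v,w)$ to $\beta(\chi v,\chi w)$. Both are fine; the density argument is a one-liner, while yours has the virtue of not appealing to density (and is consistent with your approach to part~1). Your verification of the cocycle identity in the converse of part~1 via the explicit correction term $\chi[u,v]-[\chi u,\chi v]$ is equivalent to the paper's computation, which is phrased in terms of the $1$-cocycle $\hat\psi\in C^1(\X(M),\X_c(M)')$ and the identity $\hat\psi(u)(\chi[v,w])=\hat\psi(u)([v,\chi w])$.
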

	\begin{proof}~
		\begin{enumerate}
			\item Assume that $\psi$ has compact support, say $\supp(\psi) = K$. Consider the $1$-cocycle $\hat{\psi} \in C^1(\X(M), \X_c(M)')$ obtained from \Fref{lem: extension_of_1cocycle}. Let $\chi \in C^\infty_c(M)$ satisfy $\restr{\chi}{U} = 1$ for some open neighborhood $U$ of $K$. Define a bilinear map $\widetilde{\psi} : \X(M) \times \X(M) \to \R$ extending $\psi$ by setting $\widetilde{\psi}(v,w) := \hat{\psi}(v)(\chi w) = \psi(\chi v, \chi w)$. This is independent of the choice of $\chi$ because $\psi$ has support $K$. It is moreover continuous, in view of the continuity of both $\psi$ and the map $\X(M) \to \X_c(M), w \mapsto \chi w$. We next show that $\widetilde{\psi}$ is a $2$-cocycle. Observe that $\hat{\psi}(u)(\chi [v,w]) = \hat{\psi}(u)([v, \chi w])$ for any $u,v,w \in \X(M)$, because $\mc{L}_v(\chi)w \in \X_c(M)$ vanishes on a neighborhood of $K$, so that $\hat{\psi}(u)(\mc{L}_v(\chi)w) = 0$. Using \eqref{eq: OneCocycle}, we therefore have
			\begin{align*}
				\widetilde{\psi}([u,v],w) + \widetilde{\psi}(v,[u,w]) 
				&= \hat{\psi}([u,v])(\chi w) + \hat{\psi}(v)(\chi [u,w])\\
				&= \hat{\psi}(u)([v, \chi w ]) - \hat{\psi}(v)([u, \chi w]) + \hat{\psi}(v)(\chi[u,w]) \\
				&= \hat{\psi}(u)(\chi [v,  w ])\\
				&= \widetilde{\psi}(u, [v,w]).			
			\end{align*}
			Conversely, assume that $\psi$ extends to a continuous $2$-cocycle on $\X(M)$, again denoted $\psi$. Suppose that $K := \supp(\psi)$ is not compact. Then we can find a countably infinite sequence $(x_i)_{i \in \N}$ in $K$ of distinct points which has no convergent subsequence. Let $\{U_i \}_{i \in \N}$ be a collection of pairwise disjoint open subsets of $M$ so that $x_i \in U_i$ for all $i \in \N$. Since $x_i \in K$, there exist for every $i \in \N$ some $v_i, w_i \in \X_c(U_i)$ satisfying $\psi(v_i, w_i) = 1$. Notice that $v := \sum_{i=1}^\infty v_i$ and $w := \sum_{i=1}^\infty w_i$ are well-defined smooth vector fields on $M$, because the open sets $U_i$ are pairwise disjoint. Since $\psi \in C_{\ct}^2(\X(M), \R)$ is diagonal and continuous, we obtain the evident contradiction that
			$$ \lim_{N \to \infty} N = \lim_{N \to \infty} \sum_{i=1}^N \psi(v_i, w_i) = \lim_{N \to \infty}  \psi\left(\sum_{i=1}^N v_i, \sum_{i=1}^N w_i\right) =  \psi(v, w) < \infty.$$
			So $\supp(\psi)$ must be compact.
			\item Let $x \notin \supp(\psi)$. Then there exists an open neighborhood $U$ of $x$ such that $\psi(\X_c(U), \X_c(U)) = \{0\}$. Let $u \in \X_c(U)$. Since $\X_c(U)$ is perfect (\cite[Thm.\ 1.4.3]{Banyaga_book_diffeo}), there exist $N \in \N$ and $v_i, w_i \in\X_c(U)$ for $i \in \{1, \ldots, N\}$ s.t.\ $u = \sum_{i=1}^N [v_i, w_i]$. Then $\eta(u) = \sum_{i=1}^N \eta([v_i, w_i]) = - \sum_{i=1}^N \psi(v_i, w_i) = 0$. Thus $\eta$ vanishes on $\X_c(U)$, and $x \notin \supp(\eta)$. It follows that $\supp(\eta) \subseteq \supp(\psi)$. Conversely, suppose that $x \notin \supp(\eta)$. Then there exists an open neighborhood $U$ of $x$ such that $\eta(\X_c(U)) = \{0\}$. Then $\psi = d_\g \eta$ implies that $\psi(\X_c(U), \X_c(U)) =\{0\}$. Thus $x \notin \supp(\psi)$. Hence $\supp(\eta) = \supp(\psi) =: K$. As $\eta$ has compact support $K$, it admits a continuous linear extension $\widetilde{\eta}$ to $\X(M)$ by setting $\widetilde{\eta}(v) := \eta(\chi v)$ for any $\chi \in C^\infty_c(M)$ satisfying $\restr{\chi}{V} = 1$ for some open neighborhood $V$ of $K$. Notice that $\widetilde{\eta}$ is indeed well-defined and continuous.
			\item Let $\psi \in C_{\ct}^2(\X(M), \R)$ be a $2$-cocycle and assume that $\eta \in \X_c(M)^\prime$ satisfies $\psi(v,w) = - \eta([v,w])$ for all $v,w \in \X_c(M)$. The previous items ensure that $\eta$ extends to a continuous functional on $\X(M)$. As $\X_c(M)$ is dense in $\X(M)$ and $\psi$ is continuous on $\X(M) \times \X(M)$, it follows that $\psi(v,w) = - \eta([v,w])$ for all $v,w \in \X(M)$. So $\psi = d_\g \eta$. Hence $[\psi] = 0$ in $H_\ct^2(\X(M), \R)$.\qedhere
		\end{enumerate}		
	\end{proof}
	
	\noindent
	\Fref{prop: relate_cpt_gelfand_fuks}, \Fref{lem: second_la_cohom_vanishes} and \Fref{lem: second_cohom_on_line} have the following consequence for Gelfand--Fuks cohomology:

	\begin{corollary}\label{cor: gf_cohom} Let $M$ be a smooth manifold.
		\begin{enumerate}
			\item If $\dim(M) > 1$,  then $H^2_\ct(\X(M), \R) = 0$. 
			\item If $\dim(M) = 1$, then $H^2_\ct(\X(M), \R) = H^0_{c}(M)$ is the compactly supported de Rham cohomology of $M$ in degree 0. In particular, $H^2_\ct(\X(\R), \R) = 0$. 
		\end{enumerate}
	\end{corollary}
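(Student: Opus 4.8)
The plan is to derive both parts from \Fref{prop: relate_cpt_gelfand_fuks} together with the computation of $H^2_\ct(\X_c(M),\R)$ carried out above. The third part of \Fref{prop: relate_cpt_gelfand_fuks} says that the canonical map $\iota\colon H^2_\ct(\X(M),\R)\to H^2_\ct(\X_c(M),\R)$ is injective, so it suffices to identify $\mathrm{im}(\iota)$. I claim that $\mathrm{im}(\iota)$ consists precisely of those classes that admit a \emph{compactly supported} representing cocycle. Indeed, by the first part of \Fref{prop: relate_cpt_gelfand_fuks} every compactly supported cocycle on $\X_c(M)$ extends to $\X(M)$ and hence lies in $\mathrm{im}(\iota)$; conversely, any class in $\mathrm{im}(\iota)$ is represented by the restriction to $\X_c(M)$ of some cocycle on $\X(M)$, and such a restriction, being by construction extendable to $\X(M)$, is compactly supported by that same first part. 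Part~1 of the corollary is then immediate: for $\dim(M)>1$ we have $H^2_\ct(\X_c(M),\R)=\{0\}$ by \Fref{lem: second_la_cohom_vanishes}, whence $H^2_\ct(\X(M),\R)=\{0\}$.

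For part~2, write $M=\bigsqcup_\alpha M_\alpha$ as the disjoint union of its connected components, each diffeomorphic to $\R$ or $S^1$. By \Fref{lem: one_dim_full} (and its proof) $H^2_\ct(\X_c(M),\R)\cong\prod_\alpha H^2_\ct(\X_c(M_\alpha),\R)$, and every factor is a copy of $\R$ spanned by the class of the Virasoro cocycle $\psi_{\mathrm{vir},\alpha}$ on $M_\alpha$ --- by \Fref{lem: second_cohom_on_line} when $M_\alpha\cong\R$, and by the classical identity $H^2_\ct(\X(S^1),\R)=\R[\psi_{\mathrm{vir}}]$ (where $\X(S^1)=\X_c(S^1)$) when $M_\alpha\cong S^1$. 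A class is therefore a tuple $(c_\alpha)_\alpha$, and the remaining task is to show that $(c_\alpha)_\alpha$ admits a compactly supported representative if and only if $\{\alpha:c_\alpha\neq0\}$ is finite and $c_\alpha=0$ for every non-compact $M_\alpha$. The ``if'' direction is clear, since then the finite sum $\sum_{\alpha:c_\alpha\neq0}c_\alpha\psi_{\mathrm{vir},\alpha}$ is a cocycle supported on a finite union of circles representing $(c_\alpha)_\alpha$. For ``only if'', a compactly supported cocycle has support equal to a compact set $K$, which meets only finitely many components --- forcing $c_\alpha=0$ for all but finitely many $\alpha$, since $\psi$ is diagonal (\Fref{lem: diagonal}) --- and on a non-compact component $M_\alpha\cong\R$ its restriction is a cocycle supported in the compact proper subset $K\cap M_\alpha$.

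The crux, and the step I expect to be the main obstacle, is thus the local assertion that a $2$-cocycle on $\X_c(\R)$ with compact support is a coboundary --- equivalently, that $c\,\psi_{\mathrm{vir}}$ is not cohomologous to a compactly supported cocycle when $c\neq0$. I would prove this by contradiction. Since $H^2_\ct(\X_c(\R),\R)=\R[\psi_{\mathrm{vir}}]$ by \Fref{lem: second_cohom_on_line}, a compactly supported cocycle $\psi'$ equals $c\,\psi_{\mathrm{vir}}+d_\g\eta$ for some $c\in\R$ and $\eta\in\X_c(\R)'$; assuming $c\neq0$, the cocycle $\psi_{\mathrm{vir}}+d_\g\big(\tfrac1c\eta\big)=\tfrac1c\psi'$ has compact support $K\subsetneq\R$. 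Choosing a nonempty open interval $U\subseteq\R\setminus K$, diagonality gives $\psi'|_{\X_c(U)}=0$, so that $\psi_{\mathrm{vir}}|_{\X_c(U)}=-d_\g\big(\tfrac1c\eta|_{\X_c(U)}\big)$ would be a coboundary on $\X_c(U)\cong\X_c(\R)$. This contradicts the argument in the first paragraph of the proof of \Fref{lem: second_cohom_on_line}: by the uniqueness of the differential-operator expansion of \Fref{rem: Peetre}, together with the observation that $\hat{\psi}_{\mathrm{vir}}$ is of order three while every coboundary $\widehat{d_\g\theta}$ is of order at most one, the Virasoro cocycle is never a coboundary on an open interval. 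Hence $c=0$ and $\psi'$ is a coboundary.

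Combining these observations, $\mathrm{im}(\iota)$ is the subspace $\bigoplus_{\alpha\ \text{compact}}\R$ of $\prod_\alpha\R$, which is canonically $\bigoplus_\alpha H^0_c(M_\alpha)=H^0_c(M)$; in particular $\mathrm{im}(\iota)=0$ for $M=\R$. This yields part~2 and completes the proof.
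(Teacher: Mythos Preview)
Your argument is correct and follows the paper's overall strategy: both use the injectivity statement in \Fref{prop: relate_cpt_gelfand_fuks}(3), so that Part~1 is identical, and both reduce Part~2 to showing that the Virasoro class on $\R$ cannot be represented by a compactly supported cocycle. The difference lies in how this last step is carried out. The paper argues globally: writing $r(\psi)=\psi_{\mathrm{vir}}+d_\g\eta$ with $r(\psi)$ compactly supported, it evaluates the differential operator $\widehat{r(\psi)}(f\partial_x)=f'''I+f(\partial_x\cdot\eta)+2f'\eta$ first at $f=1$ to see that $\partial_x\cdot\eta$ has compact support, then at $f=x$ to see that $\eta$ does, forcing $\psi_{\mathrm{vir}}$ itself to have compact support, which is absurd. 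You instead localize: you restrict to an open interval $U$ disjoint from the support, where $\psi_{\mathrm{vir}}$ would then have to be a coboundary, and you rule this out by the order mismatch already observed in the proof of \Fref{lem: second_cohom_on_line} ($\hat{\psi}_{\mathrm{vir}}$ is third order while every $\widehat{d_\g\theta}$ is first order). Your route is a bit more conceptual and avoids choosing specific test functions; the paper's route is more direct and yields the sharper intermediate statement that $\eta$ has compact support. You also spell out the passage to disconnected $M$ more carefully than the paper, which simply invokes ``reasoning similar to \Fref{lem: one_dim_full}''.
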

	\begin{proof}~
		\begin{enumerate}
			\item Assume that $\dim(M) > 1$. Then $H^2_\ct(\X_c(M), \R) = 0$ by Theorem~B. We know using \Fref{prop: relate_cpt_gelfand_fuks} that the linear map $H^2_\ct(\X(M), \R) \to H^2_\ct(\X_c(M), \R)$ is injective. It follows that $H^2_\ct(\X(M), \R) = 0$.
			\item By reasoning similar to that in the proof of \Fref{lem: one_dim_full}, it suffices to consider the case where $M$ is connected, so that $M$ is either $S^1$ or $\R$. Since $H^2_\ct(\X(S^1), \R) \cong \R$ \cite[Prop.\ 2.3]{Khesin_book}, it remains to show that $H^2_\ct(\X(\R), \R) = 0$. By \Fref{lem: second_cohom_on_line} we know that $H^2_\ct(\X_c(\R), \R) \cong \R$, which by \Fref{prop: relate_cpt_gelfand_fuks} implies that $H^2_\ct(\X(\R),\R)$ is at most one-dimensional. The non-trivial class in $H^2_\ct(\X_c(\R),\R)$ is spanned by the cocycle $\psi_{\mathrm{vir}}$, defined by \eqref{eq: vir_cocycle_reals}. Assume that $\psi \in C_{\ct}^2(\X(\R),\R)$ is a $2$-cocycle on $\X(\R)$ whose restriction $r(\psi)$ to $\X_c(M) \times \X_c(M)$ is cohomologous to $\psi_{\mathrm{vir}}$. Then $r(\psi) = \psi_{\mathrm{vir}} + d_\g \eta$ for some $\eta \in \X_c(\R)^\prime$. By \Fref{prop: relate_cpt_gelfand_fuks}, we know that $r(\psi)$ has compact support. Consider the associated map $\widehat{r(\psi)} : \X(\R) \to \X_c(\R)^\prime$. We saw in the proof of \Fref{lem: second_cohom_on_line} that $\widehat{d_\g \eta}(f \partial_x) = f (\partial_x \cdot \eta) + 2f^\prime \eta$, and that $\hat{\psi}_{\mathrm{vir}}(f \partial_x) = f''' I$, where $I(f \partial_x) := \int_\R f(x)dx$. So $\widehat{r(\psi)}$ is the differential operator given by
			\begin{equation}\label{eq: restr_cohom_to_vir}
				\widehat{r(\psi)}(f\partial_x) = f''' I + f (\partial_x \cdot \eta) + 2f^\prime \eta.
			\end{equation}
			Since $\widehat{r(\psi)}(f \partial_x) \in \X_c(\R)^\prime$ has compact support for any $f \in C^\infty(\R)$, we obtain by taking $f = 1$ in \eqref{eq: restr_cohom_to_vir} that $\partial_x \cdot \eta$ has compact support. Choosing subsequently $f(x) = x$ in \eqref{eq: restr_cohom_to_vir}, it follows that $\eta$ has compact support, and hence so does $\hat{\psi}_{\mathrm{vir}} = \widehat{r(\psi)} - \widehat{d_\g \eta}$. But the support of $\hat{\psi}_{\mathrm{vir}}$ is all of $\R$, which is not compact, a clear contradiction. \qedhere
		\end{enumerate}
	\end{proof}

	\newpage
	\appendix
	\section{Appendix}
	
	\subsection{Sheaves of distributions}
	Let $E \to M$ be a smooth vector bundle over the smooth manifold $M$. If $U \subseteq M$ is an open subset, we denote by $\Gamma_c(U, E)$ the locally convex vector space of smooth compactly supported sections of $\restr{E}{U} \to  U$, equipped with the natural LF-topology. Let $\Gamma_c(U, E)^\prime$ denote its continuous dual space. It is clear that the assignment $U \mapsto \Gamma_c(U, E)^\prime$ defines a presheaf $\Gamma_c^\prime$ w.r.t.\ the natural restriction maps. In the following, we show that $\Gamma_c^\prime$ actually defines an acyclic sheaf. Since we are considering the continuous dual space, we have to slightly extend the
	usual sheaf-theoretic arguments (such as \cite[{\S}V.1 Prop.\ 1.6 and 1.10]{Bredon_sheaf_theory}). 
	
	\begin{proposition}\label{prop: distr_acyclic_sheaf}
		$\Gamma_c^\prime$ is an acyclic sheaf.
	\end{proposition}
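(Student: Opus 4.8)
The plan is to establish the two halves of the statement in turn: that $\Gamma_c^\prime$ is a sheaf, and that it is acyclic. For the sheaf property, fix an open set $U\subseteq M$, an open cover $\{U_i\}_{i\in S}$ of $U$, and a compatible family $\phi_i\in\Gamma_c(U_i,E)^\prime$, where for $V\subseteq U_i$ the restriction $\phi_i|_V$ is $\phi_i$ precomposed with the extension-by-zero map $\Gamma_c(V,E)\to\Gamma_c(U_i,E)$. Since $M$ is a smooth manifold it is paracompact, so there is a locally finite smooth partition of unity $\{\chi_i\}_{i\in S}$ with $\supp(\chi_i)\subseteq U_i$. For $s\in\Gamma_c(U,E)$ the compact set $\supp(s)$ meets only finitely many $\supp(\chi_i)$, so $\sum_i\chi_i s=s$ is a finite sum and I would set $\phi(s):=\sum_i\phi_i(\chi_i s)$. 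One checks that $\phi\in\Gamma_c(U,E)^\prime$: on each Fréchet step $\Gamma_K(U,E)$ of the LF-space $\Gamma_c(U,E)=\varinjlim_K\Gamma_K(U,E)$ (over compact $K\subseteq U$) the defining sum reduces to a finite sum of the continuous linear maps $s\mapsto\phi_i(\chi_i s)$, hence is continuous. Compatibility of the $\phi_i$ yields $\phi|_{U_j}=\phi_j$, since for $s\in\Gamma_c(U_j,E)$ one has $\chi_i s\in\Gamma_c(U_i\cap U_j,E)$ and $\phi_i(\chi_i s)=\phi_j(\chi_i s)$, whence $\phi(s)=\phi_j(\sum_i\chi_i s)=\phi_j(s)$. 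The same manipulation shows that a functional restricting to zero on every $U_i$ vanishes, which is the separation axiom and in particular makes $\phi$ independent of the chosen partition of unity.

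For acyclicity, the crucial observation is that $\Gamma_c^\prime$ is a sheaf of modules over the sheaf of rings $C^\infty_M$: for $g\in C^\infty(U)$ and $\phi\in\Gamma_c(U,E)^\prime$ the formula $(g\cdot\phi)(s):=\phi(gs)$ again defines a continuous functional, because multiplication by $g$ is a continuous endomorphism of $\Gamma_c(U,E)$ preserving compact supports, and this action is compatible with restrictions. As $C^\infty_M$ is a soft (indeed fine) sheaf of unital rings on the paracompact manifold $M$, it follows that $\Gamma_c^\prime$ is fine: for a locally finite cover $\{U_i\}$ with subordinate smooth partition of unity $\{\chi_i\}$, the sheaf endomorphisms $\phi\mapsto\chi_i\cdot\phi$ are supported in $U_i$ and sum to the identity. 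By the standard theory of fine and soft sheaves on paracompact spaces, cf.\ \cite[{\S}V.1 Prop.\ 1.6 and 1.10]{Bredon_sheaf_theory}, a fine sheaf is soft and a soft sheaf is acyclic, so $H^p(M,\Gamma_c^\prime)=0$ for all $p>0$; the same argument applies over every open subset, so $\Gamma_c^\prime$ is acyclic on all open sets, which is precisely what the local-to-global step of \Fref{lem: second_la_cohom_vanishes} requires.

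The main obstacle is a mild one rather than a deep one: it is to carry the topology on the sections of $\Gamma_c^\prime$ through the otherwise standard sheaf-theoretic machinery — continuity of the glued functional, local finiteness and convergence of the partition-of-unity sums, and compatibility of the $C^\infty_M$-module structure with the restriction maps — rather than treating the sections as mere abelian groups as in \cite{Bredon_sheaf_theory}. It is worth flagging that $\Gamma_c^\prime$ is \emph{not} flasque: a distributional section on an open subset need not extend to $M$ (on $\R$, for instance, $f\mapsto\sum_{n\geq 1}f^{(n)}(1/n)$ is a continuous functional on $C^\infty_c(\R\setminus\{0\})$ with no continuous extension to $C^\infty_c(\R)$, since its order near $0$ is unbounded), so one genuinely has to argue via softness and not via extension of sections.
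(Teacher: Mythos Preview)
Your proposal is correct and follows essentially the same route as the paper: both verify the sheaf axioms via a subordinate partition of unity and the formula $\phi(s)=\sum_i\phi_i(\chi_i s)$, then establish fineness (you via the $C^\infty_M$-module structure, the paper by directly exhibiting the endomorphisms $\eta_\alpha(\lambda)(s)=\lambda(\chi_\alpha s)$, which is the same thing) and invoke the standard implication fine $\Rightarrow$ acyclic. Your remark that $\Gamma_c^\prime$ is not flasque is a pleasant addition not present in the paper.
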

	\begin{proof}
		Let $\{U_\alpha\}_{\alpha \in \mc{I}}$ be a collection of open subsets of $M$ and define $U := \bigcup_{\alpha \in \mc{I}}U_\alpha$. Let $\{\chi_\alpha\}_{\alpha \in \mc{I}}$ be a partition of unity subordinate to the open cover $\{U_\alpha\}_{\alpha \in \mc{I}}$ of $U$ \cite[Thm.\ 2.23]{Lee_smooth_mfds}. Notice for $s \in \Gamma_c(U, E)$ that $\chi_\alpha s$ is non-zero for only finitely many $\alpha \in \mc{I}$, because $\SET{\supp \chi_\alpha}_{\alpha \in \mc{I}}$ is locally finite. To see that $\Gamma_c^\prime$ satisfies the locality axiom, suppose that $\lambda \in \Gamma_c(U, E)^\prime$ satisfies $\lambda_\alpha := \restr{\lambda}{\Gamma_c(U_\alpha, E)} = 0$ for all $\alpha \in \mc{I}$. Then $\lambda(s) = \sum_{\alpha \in \mc{I}}\lambda_\alpha(\chi_\alpha s) = 0$ for any $s \in \Gamma_c(U, E)$, so $\lambda = 0$. For the gluing axiom, take $\lambda_\alpha \in \Gamma_c(U_\alpha, E)^\prime$ for all $\alpha \in \mc{I}$ and suppose for any $\alpha, \beta \in \mc{I}$ that the restrictions of $\lambda_\alpha$ and $\lambda_\beta$ to $\Gamma_c(U_\alpha \cap U_\beta, E)$ coincide whenever $U_\alpha \cap U_\beta \neq 0$. Define $\lambda \in \Gamma_c(U, E)^\prime$ by $\lambda(s) := \sum_{\alpha \in \mc{I}}\lambda_\alpha(\chi_\alpha s)$ for $s \in \Gamma_c(U, E)$. Notice that $\lambda$ does indeed define a continuous functional on the LF-space $\Gamma_c(U, E)$ because $\SET{\supp \chi_\alpha}_{\alpha \in \mc{I}}$ is locally finite. If $s \in \Gamma_c(U_\beta, E)$ for some $\beta \in \mc{I}$, then $\chi_\alpha s \in \Gamma_c(U_\alpha \cap U_\beta, E)$ and consequently $\lambda_\alpha(\chi_\alpha s) = \lambda_\beta(\chi_\alpha s)$ for any $\alpha \in \mc{I}$. Hence $\lambda(s) = \sum_\alpha \lambda_\alpha(\chi_\alpha s) = \sum_\alpha\lambda_\beta(\chi_\alpha s) = \lambda_\beta(s)$. So $\restr{\lambda}{\Gamma_c(U_\beta, E)} = \lambda_\beta$ for any $\beta \in \mc{I}$. It follows that $\Gamma_c^\prime$ is a sheaf. We show next that it is fine (cf.\ \cite[Def.\ II.3.3]{Wells_book_cplx}). Assume henceforth that $U = M$. Define for any open set $V \subseteq M$ and $\alpha \in \mc{I}$ the linear map $\eta_{\alpha} : \Gamma_c(V, E)^\prime \to \Gamma_c(V, E)^\prime$ by $\eta_\alpha(\lambda)(s) := \lambda(\chi_\alpha s)$. This defines a morphism $\eta_\alpha : \Gamma_c^\prime \to \Gamma_c^\prime$ of sheaves. Since $\sum_{\alpha} \eta_\alpha(\lambda)(s) = \sum_\alpha \lambda(\chi_\alpha s) = \lambda(s)$ for any $s \in \Gamma_c(V, E)^\prime$, the sum being finite, we have $\sum_\alpha \eta_\alpha = 1$. Additionally, $\eta_\alpha$ vanishes on the stalk of the sheaf $\Gamma_c^\prime$ at $x$ for any $x$ in the open neighborhood $M \setminus \supp \chi_\alpha$ of $M\setminus U_\alpha$. So $\Gamma_c^\prime$ is fine and therefore acyclic \cite[II.\ Prop.\ 3.5 and Thm.\ 3.11]{Wells_book_cplx}.
	\end{proof}
	
	\subsection{Unitary equivalence of projective representations}
	\noindent
	Let $G$ be a locally convex Lie group with Lie algebra $\g$.
	
	\begin{definition}
		Suppose for $k \in \{1,2\}$ that $\mcD_k$ is a complex pre-Hilbert space, and let $\olpi_k : \g \to \pu(\mcD_k)$ be a projective unitary representation of $\g$ on $\mcD_k$. We say that $\olpi_1$ and $\olpi_2$ are unitarily equivalent if there exists a unitary operator $U : \mcD_1 \to \mcD_2$ such that $\olpi_2(\xi) = \overline{U} \,\olpi_1(\xi)\,\overline{U}^{-1}$ for all $\xi \in \g$, where $\overline{U} : \mrm{P}(\mcD_1) \to \mrm{P}(\mcD_2)$ is the descent of $U$ to the projective spaces. In this case, we write $\olpi_1 \cong \olpi_2$.
	\end{definition}
	
	\noindent
	The following is the projective analogue of \cite[Prop.\ 3.4]{BasNeeb_ProjReps}:
	
	\begin{proposition}\label{prop: projective_equivalence}
		Assume that $G$ is connected. For $k \in \{1,2\}$, let $(\overline{\rho}_k, \mc{H}_{k})$ be a smooth projective unitary representation of $G$ with derived representation $\overline{d\rho_k} : \g \to \pu(\mc{H}_{k}^\infty)$ on $\mc{H}_{k}^\infty$. Then
		$$ \overline{\rho}_1 \cong \overline{\rho}_2 \quad \iff \quad \overline{d\rho}_1 \cong \overline{d\rho}_2. $$
	\end{proposition}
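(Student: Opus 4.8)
The plan is to prove both implications; the forward one is routine, and the converse carries all the weight.

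For ``$\overline{\rho}_1 \cong \overline{\rho}_2 \Rightarrow \overline{d\rho}_1 \cong \overline{d\rho}_2$'': if $U : \mc{H}_1 \to \mc{H}_2$ is a unitary with $\overline{\rho}_2(g) = \overline{U}\,\overline{\rho}_1(g)\,\overline{U}^{-1}$ for all $g \in G$, then $\overline{U}$ is a diffeomorphism of projective spaces carrying smooth rays to smooth rays, so $U$ restricts to a bijection $\mc{H}_1^\infty \to \mc{H}_2^\infty$ (using $\mrm{P}(\mc{H}_k)^\infty = \mrm{P}(\mc{H}_k^\infty)$ recalled in Section~2), and differentiating the orbit maps with the chain rule yields $\overline{d\rho}_2(\xi) = \overline{U}\,\overline{d\rho}_1(\xi)\,\overline{U}^{-1}$ for all $\xi \in \g$.

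For the converse, let $U : \mc{H}_1 \to \mc{H}_2$ be a unitary with $\overline{d\rho}_2 = \overline{U}\,\overline{d\rho}_1(\cdot)\,\overline{U}^{-1}$. I would first reduce to a non-projective problem on a common group. Let $p_k : \circled{G}_k \to G$ denote the projections of the lifts $\rho_k : \circled{G}_k \to \U(\mc{H}_k)$, and form $\widehat{G} := \circled{G}_1 \times_G \circled{G}_2$, a central $\T \times \T$-extension of $G$, connected because $G$ is; write $q_k : \widehat{G} \to \circled{G}_k$ for the projections, $p := p_1 q_1 = p_2 q_2 : \widehat{G} \to G$, $\widehat{\g}$ for the Lie algebra of $\widehat{G}$, and $\pi_\g : \widehat{\g} \to \g$ for the quotient map. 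Pulling back the lifts, $\sigma_k := \rho_k \circ q_k$ is a smooth unitary representation of $\widehat{G}$ on $\mc{H}_k$ with the same space of smooth vectors $\mc{H}_k^\infty$ (as $q_k$ is a submersion), and $\overline{d\sigma}_k = \overline{d\rho}_k \circ \pi_\g$, whence $\overline{d\sigma}_2 = \overline{U}\,\overline{d\sigma}_1(\cdot)\,\overline{U}^{-1}$. Thus for every $X \in \widehat{\g}$ the operator $U\,d\sigma_1(X)\,U^{-1} - d\sigma_2(X)$ equals a central scalar $\lambda(X)I \in i\R I$; the map $\lambda : \widehat{\g} \to i\R$ is continuous, and since it is a difference of Lie algebra homomorphisms it annihilates all brackets, so it is a continuous character, equal to $(a,b)\mapsto i(a-b)$ on the centre $\ker\pi_\g = \R^2$ (on which $d\sigma_1$ and $d\sigma_2$ act by $iaI$ and $ibI$ respectively).

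The decisive step is to integrate $\lambda$ to a smooth character $\mu : \widehat{G} \to \T$. Since $\widehat{G}$ is connected, each $g$ equals $\gamma(1)$ for a smooth path $\gamma : [0,1] \to \widehat{G}$ from the identity with logarithmic derivative $\xi \in C^\infty([0,1], \widehat{\g})$, and one sets $\mu(g) := \exp\!\big(\int_0^1 \lambda(\xi(s))\,ds\big)$. Its well-definedness follows by observing that the two curves $s \mapsto U\sigma_1(\gamma(s))U^{-1}$ and $s \mapsto \exp\!\big(\int_0^s \lambda(\xi(r))\,dr\big)\,\sigma_2(\gamma(s))$ in $\U(\mc{H}_2)$ both solve, on the dense domain $\mc{H}_2^\infty$, the linear evolution equation $\Phi'(s) = \Phi(s)\circ\big(d\sigma_2(\xi(s)) + \lambda(\xi(s))I\big)$ with $\Phi(0) = I$, and the uniqueness of such evolutions — the mechanism behind the non-projective \cite[Prop.\ 3.4]{BasNeeb_ProjReps}, which remains valid despite the unboundedness of the generators $d\sigma_k(X)$ — forces the two curves to coincide. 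This shows simultaneously that $\mu(g)$ is independent of $\gamma$, that $\mu$ is a smooth homomorphism into $\T$, and that $U\sigma_1(g)U^{-1} = \mu(g)\,\sigma_2(g)$ for all $g \in \widehat{G}$. Projecting along $\U(\mc{H}_2) \to \PU(\mc{H}_2)$ the scalar $\mu(g)$ drops out, and since $[\sigma_k(g)] = \overline{\rho}_k(p(g))$ with $p$ surjective, we conclude $\overline{\rho}_2(h) = \overline{U}\,\overline{\rho}_1(h)\,\overline{U}^{-1}$ for all $h \in G$, i.e.\ $\overline{\rho}_1 \cong \overline{\rho}_2$. The main obstacle is precisely this integration step: $\Diff_c(M)$, and hence $\widehat{G}$, is not locally exponential, so one cannot write elements as products of one-parameter subgroups and integrate $\lambda$ by hand, but must transport along arbitrary smooth paths, where the unboundedness of the $d\sigma_k(X)$ makes the required evolution-uniqueness delicate — which is exactly why leaning on \cite[Prop.\ 3.4]{BasNeeb_ProjReps} is essential.
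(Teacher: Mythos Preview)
Your argument is correct, and the analytic core --- the scalar ODE for matrix coefficients that forces $U\sigma_1(\gamma(s))U^{-1}$ and $e^{\int_0^s\lambda}\sigma_2(\gamma(s))$ to agree --- is exactly the mechanism the paper uses. Where you differ is in how you bring the two representations onto a common group. The paper first passes to the universal cover of $G$ and then invokes \cite[Cor.~7.15(i)]{Neeb_CE_inf_dim_Lie} to conclude that the two central $\T$-extensions $\circled{G}_1$ and $\circled{G}_2$ are isomorphic (since the derived central $\R$-extensions of $\g$ are), so that both lifts $\rho_1,\rho_2$ live on a single group $\circled{G}$. You instead form the fibre product $\widehat{G}=\circled{G}_1\times_G\circled{G}_2$ and pull both lifts back; this sidesteps both the passage to the universal cover and the appeal to Neeb's integrability theorem, at the mild cost of working on a $\T^2$-extension rather than a $\T$-extension and of having to note that $\widehat{G}$ is again a locally convex Lie group. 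Your route is thus more self-contained; the paper's is leaner once one is willing to import the external result. For the forward implication the paper again lifts (via \cite[Cor.~4.5]{BasNeeb_ProjReps}) to get a character $\zeta$ with $\rho_2=\zeta\cdot U\rho_1U^{-1}$ and differentiates that identity, whereas your direct chain-rule argument on projective orbit maps is a little sketchy as written --- spelling it out properly would in effect reproduce the paper's lift-then-differentiate step.
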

	\begin{proof}
		\noindent
		Passing to the universal cover of $G$, which is a Lie group by \cite[Cor.\ II.2.4]{neeb_towards_lie}, we may and do assume that $G$ is $1$-connected. Let $U : \mc{H}_{1} \to \mc{H}_{2}$ be a unitary map, and let $\overline{U} : \mrm{P}(\mc{H}_{1}) \to \mrm{P}(\mc{H}_{2})$ be its descent to the projective spaces. Assume first that $\overline{U} \overline{\rho}_1(g) \overline{U}^{-1} = \overline{\rho}_2(g)$ for all $g \in G$. Then by \cite[Cor.\ 4.5]{BasNeeb_ProjReps}, we know that $\overline{\rho}_1$ and $\overline{\rho}_2$ correspond to the same central $\T$-extension $\circled{G}$ of $G$, up to isomorphism of central extensions. Let $\circled{\g}$ denote the Lie algebra of $\circled{G}$. Let the smooth unitary $\circled{G}$-representations $\rho_1$ and $\rho_2$ be lifts of $\overline{\rho}_1$ and $\overline{\rho}_2$ respectively. Then there exists a smooth character $\zeta: \circled{G} \to \U(1)$ such that $\rho_2(\circled{g}) = \zeta(\circled{g}) U \rho_1(\circled{g}) U^{-1}$ for all $\circled{g} \in \circled{G}$. This implies in particular that $U$ maps $\mc{H}_{1}^\infty$ onto $\mc{H}_{2}^\infty$. Differentiating the preceding equation at the identity of $\circled{G}$, it also follows that $d\rho_2(\circled{\xi}) = U d\rho_1(\circled{\xi}) U^{-1} + d\zeta(\circled{\xi})I$ for all $\circled{\xi} \in \circled{\g}$, where $I$ denotes the identity on $\mc{H}_{2}^\infty$. Hence $\overline{d\rho}_2(\xi) = \overline{U} \; \overline{d\rho}_1(\xi) \overline{U}^{-1}$ for all $\xi \in \g$. So $\overline{d\rho}_1 \cong \overline{d\rho}_2$. \\
		
		\noindent
		Assume conversely that $U$ maps $\mc{H}_{1}^\infty$ onto $\mc{H}_{2}^\infty$ and that $\overline{U} \; \overline{d\rho}_1(\xi)\, \overline{U}^{-1} = \overline{d\rho}_2(\xi)$ for all $\xi \in \g$. This implies that $\overline{d\rho}_1$ and $\overline{d\rho}_2$ induce isomorphic central $\R$-extension of $\g$, up to isomorphism. Since $G$ is $1$-connected, it follows using \cite[Cor.\ 7.15(i)]{Neeb_CE_inf_dim_Lie} that $\overline{\rho}_1$ and $\overline{\rho}_2$ induce the same central $\T$-extension $\circled{G}$ of $G$, up to isomorphism. Let the smooth unitary $\circled{G}$-representations $\rho_1$ and $\rho_2$ once again be lifts of $\overline{\rho}_1$ and $\overline{\rho}_2$, respectively. There exists a continuous linear map $\lambda : \circled{\g} \to \R$ such that 
		\begin{equation}\label{eq: nearly_equivariant}
			d\rho_2(\circled{\xi}) = U d\rho_1(\circled{\xi}) U^{-1} + i\lambda(\circled{\xi})I, \qquad \forall \circled{\xi} \in \circled{\g},
		\end{equation}	
		where $I$ denotes the identity on $\mc{H}_{2}^\infty$. Let $\psi \in \mc{H}_{1}^\infty$ and $\chi \perp \psi$. Let $(\chi_k)_{k \in \N}$ be a sequence in $\mc{H}_1^\infty$ such that $\chi_k \to \chi$ in $\mc{H}_1$. Let $\gamma : \R \to \circled{G}$ be a smooth path in $\circled{G}$ with $\gamma_0 = 1$ being the identity of $\circled{G}$, and let $\overline{\gamma} : \R \to G$ be its projection to $G$. Define the functions
		\begin{align*}
			f(t) &:= \langle U\chi, \rho_2(\gamma_t)U\rho_1(\gamma_t)^{-1} \psi\rangle,\\
			f_k(t) &:= \langle U\chi_k, \rho_2(\gamma_t)U\rho_1(\gamma_t)^{-1} \psi\rangle, \qquad t \in \R.
		\end{align*}
		Then $f_k \to f$ pointwise and $f_k$ is smooth for every $k \in \N$, as $U \chi_k \in \mc{H}_2^\infty$. Let $\gamma^\prime : \R\to \circled{\g}$ be the left-logarithmic derivative of $\gamma$, defined by $\gamma^\prime_s := \restr{\frac{d}{dt}}{t=s} \gamma_s^{-1}\gamma_t$ for $s \in \R$. Let $k \in \N$. Observe using \fref{eq: nearly_equivariant} that the derivative $f_k^\prime$ of $f_k$ satisfies
		\begin{align*}
			f_k^\prime(s)
			&= \langle U\chi_k, \rho_2(\gamma_s) d\rho_2(\gamma_s^\prime) U \rho_1(\gamma_s)^{-1} \psi \rangle - \langle U\chi_k,\rho_2(\gamma_s) U d\rho_1(\gamma_s^\prime)\rho_1(\gamma_s)^{-1} \psi\rangle \\
			&= i \lambda(\gamma_s^\prime)\langle U\chi_k, \rho_2(\gamma_s) U \rho_1(\gamma_s)^{-1} \psi\rangle\\
			&= i \lambda(\gamma_s^\prime)f_k(s), \qquad \forall s \in \R.
		\end{align*}
		We see that $f_k$ satisfies the ordinary differential equation $f_k'(s) = i \lambda(\gamma_s^\prime)f_k(s)$ with initial condition $f_k(0) = \langle \chi_k, \psi\rangle$. It follows that $f_k(t) = \langle \chi_k, \psi\rangle g(t)$, where $g(t) = e^{\int_0^t i \lambda(\gamma_s') \, \mrm{d}s}$. Consequently, $f(t) = \lim_k f_k(t) = \langle \chi, \psi\rangle g(t) = 0$ for all $t \in \R$.  Hence $\langle U\chi, \rho_2(\gamma_t) U \rho_1(\gamma_t)^{-1} \psi\rangle = 0$ for every $\chi \perp \psi$ and $t \in \R$. Thus $[\rho_2(\gamma_t) U \rho_1(\gamma_t)^{-1} \psi] = [U\psi]$ for all $t \in \R$ and $\psi \in \mc{H}_{1}^\infty$, which implies that $\overline{U} \, \overline{\rho}_1(\overline{\gamma}_t^{-1}) = \overline{\rho}_2(\overline{\gamma}_t^{-1}) \, \overline{U}$ for all $t \in \R$. Since $\circled{G}$ is a path-connected principal $\T$-bundle over $G$, it follows that $\overline{U} \, \overline{\rho}_1(g) = \overline{\rho}_2(g) \, \overline{U}$ for all $g \in G$. Thus $\overline{\rho}_1 \cong \overline{\rho}_2$.	
	\end{proof}
	
	\begin{corollary}\label{cor: kernels}
		Let $\overline{\rho} : G \to \PU(\mc{H})$ be a smooth projective unitary representation with derived representation $\overline{d\rho} : \g \to \pu(\mc{H}^\infty)$. Let $H$ be a connected Lie group with Lie algebra $\h$, and let $f : H \to G$ be smooth homomorphism of Lie groups. If $T_e(f)(\h) \subseteq \ker \overline{d\rho}$, then $f(H) \subseteq \ker \overline{\rho}$.
	\end{corollary}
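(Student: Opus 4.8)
The plan is to reduce the statement to \Fref{prop: projective_equivalence} by pulling $\overline{\rho}$ back along $f$. First I would check that $\overline{\rho}\circ f\colon H\to\PU(\mc H)$ is again a smooth projective unitary representation of the connected Lie group $H$: if $[\psi]$ is a smooth ray for $\overline{\rho}$, then $h\mapsto\overline{\rho}(f(h))[\psi]$ is the composite of the smooth map $f$ with the smooth orbit map of $\overline{\rho}$, hence smooth, so the dense set of $\overline{\rho}$-smooth rays consists of $\overline{\rho}\circ f$-smooth rays. A short chain-rule computation then identifies the derived representation of $\overline{\rho}\circ f$ with $\overline{d\rho}\circ T_e(f)$, which is zero by hypothesis. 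Since the trivial representation $\mathbf{1}\colon H\to\PU(\mc H)$ is likewise smooth with vanishing derived representation, \Fref{prop: projective_equivalence} would then give $\overline{\rho}\circ f\cong\mathbf{1}$, whence $\overline{\rho}(f(h))$ is the identity of $\PU(\mc H)$ for every $h\in H$, i.e.\ $f(H)\subseteq\ker\overline{\rho}$.

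The step I expect to require care --- the main obstacle --- is that \Fref{prop: projective_equivalence} compares the two derived representations \emph{together with} their pre-Hilbert spaces of smooth vectors, and a priori the smooth vectors of $\overline{\rho}\circ f$ form a proper dense subspace of those of $\mathbf{1}$ (which is all of $\mc H$). Rather than invoking \Fref{prop: projective_equivalence} as a black box, I would therefore rerun the argument from its proof in this special case. Let $\rho\colon\circled G\to\U(\mc H)$ be the lift of $\overline{\rho}$, with $\circled G\to G$ the associated central $\T$-extension of locally convex Lie groups, and let $\mc H^\infty$ denote the dense space of smooth vectors of $\overline{\rho}$ (equivalently, of $\rho$).

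Concretely, I would fix $h\in H$, choose a smooth path $\gamma$ from $e$ to $h$ in the connected manifold $H$, and lift the smooth path $f\circ\gamma$ through the principal $\T$-bundle $\circled G\to G$ to a smooth path $\widetilde{\gamma}$ in $\circled G$ (so $\widetilde{\gamma}_1$ lies over $f(h)$). Because $f$ is a homomorphism, the left logarithmic derivative of $\widetilde{\gamma}$ projects, at each parameter $t$, into $T_e(f)(\h)\subseteq\g$, where $\overline{d\rho}$ vanishes; hence $d\rho$ applied to this logarithmic derivative equals $i\mu(t)I$ for a continuous real-valued function $\mu$. For $\psi\in\mc H^\infty$ the smooth curve $u(t):=\rho(\widetilde{\gamma}_t)\psi$ then solves the linear ODE $u'(t)=i\mu(t)u(t)$, so $\rho(\widetilde{\gamma}_1)\psi=e^{i\int_0^1\mu}\,\psi$; since $\rho(\widetilde{\gamma}_1)$ is bounded and agrees with $e^{i\int_0^1\mu}I$ on the dense subspace $\mc H^\infty$, the two agree on all of $\mc H$, and therefore $\overline{\rho}(f(h))=[\rho(\widetilde{\gamma}_1)]$ is the identity of $\PU(\mc H)$. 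As $h\in H$ was arbitrary, $f(H)\subseteq\ker\overline{\rho}$. The analytic heart is thus only a first-order linear ODE; the real work is the bookkeeping with the central extension and the smooth vectors.
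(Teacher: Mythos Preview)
Your proposal is correct and follows essentially the same approach as the paper: pull back $\overline{\rho}$ along $f$ to reduce to the case $H=G$, $f=\id_G$, and then invoke \Fref{prop: projective_equivalence}. The paper's proof is a two-line application of that proposition, whereas you correctly flag the subtlety that the smooth vectors of the pulled-back representation need not coincide with those of the trivial representation (which is all of $\mc H$), so that a black-box appeal to \Fref{prop: projective_equivalence} is not quite clean; you then supply the direct ODE argument on $\mc H^\infty$, which is precisely the relevant half of the proof of \Fref{prop: projective_equivalence} specialized to this situation. In short, same route, with the gap in the paper's terse invocation filled in.
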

	\begin{proof}
		By considering the pull-back of $\overline{\rho}$ along $f$, it suffices to consider the case where $H = G$ and $f = \id_G$. Thus, assume that $G$ is connected and that $\g \subseteq \ker \overline{d\rho}$. Then \Fref{prop: projective_equivalence} implies that $G \subseteq \ker \overline{\rho}$.
	\end{proof}
	
%	\newpage
%	\noindent
%	\textbf{Data availability:} This manuscript has no associated data.
%	\subsubsection*{Declarations}
%	
%	\noindent
%	\textbf{Conflict of interests:} The authors have no conflicts of interest to declare that are relevant to the content of this article.
%%			
%	\bibliographystyle{alpha}
%	{\small
%	\begin{bibliography}{ref}
%		
%	\end{bibliography}
%	}
	
	{\small 
}
\end{document}